\newcommand*{\mailto}[1]{\href{mailto:#1}{\nolinkurl{#1}}}
\newcommand{\arxiv}[1]{\href{http://arxiv.org/abs/#1}{arXiv:#1}}
\newcommand{\C}{{\mathbb C}}
\newcommand{\bbC}{{\mathbb{C}}}
\newcommand{\bbN}{{\mathbb{N}}}
\newcommand{\bbR}{{\mathbb{R}}}
\newcommand{\bbZ}{{\mathbb{Z}}}
\newcommand{\bsA}{{\boldsymbol{A}}}
\newcommand{\bsB}{{\boldsymbol{B}}}
\newcommand{\bsC}{{\boldsymbol{C}}}
\newcommand{\bsD}{{\boldsymbol{D}}}
\newcommand{\bsH}{{\boldsymbol{H}}}
\newcommand{\bsI}{{\boldsymbol{I}}}
\newcommand{\bsT}{{\boldsymbol{T}}}
\newcommand{\cB}{{\mathcal B}}
\newcommand{\cC}{{\mathcal C}}
\newcommand{\cF}{{\mathcal F}}
\newcommand{\cH}{{\mathcal H}}
\newcommand{\cJ}{{\mathcal J}}
\newcommand{\cK}{{\mathcal K}}
\newcommand{\beq}{\begin{equation}}
\newcommand{\enq}{\end{equation}}
\DeclareMathOperator{\esssup}{ess.sup}
\DeclareMathOperator{\dom}{dom}
\DeclareMathOperator{\tr}{tr}
\DeclareMathOperator*{\nlim}{n-lim}
\DeclareMathOperator*{\slim}{s-lim}
\DeclareMathOperator*{\wlim}{w-lim}
\DeclareMathOperator*{\sgn}{sgn}
\renewcommand{\Re}{\text{\rm Re}}
\renewcommand{\Im}{\text{\rm Im}}
\renewcommand{\ln}{\text{\rm ln}}
\newcommand{\loc}{\operatorname{loc}}
\newcommand{\locunif}{\operatorname{loc \, unif}}
\newcommand{\Lxi}{\xi_L}
\newcommand{\ind}{\operatorname{index}}
\newcommand{\no}{\notag}
\newcommand{\lb}{\label}
\newcommand{\f}{\frac}
\newcommand{\ol}{\overline}
\newcommand{\wti}{\widetilde}
\newcommand{\Oh}{O}
\newcommand{\hatt}{\widehat} 
\newcommand{\bi}{\bibitem}
\let\geq\geqslant
\let\leq\leqslant
\def\theequation{\@arabic\c@equation}
\numberwithin{equation}{section}
\newtheorem{theorem}{Theorem}[section]
\newtheorem{lemma}[theorem]{Lemma}
\newtheorem{corollary}[theorem]{Corollary}
\newtheorem{definition}[theorem]{Definition}
\newtheorem{hypothesis}[theorem]{Hypothesis}
\theoremstyle{remark}
\newtheorem{remark}[theorem]{Remark}
\begin{document}

\title[On Index Theory for Non-Fredholm Operators]{On Index Theory for Non-Fredholm Operators: A $(1+1)$-Dimensional Example}

\author[A.\ Carey]{Alan Carey}  
\address{Mathematical Sciences Institute, Australian National University, 
Kingsley St., Canberra, ACT 0200, Australia 
and School of Mathematics and Applied Statistics, University of Wollongong, NSW, Australia,  2522}  
\email{\mailto{acarey@maths.anu.edu.au}}
\urladdr{\url{http://maths.anu.edu.au/~acarey/}}
  
\author[F.\ Gesztesy]{Fritz Gesztesy}  
\address{Department of Mathematics,
University of Missouri, Columbia, MO 65211, USA}
\email{\mailto{gesztesyf@missouri.edu}}
\urladdr{\url{https://www.math.missouri.edu/people/gesztesy}}

\author[G.\ Levitina]{Galina Levitina} 
\address{School of Mathematics and Statistics, UNSW, Kensington, NSW 2052,
Australia} 
\email{\mailto{g.levitina@student.unsw.edu.au}}

\author[D.\ Potapov]{Denis Potapov}
\address{School of Mathematics and Statistics, UNSW, Kensington, NSW 2052,
Australia} 
\email{\mailto{d.potapov@unsw.edu.au}}

\author[F.\ Sukochev]{Fedor Sukochev}
\address{School of Mathematics and Statistics, UNSW, Kensington, NSW 2052,
Australia} 
\email{\mailto{f.sukochev@unsw.edu.au}}

\author[D.\ Zanin]{Dima Zanin}
\address{School of Mathematics and Statistics, UNSW, Kensington, NSW 2052,
Australia} 
\email{\mailto{d.zanin@unsw.edu.au}}


\thanks{To appear in {\it Math. Nachrichten}{\bf }.}

\date{\today}
\subjclass[2010]{Primary 47A53, 58J30; Secondary 47A10, 47A40.}
\keywords{Fredholm and Witten index, spectral shift function.}

{\scriptsize{
\begin{abstract} 
Using the general formalism of  \cite{GLMST11}, a study of index theory for non-Fredholm operators was initiated in \cite{CGPST14a}. Natural examples arise from
$(1+1)$-dimensional differential operators using the model operator $\bsD_{\bsA}$ in $L^2(\bbR^2; dt dx)$  of the type $\bsD_\bsA^{} = \f{d}{dt} + \bsA$, where $\bsA = \int^{\oplus}_{\bbR} dt \, A(t)$, and 
the family of self-adjoint operators $A(t)$ in $L^2(\bbR; dx)$ studied here
is explicitly given by 
\begin{equation*}
A(t) = - i \f{d}{dx} + \theta(t) \phi(\cdot), \quad t \in \bbR.
\end{equation*} 
Here $\phi: \bbR \to \bbR$ has to be integrable on $\bbR$ and 
$\theta: \bbR \to \bbR$ tends to zero as $t \to - \infty$ and to $1$ as $t \to + \infty$ 
(both functions are subject to additional hypotheses). In particular, $A(t)$, 
$t \in \bbR$, has asymptotes (in the norm resolvent sense)   
\begin{equation*} 
A_- = - i \f{d}{dx}, \quad A_+ = - i \f{d}{dx} + \phi(\cdot)  
\end{equation*} 
as $t \to \mp \infty$, respectively. 

The interesting feature is that $\bsD_\bsA^{}$  violates the relative trace class condition 
introduced in \cite[Hypothesis~2.1\,$(iv)$]{CGPST14a}. 
A new approach adapted to differential operators
of this kind is given here using an approximation technique. The approximants do 
fit the framework of \cite{CGPST14a} enabling the following results to be obtained.
Introducing 
$\bsH_1 = \bsD_\bsA^* \bsD_\bsA^{}$,  $\bsH_2 = \bsD_\bsA^{} \bsD_\bsA^*$, 
we recall that the resolvent regularized Witten index of $\bsD_\bsA^{}$, denoted by 
$W_r(\bsD_{\bsA}^{})$, is defined by  
\begin{equation*}
W_r(\bsD_\bsA^{}) = \lim_{\lambda \uparrow 0} (- \lambda) 
\tr_{L^2(\bbR^2; dtdx)}\big((\bsH_1 - \lambda \bsI)^{-1}
- (\bsH_2 - \lambda \bsI)^{-1}\big),         
\end{equation*} 
whenever this limit exists. In the concrete example at hand, we prove 
\begin{equation*}
W_r(\bsD_\bsA^{}) = \xi(0_+; \bsH_2, \bsH_1) = 
\xi(0; A_+, A_-) = \f{1}{2 \pi} \int_{\bbR} dx \, \phi(x).      
\end{equation*}
Here $\xi(\, \cdot \, ; S_2, S_1)$ denotes the spectral shift operator for the pair of 
self-adjoint operators $(S_2,S_1)$, and we employ the normalization, 
$\xi(\lambda; \bsH_2, \bsH_1) = 0$, $\lambda < 0$.
\end{abstract}
}}

\maketitle


{\scriptsize{\tableofcontents}}

\section{Introduction}  \lb{s1}

This paper is motivated by \cite{CGPST14a} where results on an index theory for certain non-Fredholm operators are described using the model operator formalism in \cite{GLMST11}. 
The latter paper was particularly motivated by \cite{Pu08} which, in turn, was motivated by \cite{RS95}, which relates the Fredholm index and spectral flow for operators with compact resolvent. The model operators considered there provide prototypes  for more complex situations. They arise in connection with investigations of the Maslov index, Morse theory, Floer homology, Sturm oscillation theory, etc. The 
principle aim in \cite{Pu08} and \cite{GLMST11} was to extend the compact resolvent results in \cite{RS95} to a relatively trace class perturbation approach, permitting essential spectra. 

To introduce the situation of \cite{CGPST14a}, let $\{A(t)\}_{t\in\bbR}$ be a family of self-adjoint operators in 
the complex, separable Hilbert space $\cH$, subject to the relative trace class condition.
In particular, it is assumed that self-adjoint limiting operators
\begin{equation} 
  A_+=\lim_{t\to+\infty}A(t), \quad A_-=\lim_{t\to-\infty}A(t)    \lb{1.2}
\end{equation} 
exist in $\cH$ in the norm resolvent sense
 and that $(A_+-A_-)\big(A_-^2 + I_{\cH}\big)^{-1/2}$ is trace class (for precise conditions see
 Hypothesis 2.1 of \cite{GLMST11}).
We denote by $\bsA$ the operator in 
$L^2(\bbR;\cH)$ defined by   
\begin{align}
&(\bsA f)(t) = A(t) f(t) \, \text{ for a.e.\ $t\in\bbR$,}   \no \\
& f \in \dom(\bsA) = \bigg\{g \in L^2(\bbR;\cH) \,\bigg|\,
g(t)\in \dom(A(t)) \text{ for a.e.\ } t\in\bbR;     \lb{1.1} \\
& \quad t \mapsto A(t)g(t) \text{ is (weakly) measurable;} \,  
\int_{\bbR} dt \, \|A(t) g(t)\|_{\cH}^2 < \infty\bigg\}.   \no 
\end{align}
Of course, $\bsA = \int^{\oplus}_{\bbR} dt \, A(t)$. 

Next, we introduce the model operator 
\begin{equation}
\bsD_\bsA^{} = \f{d}{dt} + \bsA,
\quad \dom(\bsD_\bsA^{})= \dom(d/dt) \cap \dom(\bsA_-),     \lb{1.5}   \\
\end{equation}
and the associated nonnegative, self-adjoint operators 
\begin{equation}
\bsH_1 = \bsD_\bsA^* \bsD_\bsA^{}, \quad \bsH_2 = \bsD_\bsA^{} \bsD_\bsA^*,   
\lb{1.6}
\end{equation}
in $L^2(\bbR;\cH)$. (Here $\bsA_-$  in $L^2(\bbR;\cH)$ represents the self-adjoint constant fiber operator defined according to \eqref{1.1}, with $A(t)$ replaced by the asymptote $A_-$.)

Assuming that $A_-$ and $A_+$ are boundedly invertible, we also recall (cf.\ \cite{GLMST11}) that
$\bsD_\bsA^{}$ is a Fredholm operator in $L^2(\bbR;\cH)$. Moreover, as shown in 
\cite{GLMST11} (and earlier in \cite{Pu08} under a simpler set of hypotheses on the family 
$A(\cdot)$), the Fredholm index of $\bsD_\bsA^{}$ may then be computed as follows,   
\begin{align}
\ind (\bsD_\bsA^{}) & = \xi(0_+; \bsH_2, \bsH_1)   
 = \xi(0; A_+, A_-).   \lb{1.10} 
\end{align}
Here $\xi(\, \cdot \, ; S_2,S_1)$ denotes the spectral shift function for 
the pair of self-adjoint operators $(S_2, S_1)$. Whenever $S_j$, $j=1,2$, are bounded from below, 
we adhere to the normalization
\begin{equation}
\xi(\lambda ; S_2,S_1) = 0 \, \text{ for } \, \lambda < \inf(\sigma(S_1) \cup \sigma(S_2)),  
\end{equation}  
in particular, $\xi(\lambda; \bsH_2, \bsH_1) = 0$, $\lambda < 0$.

The new direction developed in \cite{CGPST14a} focuses on the model operator 
$\bsD_\bsA^{}$ in $L^2(\bbR;\cH)$ whenever the latter ceases to be Fredholm. 
First, we recall the definition of the Witten index as studied 
in \cite{BGGSS87} and \cite{GS88}: 
\begin{equation}
W_r(\bsD_\bsA^{}) = \lim_{\lambda \uparrow 0} (- \lambda) 
\tr_{L^2(\bbR;\cH)}\big((\bsH_1 - \lambda \bsI_{L^2(\bbR;\cH)})^{-1} - (\bsH_2 - \lambda \bsI_{L^2(\bbR;\cH)})^{-1}\big),     \lb{1.13a}
\end{equation} 
whenever the limit exists. Here, the subscript ``r'' refers to the resolvent regularization used; other regularizations, for instance, semigroup (heat kernel) based ones, are 
possible (cf., \cite{CGPST14a}).  

If $\bsD_\bsA^{}$ is Fredholm (and of course the necessary trace class conditions in \eqref{1.13a} are satisfied), one has consistency with the Fredholm index, 
$\ind(\bsD_\bsA^{})$ of $\bsD_\bsA^{}$. In addition, under appropriate spectral assumptions  the following connection between Fredholm, respectively, Witten indices and the underlying spectral shift function applies
\begin{equation}
\mbox{Index }(\bsD_\bsA^{}) =
W_r(\bsD_\bsA^{}) = \xi(0_+; \bsH_2, \bsH_1).    \lb{1.13c}
\end{equation} 
Most importantly, $W_r(\bsD_\bsA^{})$ exhibits invariance properties under 
additive, relatively trace class perturbations (apart from some additional technical 
hypotheses). This is sometimes dubbed topological invariance of the Witten index 
in the pertinent literature (see, e.g., \cite{BGGSS87}, \cite{Ca78}, \cite{GLMST11}, \cite{GS88}, and the references therein).

Originally, index regularizations such as \eqref{1.13a} were studied in the context of 
supersymmetric quantum mechanics in the physics literature in the 1970's and 1980's, see, \cite{CGPST14a} for details.
A theory for non-Fredholm operators was initiated in \cite{BGGSS87} and \cite{GS88}, 
however, it was technically quite
a formidable problem at that time to construct a wide range of examples. This paper produces
examples for which the Witten index may be calculated explicitly in the non-Fredholm case.

The results of \cite{CGPST14a} for the
specific model operator $\bsD_\bsA^{}$ in $L^2(\bbR;\cH)$ are as follows. 
Assume that $0$ is a right 
and a left Lebesgue point of $\xi(\,\cdot\,\, ; A_+, A_-)$, we denote this by $\Lxi(0_+; A_+,A_-)$ 
and $\Lxi(0_-; A_+, A_-)$, respectively: then it is 
also a right Lebesgue point of $\xi(\,\cdot\,\, ; \bsH_2, \bsH_1)$, which we denote by 
$\Lxi(0_+; \bsH_2, \bsH_1)$. Under this right/left Lebesgue point 
assumption on $\xi(\,\cdot\,\, ; A_+, A_-)$ (and when $\bsD_\bsA^{}$ ceases to be Fredholm), 
the  principal new result of \cite{CGPST14a} then reads, 
\begin{align} 
W_r(\bsD_\bsA^{}) & = \Lxi(0_+; \bsH_2, \bsH_1)      \lb{1.15} \\ 
& = [\Lxi(0_+; A_+,A_-) + \Lxi(0_-; A_+, A_-)]/2.    \lb{1.16}
\end{align}

Now we come to the central observation. In dimension one we may choose the Clifford generator
so that the Dirac operator, densely defined on $L^2(\bbR) \otimes \bbC^2$, takes the form
\begin{equation} 
\begin{pmatrix} i\frac{d}{d x}  & 0 \\
0 & -i \frac{d}{d x} \end{pmatrix}.
\end{equation} 
In the simplest case a connection is represented by a function $\phi$ on $\mathbb R$  so that 
\begin{equation} 
\begin{pmatrix} i \frac{d}{d x} -\phi & 0 \\
0 & -i \frac{d}{d x} +\phi
\end{pmatrix}
\end{equation} 
is of the form of a ``coupled Dirac operator''. (More generally, $\phi$ can be matrix-valued.) Thus, in dimension one we may study each component separately and hence in this paper we consider model operators of the form 
$-i \frac{d}{d x} +\phi$.

Next, let $A_-$ be the flat space Dirac operator on spinor-valued functions on $\bbR^d$, $d \in \bbN$, and perturb it by an operator of multiplication by a function $\phi:\bbR^d \to \bbR$ (matrix-valued functions also provide examples) to define $A_+ = A_- + \phi$. It follows from the discussion in Remark~(c) of \cite[Ch.~4]{Si05} that, for $\phi$ of sufficiently rapid decay at $\pm\infty$, 
$(A_+-A_-)(A_-^2 + I_{\cH})^{-s/2}$ is trace class for $s>d$, but for no lesser 
value of $s$. Thus, for $d=1$, this means that with
$A_- = -i \frac{d}{d x}$, $A_+ = A_- + \phi$, one infers that $\phi{(I_{\cH}+A_-^2)^{-s/2}}$
is trace class only for $s>1$. Thus, for all dimensions $d \in \bbN$, the relative trace class perturbation assumption introduced in \cite[Hypothesis~2.1\,$(iv)$]{CGPST14a} is violated.

We show here that by replacing $\phi$ by certain pseudodifferential approximating operators, this  relatively trace class perturbation condition is restored and the results of our earlier papers on spectral shift functions (\cite{CGPST14}, \cite{CGPST14a}) are available for the approximating
operators.   The problem we then face is to take limits of our spectral shift functions
as the approximating operators converge (in an appropriate strong, resp., norm resolvent sense) 
to the original operators. We find ourselves following in the footsteps of Fredholm theory history here for we are able to control this limit in one dimension only so as to obtain an index theorem for non-Fredholm operators that is related to the classical Gohberg--Krein theory \cite{GK60}. Even in this 
special case the analysis is both subtle and involved.

The results in this paper have motivated an abstract approach to this circle of ideas in 
\cite{CGLS14} and these indicate that additional tools have to be employed in higher dimensions (cf. also \cite{CGK15}). 
 We also remark that trace formulas related to a matrix-valued extension of 
the model discussed in this paper will appear in a companion paper \cite{CGLPSZ14}.
Our main result is stated in part IV of the next section.

\section{The Strategy Employed and Statement of Results} \lb{s2} 

In this section we briefly outline the principal new strategy employed in this paper that 
permits us to circumvent the relative trace class hypotheses used in 
\cite[Hypothesis~2.1]{GLMST11} and \cite{CGPST14a}.

$\mathbf{I}$. We assume Hypothesis \ref{h3.1} and introduce in $L^2(\bbR)$, the family of 
self-adjoint operators
\begin{equation}
A(t) = - i \f{d}{dx} + \theta(t) \phi(\cdot), \quad 
\dom(A(t)) = W^{1,2}(\bbR), \quad t \in \bbR,
\end{equation} 
and 
\begin{equation}
A_- = - i \f{d}{dx}, \quad A_+ = - i \f{d}{dx} + \phi, \quad 
\dom(A_{\pm}) = W^{1,2}(\bbR),    \lb{} 
\end{equation}
as well as families of self-adjoint bounded operators 
\begin{equation}
B(t) = \theta(t) \phi, \quad B'(t) = \theta'(t) \phi, \quad t \in \bbR,
\end{equation}
such that 
\begin{equation}
A(t) = A_- + B(t), \quad t \in \bbR. 
\end{equation} 
Introduce in $L^2(\bbR^2)$,
\begin{equation}
\bsD_\bsA^{} = \f{d}{dt} + \bsA,
\quad \dom(\bsD_\bsA^{})= W^{1,2}(\bbR^2).   \lb{DA}
\end{equation}
For simplicity, we tacitly identify $L^2(\bbR^2; dt dx)$ and 
$L^2\big(\bbR; dt; L^2(\bbR;dx)\big)$, and simply abbreviate it by $L^2(\bbR^2)$. 
Then $\bsD_\bsA^{}$ is densely defined and closed in 
$L^2(\bbR^2)$ and the adjoint operator $\bsD_\bsA^*$ of 
$\bsD_\bsA^{}$ is then given by
\begin{equation}
\bsD_\bsA^*=- \f{d}{dt} + \bsA, \quad
\dom(\bsD_\bsA^*) = W^{1,2}(\bbR^2).   
\end{equation}
Next, introduce in $L^2(\bbR^2)$ the operators $\bsH_j$, $j=1,2$, by
\begin{align}
& \bsH_1 = \bsD_{\bsA}^{*} \bsD_{\bsA}^{} = - \f{\partial^2}{\partial t^2} - \f{\partial^2}{\partial x^2} 
- 2 i \theta(t) \phi(x) \f{\partial}{\partial x}   \no \\ 
& \hspace*{2.6cm} - \theta'(t) \phi(x) - i \theta(t) \phi'(x) + \theta^2(t) \phi(x)^2,   \lb{2.7a}  \\
& \bsH_2 = \bsD_{\bsA}^{} \bsD_{\bsA}^{*} = - \f{\partial^2}{\partial t^2} - \f{\partial^2}{\partial x^2} 
- 2 i \theta(t) \phi(x) \f{\partial}{\partial x}   \no \\ 
& \hspace*{2.6cm} + \theta'(t) \phi(x) - i \theta(t) \phi'(x) + \theta^2(t) \phi(x)^2,    \lb{2.8a} \\
& \dom(\bsH_1) =  \dom(\bsH_2) = W^{2,2}\big(\bbR^2\big).   \lb{2.9a}
\end{align} 
In particular,
introducing $\bsB$ and $\bsA^{\prime} = \bsB^{\prime}$ in terms of the bounded operator families 
$B(t)$, $B'(t)$, $t \in \bbR$, in analogy to \eqref{1.S}, one can decompose 
$\bsH_j$, $j=1,2$, as follows:
\begin{align}
\bsH_j &= \f{d^2}{dt^2} + \bsA^2 + (-1)^j \bsA^{\prime}    \\
&= \bsH_0 + \bsB \bsA_- + \bsA_- \bsB + \bsB^2 + (-1)^j \bsB^{\prime}, \quad j =1,2, 
\end{align}
with 
\begin{equation}
\bsA = \bsA_- + \bsB, \quad \dom(\bsA) = \dom(\bsA_-), 
\end{equation}
and $\bsA_-$  in $L^2(\bbR^2)$ representing 
the self-adjoint (constant fiber) operator defined by 
\begin{align}
&(\bsA_- f)(t) = A_- f(t) \, \text{ for a.e.\ $t\in\bbR$,}   \no \\
& f \in \dom(\bsA_-) = \bigg\{g \in L^2(\bbR^2) \,\bigg|\,
g(t, \cdot)\in W^{1,2}(\bbR) \text{ for a.e.\ } t\in\bbR,    \no \\
& \quad t \mapsto A_- g(t,\cdot) \text{ is (weakly) measurable,} \,  
\int_{\bbR} dt \, \|A_- g(t)\|_{L^2(\bbR;dx)}^2 < \infty\bigg\}.    \lb{}
\end{align} 

According to a tradition in mathematical physics, we called the model represented 
by $\bsD_\bsA^{}$ a $(1+1)$-dimensional model due to the fact that the underlying 
variables $x \in \bbR$ and $t \in \bbR$ are both one-dimensional. (In future investigations 
we intend to study $(n+1)$-dimensional models in which $A_-$ represents an $n$-dimensional 
Dirac-type operator.)  

$\mathbf{II}$. Use the approximation
\begin{equation}
A_n(t) = A_- + \theta(t) \chi_n(A_-) \phi \chi_n(A_-), \quad n \in \bbN, \; t \in \bbR, 
\end{equation}
with  
\begin{equation}
\chi_n(\nu) = \f{n}{(\nu^2 + n^2)^{1/2}}, \quad \nu \in \bbR, \; n \in \bbN, 
\end{equation} 
such that 
\begin{equation}
\slim_{n \to \infty} \chi_n(A_-) = I.     \lb{} 
\end{equation} 
Then one concludes that
\begin{align}
& A_{-,n} = A_-, \quad   
 A_{+,n} = A_- + \chi_n(A_-) \phi \chi_n(A_-), \quad n \in \bbN,   \\
& A_n'(t) = B_n'(t) = \theta'(t) \chi_n(A_-) \phi \chi_n(A_-) \in \cB_1\big(L^2(\bbR)\big), 
\quad n \in \bbN, \; t \in \bbR,  \\
& \int_{\bbR} dt \, \|A_n'(t)\|_{\cB_1(L^2(\bbR))} \leq \|\chi_n(A_-) \phi \chi_n(A_-)\|_{\cB_1(L^2(\bbR))} \, 
\|\theta'\|_{L^1(\bbR)} < \infty.     \lb{} 
\end{align}
Thus, one also obtains,
\begin{align}
& \bsH_{j,n} = \f{d^2}{dt^2} + \bsA_n^2 + (-1)^j \bsA_n^{\prime}    \no \\
& \hspace*{8mm} = \bsH_0 + \bsB_n \bsA_- + \bsA_- \bsB_n 
+ \bsB_n^2 + (-1)^j \bsB_n^{\prime}, \\
& \dom(\bsH_{j,n}) = \dom(\bsH_0) = W^{2,2}(\bbR^2),  
\quad n \in \bbN, \;  j =1,2,  \no 
\end{align}
with
\begin{equation}
\bsB_n = \chi_n(\bsA_-) \bsB \chi_n(\bsA_-), \quad \bsA_n^{\prime} =
\bsB_n^{\prime} = \chi_n(\bsA_-) \bsB^{\prime} \chi_n(\bsA_-), \quad n \in \bbN. 
\end{equation} 

$\mathbf{III}$. As a consequence of step $\mathbf{II}$, obtain the approximate 
trace formula,
\begin{equation}
\int_{[0,\infty)} \f{\xi(\lambda; \bsH_{2,n}, \bsH_{1,n}) d\lambda}{(\lambda - z)^2} 
= \f{1}{2} \int_{\bbR} \f{\xi(\nu; A_{+,n}, A_-) d\nu}{(\nu^2 -z)^{3/2}}, \quad 
n \in \bbN, \; z \in \bbC \backslash [0,\infty)      \lb{1}
\end{equation}
(employing \cite{CGPST14} or \cite{Pu08}), which implies 
\begin{equation}
\xi(\lambda; \bsH_{2,n}, \bsH_{1,n}) = \f{1}{\pi} \int_{- \lambda^{1/2}}^{\lambda^{1/2}} 
\f{\xi(\nu; A_{+,n}, A_-) d \nu}{(\lambda - \nu^2)^{1/2}} \, 
\text{ for a.e.~$\lambda > 0$, $n\in\bbN$,}    \lb{2} 
\end{equation}
via a Stieltjes inversion argument. Here $\xi(\, \cdot \, ; S_2, S_1)$ denotes the 
spectral shift operator for the pair of 
self-adjoint operators $(S_2,S_1)$, and we employed the normalization, 
$\xi(\lambda; \bsH_{2,n}, \bsH_{1,n}) = 0$, $\lambda < 0$, $n \in \bbN$.

$\mathbf{IV}$. {\bf The main results}. Now we take the limits $n \to \infty$ in \eqref{1}. We use the trace norm convergence result in Theorem \ref{t3.7} in combination with a variety of Fredholm determinant facts to control the limit $n \to \infty$ of the left- and right-hand side of \eqref{1} 
to arrive at
\begin{equation}
\int_{[0,\infty)} \f{\xi(\lambda; \bsH_2, \bsH_1) d\lambda}{(\lambda - z)^2} 
= \f{1}{2} \int_{\bbR} \f{\xi(\nu; A_+, A_-) d\nu}{(\nu^2 -z)^{3/2}}, \quad 
z \in \bbC \backslash [0,\infty).      \lb{3}
\end{equation}
Relation \eqref{3} combined with a Stieltjes inversion argument then  implies the main formula of the paper (a Pushnitski-type relation between spectral shift functions):
\begin{equation}
\xi(\lambda; \bsH_2, \bsH_1) = \f{1}{\pi} \int_{- \lambda^{1/2}}^{\lambda^{1/2}} 
\f{\xi(\nu; A_+, A_-) d \nu}{(\lambda - \nu^2)^{1/2}} \, 
\text{ for a.e.~$\lambda > 0$}    \lb{4} 
\end{equation}
(i.e., formally, the limit of \eqref{2} as $n \to \infty$). Again, we employed the normalization, $\xi(\lambda; \bsH_2, \bsH_1) = 0$, $\lambda < 0$.

As a consequence of formula \eqref{4}, the principal result obtained for this $(1+1)$-dimensional 
example reads
\begin{equation}
W_r(\bsD_\bsA^{}) = \xi(0_+; \bsH_2, \bsH_1) = 
\xi(0; A_+, A_-) = \f{1}{2 \pi} \int_{\bbR} dx \, \phi(x).      \lb{5}
\end{equation}
with the most novel point being the middle equality.
Equations \eqref{3}, \eqref{4}, and \eqref{5} represent the analog of the principal 
results in \cite{GLMST11} and \cite{CGPST14a} for the non-Fredholm model 
operator \eqref{DA}. 

$\mathbf{V}$. {\bf Outline of the exposition}. 
Section \ref{s3} provides a detailed description of $\xi(\, \cdot \,; A_+, A_-)$ and  its approximating sequence, $\xi(\, \cdot \,; A_{+,n}, A_-)$, $n \in \bbN$.
There we establish the final equality in \eqref{5} using scattering theory (which only gives the value of
$\xi(\, \cdot \,; A_+, A_-)$ up to an undetermined additive integer constant) with the precise value  
being fixed by our approximation technique.
Section \ref{s4} begins with a discussion of the example and develops the approximation 
argument for the left hand side of \eqref{3}, \eqref{4}. 
Section \ref{s5} completes the proof of  \eqref{5}
by obtaining the stronger fact that the spectral shift function for the pair $(\bsH_2, \bsH_1)$ is constant and equals that of the spectral shift function for the pair $(A_+, A_-)$, that is, for a.e.~$\lambda > 0$ and a.e.~$\nu \in \bbR$,
\begin{equation}
\xi(\lambda; \bsH_2, \bsH_1) = \xi(\nu; A_+, A_-) = \f{1}{2 \pi} \int_{\bbR} dx \, \phi(x).     \lb{6}
\end{equation}  
Given this fact, the calculation of the Witten index \eqref{5} in Section \ref{s6} is straightforward. 
Appendix \ref{sA} collects a number of results on trace formulas and modified determinants of 
fundamental importance in Sections \ref{s3} and \ref{s5}. 

\smallskip

$\mathbf{VI}$. {\bf Notation.} We briefly summarize some of the notation used in this paper: 
Let $\cH$ be a separable complex Hilbert space, $(\cdot,\cdot)_{\cH}$ the scalar product in $\cH$
(linear in the second argument), and $I_{\cH}$ the identity operator in $\cH$.

Next, if $T$ is a linear operator mapping (a subspace of) a Hilbert space into another, then 
$\dom(T)$ and $\ker(T)$ denote the domain and kernel (i.e., null space) of $T$. 
The closure of a closable operator $S$ is denoted by $\ol S$. 
The spectrum, essential spectrum, discrete spectrum, point spectrum, and resolvent set 
of a closed linear operator in a Hilbert space will be denoted by $\sigma(\cdot)$, 
$\sigma_{\rm ess}(\cdot)$, $\sigma_{\rm d}(\cdot)$, $\sigma_{\rm p}(\cdot)$, and 
$\rho(\cdot)$, respectively. 

The convergence of bounded operators in the strong operator topology (i.e., pointwise limits) will be denoted by $\slim$, similarly, norm limits of bounded operators are denoted by $\nlim$. 

The strongly right continuous family of spectral projections of a self-adjoint 
operator $S$ in $\cH$ will be denoted by $E_S(\lambda)$, $\lambda \in \bbR$. 
(In particular, $E_S(\lambda) = E_S((-\infty,\lambda])$, 
$E_S([\lambda, \infty)) = I_{\cH} - E_S((-\infty,\lambda))$, $\lambda \in \bbR$, 
$E_S((a,b)) = E_S(b_-) - E_S(a)$, $(a,b) \subseteq \bbR$, etc.) 

The Banach spaces of bounded and compact linear operators on a separable complex Hilbert 
space $\cH$ are denoted by $\cB(\cH)$ and $\cB_\infty(\cH)$, respectively; the corresponding 
$\ell^p$-based Schatten--von Neumann trace ideals (cf.\ \cite[Ch.~III]{GK69}, \cite[Ch.~1]{Si05}) 
will be denoted by $\cB_p (\cH)$, with corresponding norm denoted by  
$\|\cdot\|_{\cB_p(\cH)}$, $p \geq 1$ (and defined in terms of the $\ell^p$-norm of the singular values 
of the operator in question). Moreover, ${\det}_{\cH}(I_\cK-A)$, and $\tr_{\cH}(A)$ 
denote the standard Fredholm determinant and the corresponding trace of a trace class operator $A\in\cB_1(\cH)$. 
Similarly, ${\det}_{p,\cH}(I_\cK-B)$ represents the $p$th modified Fredholm determinant 
asociated with $B \in \cB_p(\cH)$, $p \in \bbN$, $p\geq 2$. 

Linear operators in the Hilbert space $L^2(\bbR; dt; \cH)$, in short, $L^2(\bbR; \cH)$, will be denoted by calligraphic boldface symbols of the type $\bsT$, to distinguish them from operators $T$ in $\cH$. In particular, operators denoted by 
$\bsT$ in the Hilbert space $L^2(\bbR;\cH)$ typically represent operators associated with a 
family of operators $\{T(t)\}_{t\in\bbR}$ in $\cH$, defined by
\begin{align}
&(\bsT f)(t) = T(t) f(t) \, \text{ for a.e.\ $t\in\bbR$,}    \no \\
& f \in \dom(\bsT) = \bigg\{g \in L^2(\bbR;\cH) \,\bigg|\,
g(t)\in \dom(T(t)) \text{ for a.e.\ } t\in\bbR;    \lb{1.S}  \\
& \quad t \mapsto T(t)g(t) \text{ is (weakly) measurable;} \, 
\int_{\bbR} dt \, \|T(t) g(t)\|_{\cH}^2 <  \infty\bigg\}.   \no
\end{align}
In the special case, where $\{T(t)\}$ is a family of bounded operators on $\cH$ with 
$\sup_{t\in\bbR}\|T(t)\|_{\cB(\cH)}<\infty$, the associated operator $\bsT$ is a bounded operator on $L^2(\bbR;\cH)$ with $\|\bsT\|_{\cB(L^2(\bbR;\cH))} = \sup_{t\in\bbR}\|T(t)\|_{\cB(\cH)}$.

For brevity we will abbreviate $I = I_{L^2(\bbR; dx)}$ and $\bsI = I_{L^2(\bbR; L^2(\bbR; dx))}$. Moreover, to simplify notation, we will frequently 
omit Lebesgue measure whenever possible and simply use $L^p(\bbR)$ instead of 
$L^p(\bbR; dx)$, and $L^p(\bbR^2)$ instead of $L^p(\bbR^2; dtdx)$, $p \geq 1$, etc. 

Rather than writing $M_{\psi}$ for the operator of multiplication by the (locally integrable) 
function $\psi$, we will abuse notation a bit and use the symbol $\psi$ in place of $M_{\psi}$. 

The symbol $AC_{\loc}(\bbR)$ represents locally absolutely continuous functions 
on $\bbR$. 

The open complex upper and lower half-planes are abbreviated by 
$\bbC_{\pm} = \{z\in\bbC \,|\, \Im(z) \gtrless 0\}$, respectively.

\section{The Spectral Shift Function  $\xi(\, \cdot \, ; A_+, A_-)$ And Its 
Approximating Sequence $\xi(\, \cdot \, ; A_{+,n}, A_-)$} \lb{s3}

In this section we will describe the spectral shift function  $\xi(\, \cdot \, ; A_+, A_-)$ 
and its approximating sequence, $\xi(\, \cdot \, ; A_{+,n}, A_-)$, $n \in \bbN$, in detail. 

We start with the basic assumptions used throughout this section: 

\begin{hypothesis} \lb{hB.0} 
Suppose the real-valued function $\phi$ satisfies   
\begin{equation}
\phi \in L^{\infty}(\bbR) \cap L^1(\bbR).  
\end{equation} 
\end{hypothesis}

Given Hypothesis \ref{hB.0}, we introduce the operators, 
\begin{equation}
A_- = - i \f{d}{dx}, \quad A_+ = - i \f{d}{dx} + \phi, \quad 
\dom(A_-) = \dom(A_+) = W^{1,2}(\bbR).    \lb{3.4} 
\end{equation}

Next, one writes (for fixed $z \in \bbC\backslash \bbR$), abbreviating $I = I_{L^2(\bbR)}$ for simplicity, 
\begin{align}
\begin{split} 
& (A_+ - z I)^{-1} - (A_- - z I)^{-1} =  - (A_- - z I)^{-1} \phi (A_+ - z I)^{-1}    \\
& \quad = - (A_- - z I)^{-1} \phi (A_- - z I)^{-1} 
\big[(A_- - z I) (A_+ - z I)^{-1}\big], \quad z \in \bbC \backslash \bbR. 
\end{split} 
\end{align}
Since (for $z \in \bbC \backslash \bbR$)
\begin{align}
& (A_- - z I) (A_+ - z I)^{-1}  \in \cB\big(L^2(\bbR)\big),    \\
& \big[(A_- - z I) (A_+ - z I)^{-1}\big]^{-1} = 
(A_+ - z I) (A_- - z I)^{-1} \in \cB\big(L^2(\bbR)\big), 
\end{align}
one concludes that given $p \in [1,\infty) \cup \{\infty\}$, $z \in \bbC \backslash \bbR$, 
\begin{align}
\begin{split} 
& \big[(A_+ - z I)^{-1} - (A_- - z I)^{-1}\big] \in \cB_p\big(L^2(\bbR)\big)   \\
& \quad \text{if and only if } \, 
(A_- - z I)^{-1} \phi (A_- - z I)^{-1} \in \cB_p\big(L^2(\bbR)\big).
\end{split} 
\end{align}
In particular, since $|\phi|^{1/2} \in L^2(\bbR)$ and 
$(|\cdot| - z)^{-1} \in L^2(\bbR; d\nu)$, an application of \cite[Theorem\ 4.1]{Si05} yields 
\begin{equation}
|\phi|^{1/2} (A_- - z I)^{-1} \in \cB_{2}\big(L^2(\bbR)\big), \quad 
z \in \bbC \backslash \bbR,     \lb{3.12} 
\end{equation}
and hence 
\begin{equation}
\big[(A_+ - z I)^{-1} - (A_- - z I)^{-1}\big] \in \cB_1\big(L^2(\bbR)\big), 
\quad z \in \bbC \backslash \bbR,     \lb{3.13} 
\end{equation}
upon decomposing $\phi$ into $\phi = |\phi|^{1/2} \sgn(\phi) |\phi|^{1/2}$. 

For later purpose it will be convenient to introduce the operator of multiplication by $\phi$ 
in $L^2(\bbR)$ and denote it by the symbol $B_+$. Thus, $B_+ \in \cB\big(L^2(\bbR)\big)$ and 
\begin{equation}
A_+ = A_- + B_+, \quad B_+ = \phi.    \lb{defB}
\end{equation} 

\begin{remark} \lb{r3.2} The fact \eqref{3.13} implies that the spectral shift function 
$\xi(\, \cdot \, ; A_+, A_-)$ for the pair $(A_+, A_-)$ exists and is well-defined up to an arbitrary 
additive real constant, satisfying 
\begin{equation} 
\xi(\, \cdot \, ; A_+, A_-) \in L^1\big(\bbR; (\nu^2 + 1)^{-1} d\nu\big).    \lb{3.15} 
\end{equation}  
In addition, the  trace formula, 
\begin{equation}
\tr_{L^2(\bbR; dx)}[f(A_+) - f(A_-)] = \int_{\bbR}  \xi(\nu; A_+, A_-)  d\nu \, f'(\nu),   \lb{3.16} 
\end{equation}
holds for a sufficiently wide class of functions $f$ (cf.\ \cite[Sect.\ 8.7]{Ya92}). 
Since $\sigma(A_{\pm}) = \bbR$, there is a priori no natural way to fix this open constant 
(although, possibilities to fix the constant via the unitary Cayley transforms of $A_{\pm}$ 
exist as discussed in \cite{BY93} and \cite[Sects.~8.5--8.7]{Ya92}). However, neither 
the integrability property \eqref{3.15}, nor the trace formula \eqref{3.16}, impose any restrictions 
on the arbitrary real constant inherent to the definition of $\xi(\, \cdot \, ; A_+, A_-)$. 
\hfill $\Diamond$
\end{remark}

Next, following Kato \cite[p.~30--31]{Ka80} or Yafaev \cite[p.~83--84]{Ya92}, we briefly 
describe scattering theory for the pair $(A_+, A_-)$:
One notes that $A_-$ generates translations, that is,
\begin{equation}
\big(e^{\pm i t A_-} u\big)(x) = u(x \pm t), \quad u \in L^2(\bbR),
\end{equation}
and introducing $U_+$, the unitary operator in $L^2(\bbR)$ of multiplication by 
\begin{equation}
U_+ = e^{- i \int_0^x dx' \phi(x')},     \lb{3.5} 
\end{equation}
one obtains 
\begin{equation}
e^{\pm i t A_+} = U_+ e^{\pm i t A_-} U_+^{-1},    \lb{3.5A} 
\end{equation}
and hence
\begin{align}
\big(e^{i t A_+} u\big)(x) &= e^{- i \int_0^x dx' \, \phi(x')} 
e^{i \int_0^{x+t} dx'  \phi(x')}u(x+t)    \no \\
&= e^{i \int_x^{x+t} dx' \, \phi(x')}u(x+t), \quad u \in L^2(\bbR).
\end{align}
Thus, introducing $\Omega(t; A_+, A_-) = e^{i t A_+} e^{- i t A_-}$, $t \in \bbR$, 
one obtains
\begin{equation}
(\Omega(t; A_+, A_-) u)(x) = e^{i \int_x^{x+t} dx' \, \phi(x')} u(x), 
\quad u \in L^2(\bbR), 
\end{equation}
and hence the wave operators are simply given by 
\begin{align}
\begin{split} 
(\Omega_{\pm}(A_+, A_-) u)(x)= (\slim_{t \to \pm \infty} \Omega(t; A_+, A_-) u)(x) 
= e^{i \int_{[x,\pm \infty)} dx' \, \phi(x')}u(x),& \\
u \in L^2(\bbR)&  
\end{split} 
\end{align}
(i.e., they are unitary operators in $L^2(\bbR)$ acting as operators of 
multiplication by a unimodular exponential). In addition, the unitary scattering operator 
in $L^2(\bbR)$ is then of the type 
\begin{equation}
S(A_+, A_-) = \Omega_+ (A_+, A_-)^* \Omega_- (A_+, A_-) 
= e^{- i \int_{\bbR} dx \, \phi(x)},
\end{equation}
that is, it acts as an operator of multiplication by a unimodular constant.

By general principles, $S(A_+, A_-)$ commutes with $A_-$ and hence 
decomposing
\begin{equation}
L^2(\bbR; dx) = \cF L^2(\bbR; d\nu) = \cF \int_{\bbR}^{\oplus} d\nu \, \bbC,  
\end{equation}
$\cF$ the Fourier transform, $S(A_+, A_-)$ in $L^2(\bbR)$ decomposes as 
\begin{equation}
S(A_+, A_-) = \cF \bigg(\int_{\bbR}^{\oplus} d\nu \, S(\nu; A_+, A_-) \bigg) \cF^{-1}, 
\end{equation}
where the reduced unitary scattering operator $S(\nu; A_+, A_-)$ (also known as the scattering matrix) in the one-dimensional Hilbert space $\bbC$ is just the constant (i.e., $\nu$-independent) 
phase factor,
\begin{equation}
S(\nu; A_+, A_-) = e^{- i \int_{\bbR} dx \, \phi(x)}, \quad \nu \in \bbR. 
\end{equation} 
By the Birman--Krein formula relating the determinant of the reduced scattering operator with the spectral shift function (cf.\ \cite{BK62}, \cite[Theorem\ 8.7.2]{Ya92}), one finally obtains
\begin{equation}
S(\nu; A_+, A_-) = e^{- i \int_{\bbR} dx \, \phi(x)} = e^{- 2 \pi i \wti \xi(\nu; A_+, A_-)}  
\, \text{ for a.e.\  $\nu \in \bbR$,}     \lb{3.27} 
\end{equation} 
and hence 
\begin{equation}
\wti \xi(\nu; A_+, A_-) = \f{1}{2 \pi} \int_{\bbR} dx \, \phi(x) + N(\nu) 
\, \text{ for a.e.\  $\nu \in \bbR$,}    \lb{3.28} 
\end{equation}
where $N(\cdot)$ is integer-valued. Here $\wti \xi(\, \cdot \, ; A_+,A_-)$ denotes a particularly 
normalized spectral shift function for the pair $(A_+, A_-)$, that is, a certain choice of the open 
additive (real) constant inherent to the definition of spectral shift functions (cf., also 
Remark \ref{r3.2}) has been made in $\wti \xi(\, \cdot \, ; A_+,A_-)$. Later in this section 
(cf.\ \eqref{B.21}) we will show 
that $\xi(\, \cdot \,; A_+, A_-)$ (and hence any spectral shift function 
associated with the pair $(A_+,A_-)$) is constant (and hence, continuous) on $\bbR$ and thus 
\begin{align} 
\xi(\, \cdot \,; A_+, A_-) &= \wti \xi(\, \cdot \,; A_+, A_-) + n_0    \\
&= \f{1}{2 \pi} \int_{\bbR} dx \, \phi(x) + n_1 \, \text{ for some $n_0, n_1 \in \bbZ$,}
\end{align}
is now defined up to an arbitrary additive integer (in accordance with \cite[eqs.\ (6.3), (8.6), Theorem\ 6.2]{BY93} and the paragraph following eq. (8.7), \cite[p.\ 286, 299, eq.\ (4), p.\ 297]{Ya92}). 

\begin{remark} \lb{r3.3} 
As discussed in \cite[Sects.~6--8]{BY93} and in great detail in \cite[Sects.~8.5--8.7]{Ya92}, culminating in \cite[Theorem~8.7.2]{Ya92}, the Birman--Krein formula 
\begin{equation}
S(\nu; A_+, A_-) = e^{- 2 \pi i \wti \xi(\nu; A_+, A_-)}\, \text{ for a.e.\  $\nu \in \bbR$,} 
\end{equation} 
is valid under condition \eqref{3.13} for a particular choice of the open additive constant in the 
spectral shift function $\xi(\, \cdot \,; A_+, A_-)$, leading to what we denoted by 
$\wti \xi(\, \cdot \,; A_+, A_-)$. The latter can be defined directly via the unitary Cayley transform 
of $A_{\pm}$ as described in equation (4) of \cite[Sect.~8.7]{Ya92} (see also 
\cite[Sects.~6--8]{BY93}), but also this definition fixes $\wti \xi(\, \cdot \,; A_+, A_-)$ only up 
to an integer as will be discussed in some detail in Appendix \ref{sA}. For additional discussions 
addressing the open integer in $\wti \xi(\, \cdot \,; A_+, A_-)$ we refer to \cite{Pu10}. 
\hfill $\Diamond$
\end{remark} 

Next, we apply Theorem \ref{tB.1} to the pair $(A_+, A_-)$, identifying $A_0$ 
with $A_-$, $A$ with $A_+$, and $B$ with the operator of multiplication by $\phi$, 
assuming again Hypothesis \ref{hB.0}. We refer to Appendix \ref{sA} for the definition 
and properties of modified Fredholm determinants ${\det}_{2,\cH}(I_{\cH} - A)$, 
$A \in \cB_2(\cH)$. 

For the resolvent of $A_-$ one computes 
\begin{align}
& \big((A_- - z I)^{-1} f\big)(x) = \begin{cases}
i \int_{-\infty}^x dx' \, e^{iz (x - x')} f(x'), & \Im(z) > 0,     \lb{B.9} \\
-i \int_x^{\infty} dx' e^{iz (x - x')} f(x'), & \Im(z) < 0, 
\end{cases} \\
& \hspace*{6cm} x \in \bbR, \; f \in L^2(\bbR; dx),   \no 
\end{align}
and hence the Green's function of $A_-$ is given by
\begin{equation}
(A_- - z I)^{-1}(x,x') = \begin{cases} i e^{iz (x - x')} \theta(x - x'), & \Im(z) > 0, \\
- i e^{iz (x - x')} \theta(x' - x), & \Im(z) < 0,
\end{cases} \quad x, x' \in \bbR,
\end{equation}
where 
\begin{equation} 
\theta(x) = \begin{cases} 1, & x > 0, \\ 0, & x < 0. \end{cases}
\end{equation} 
Thus, the integral kernel of $(A_- - z I)^{-1} \phi (A_- - z I)^{-1}$ is of the form
\begin{align}
& \big((A_- - z I)^{-1} \phi (A_- - z I)^{-1}\big)(x,x')    \no \\
& \quad = \int_{\bbR} dx'' \, (A_- - z I)^{-1}(x,x'') \phi(x'') (A_- - z I)^{-1}(x'',x')  \no \\
& \quad = - e^{iz(x - x')} \int_{\bbR} dx'' \, \theta(x - x'') \phi(x'') \theta(x'' - x'), 
\quad \Im(z) > 0, \; x, x' \in \bbR, 
\end{align}
and analogously for $\Im(z) < 0$. Hence, one concludes
\begin{equation}
\big((A_- - z I)^{-1} \phi (A_- - z I)^{-1}\big)(x,x) = 0, \quad 
z \in \bbC \backslash \bbR, \; x \in \bbR.    \lb{B.12} 
\end{equation}
By \eqref{3.12},
\begin{align}
& (A_- - z I)^{-1} \phi (A_- - z I)^{-1}     \\
& \quad = \big[(A_- - z I)^{-1} |\phi|^{1/2} \sgn(\phi)\big]
\big[|\phi|^{1/2} (A_- - z I)^{-1}\big] \in \cB_1\big(L^2(\bbR)\big), \quad 
z \in \bbC \backslash \bbR,   \no 
\end{align}
and thus, 
\begin{align}
\begin{split} 
\eta'(z) &= {\tr}_{L^2(\bbR)} \big((A_- - z I)^{-1} \phi (A_- - z I)^{-1}\big)    \\ 
&= \int_{\bbR} dx \, \big((A_- -z I)^{-1} \phi (A_- - z I)^{-1}\big)(x,x) = 0, 
\quad z \in \bbC \backslash \bbR,     \lb{B.13} 
\end{split} 
\end{align}
combining Examples V.2.19 and X.1.18 of \cite{Ka80}. Thus,
\begin{equation}
\eta (z) = \eta_{\pm}, \quad z \in \bbC_{\pm},
\end{equation}
for some constants $\eta_{\pm} \in \bbC$, implying 
\begin{equation}
\eta (\nu \pm i 0) = \eta_{\pm}, \quad \nu \in \bbR.    \lb{B.13a} 
\end{equation} 
In complete analogy to \eqref{B.9}--\eqref{B.13} one also obtains
\begin{align}
& \big((A_+ - z I)^{-1} f\big)(x) = \begin{cases}
i \int_{-\infty}^x dx' \, e^{iz (x - x')} e^{-i \int_{x'}^{x} dx'' \, \phi(x'')} f(x'), & \Im(z) > 0,  \\
-i \int_x^{\infty} dx' e^{iz (x - x')} e^{-i \int_{x'}^{x} dx'' \, \phi(x'')} f(x'), & \Im(z) < 0, 
\end{cases}  \no \\
& \hspace*{7.7cm} x \in \bbR, \; f \in L^2(\bbR; dx),   \\
& (A_+ - z I)^{-1}(x,x') = \begin{cases} i e^{iz (x - x')} e^{-i \int_{x'}^{x} dx'' \, \phi(x'')} \theta(x - x'), & \Im(z) > 0, \\
- i e^{iz (x - x')} e^{-i \int_{x'}^{x} dx'' \, \phi(x'')}\theta(x' - x), & \Im(z) < 0,
\end{cases}    \\
& \hspace*{9.15cm} \quad x, x' \in \bbR,  \no \\
& \big((A_+ - z I)^{-1} \phi (A_- - z I)^{-1}\big)(x,x')    \no \\
& \quad = \int_{\bbR} dx'' \, (A_+ - z I)^{-1}(x,x'') \phi(x'') (A_- - z I)^{-1}(x'',x')  \no \\
& \quad = - e^{iz(x - x')} \begin{cases} \int_{\bbR} dx'' \, e^{-i \int_{x''}^{x} dx''' \, \phi(x''')}
\theta(x - x'') \phi(x'') \theta(x'' - x'), & \Im(z) > 0, \\ 
\int_{\bbR} dx'' \, e^{-i \int_{x''}^{x} dx''' \, \phi(x''')}
\theta(x'' - x) \phi(x'') \theta(x' - x''), & \Im(z) < 0,
\end{cases}   \no \\
& \hspace*{9.4cm} x, x' \in \bbR,    
\end{align}
and hence, 
\begin{equation}
\big((A_+ - z I)^{-1} \phi (A_- - z I)^{-1}\big)(x,x) = 0, \quad 
z \in \bbC \backslash \bbR, \; x \in \bbR.  
\end{equation} 
Consequently,
\begin{align}
0 &= \tr_{L^2(\bbR)} \big((A_+ - z I)^{-1} - (A_- - z I)^{-1}\big)   \no  \\ 
&= - \tr_{L^2(\bbR)} \big((A_+ - z I)^{-1} \phi (A_- - z I)^{-1}\big)     \lb{3.30} \\ 
&= - \int_{\bbR} \f{\xi(\nu; A_+, A_-) d\nu}{(\nu - z)^2}, \quad z \in \bbC \backslash \bbR,   \no 
\end{align}
for some spectral shift function $\xi(\, \cdot \,; A_+, A_-)$ for the pair $(A_+, A_-)$ (all others 
differing from $\xi(\, \cdot \,; A_+, A_-)$ at most by a constant). Thus,
\begin{equation}
0 = \f{d}{dz} \int_{\bbR} \xi(\nu; A_+, A_-) d\nu \, \big[(\nu -z)^{-1} 
- \nu (\nu^2 + 1)^{-1}\big], \quad z \in \bbC \backslash \bbR,
\end{equation}
implies 
\begin{equation}
\int_{\bbR} \xi(\nu; A_+, A_-) d\nu \, \big[(\nu -z)^{-1} 
- \nu (\nu^2 + 1)^{-1}\big] = C, \quad \Im(z) > 0 
\end{equation}
for some constant $C \in \ol{\bbC_+}$. Equivalently, 
\begin{equation}
- \Re(z) + \int_{\bbR} \xi(\nu; A_+, A_-) d\nu \, \big[(\nu -z)^{-1} 
- \nu (\nu^2 + 1)^{-1}\big] = i \Im(C), \quad \Im(z) > 0 
\end{equation}
and hence the Stieltjes inversion formula (cf.\ \cite{AD56}) yields
\begin{equation}
\xi(\nu; A_+, A_-) = \pi^{-1} \Im(C)   \lb{B.21} 
\end{equation}
In particular, $\xi(\, \cdot \, ; A_+, A_-)$ has a constant (and hence, continuous) representative. 
Together with \eqref{3.27} and \eqref{3.28} this finally yields
\begin{equation}
\xi(\nu; A_+, A_-) = \f{1}{2 \pi} \int_{\bbR} dx \, \phi(x) + n_1, \quad \nu \in \bbR,   \lb{B.22}
\end{equation} 
for some $n_1 \in \bbZ$. The integer $n_1$ is unspecified at the moment, but the choice 
$n_1=0$ will naturally evolve as the result of an approximation procedure near the end of this section.

Next, we return to \eqref{B.12}--\eqref{B.13a} and prepare some facts that permit us to apply Theorem \ref{tB.1} to $\xi(\, \cdot \,; A_+,A_-)$. We start by noting that the integral kernel of 
$\sgn(\phi) |\phi|^{1/2} (A_- - z I)^{-1} |\phi|^{1/2}$ reads 
\begin{align}
& \sgn(\phi) |\phi|^{1/2} (A_- - z I)^{-1} |\phi|^{1/2} (x,x')    \no \\
& \quad = \pm 
i \sgn(\phi(x)) |\phi(x)|^{1/2} e^{\pm i z (x - x')} \theta(\pm(x - x')) |\phi(x')|^{1/2},  \lb{B.14} \\
& \hspace*{5.35cm} \pm \Im(z) > 0, \; x,x' \in \bbR,    \no 
\end{align}
and hence 
\begin{align}
& \big\|\sgn(\phi) |\phi|^{1/2} (A_- - z I)^{-1} |\phi|^{1/2}\big\|_{\cB_2(L^2(\bbR))}^2 
\no \\
& \quad = \pm \int_{\bbR} dx \, |\phi(x)| \int_{\mp \infty}^x dx' \, e^{\mp 2 \Im(z) (x - x')} |\phi(x')| 
\lb{B.15} \\
& \quad \leq \|\phi\|_{L^1(\bbR)}^2, \quad \pm \Im(z) > 0.     \lb{B.16} 
\end{align}
Since $e^{\mp 2 \Im(z) (x - x')} \leq 1$, $\pm \Im(z) \geq 0$, $\pm (x - x') \geq 0$, Lebesgue's 
dominated convergence theorem yields   
\begin{equation}
\lim_{z \to \pm i \infty} \big\|\sgn(\phi) |\phi|^{1/2} (A_- - z I)^{-1} 
|\phi|^{1/2}\big\|_{\cB_2(L^2(\bbR))} = 0.    \lb{B.17} 
\end{equation}

In addition, the same arguments yield 
\begin{align}
& \big\||\phi|^{1/2} (A_- - z I)^{-1} \big\|_{\cB_2(L^2(\bbR))}^2 
\no \\
& \quad = \pm \int_{\bbR} dx \, |\phi(x)| \int_{\mp \infty}^x dx' \, e^{\mp 2 \Im(z) (x - x')} 
\lb{B.15A} \\
& \quad = [\pm 2 \Im(z)]^{-1} \|\phi\|_{L^1(\bbR)}, \quad \pm \Im(z) > 0.     \lb{B.16A} 
\end{align}

In fact, equation \eqref{B.14}--\eqref{B.16}, together with their counterparts for $\Im(z) < 0$,  
yield more as they also prove the existence of the limits 
\begin{align}
& \lim_{\varepsilon \downarrow 0} 
\sgn(\phi) |\phi|^{1/2} (A_- - (\nu \pm i \varepsilon)I)^{-1} |\phi|^{1/2}     \lb{B.16B} \\
& \quad : = \sgn(\phi) |\phi|^{1/2} (A_- - (\nu \pm i 0)I)^{-1} |\phi|^{1/2}
\, \text{ in $\cB_2\big(L^2(\bbR)\big)$-norm,} \quad \nu \in \bbR,   \no 
\end{align} 
as well as,
\begin{align}
& \big\|\sgn(\phi) |\phi|^{1/2} (A_- - (\nu \pm i 0) I)^{-1} |\phi|^{1/2}\big\|_{\cB_2(L^2(\bbR))}^2 
\no \\ 
& \quad \leq \|\phi\|_{L^1(\bbR)}^2, \quad \nu \in \bbR.     \lb{B.16a} 
\end{align}

By \eqref{B.9}, the limits $\sgn(\phi) |\phi|^{1/2} (A_- - (\nu + i 0)I)^{-1} |\phi|^{1/2}$ and  
$\sgn(\phi) |\phi|^{1/2} (A_- - (\nu - i 0)I)^{-1} |\phi|^{1/2}$ differ. 

Thus, Theorem \ref{tB.1} applies, and combining \eqref{B.7} and \eqref{B.13a} yields 
for some constant $d_0 \in \bbR$, 
\begin{align}
& \xi(\nu; A_+, A_-)   \no \\
& \quad = \pi^{-1} \Im\big(\ln\big({\det}_{2, L^2(\bbR)}
\big(I + \sgn(\phi) |\phi|^{1/2} (A_- - (\nu + i 0) I)^{-1} |\phi|^{1/2}\big)\big)\big)
+ d_0     \no \\
& \hspace*{9cm} \text{for a.e.\ $\nu \in \bbR$.}      \lb{B.13b} 
\end{align}

Morever, \eqref{B.14}, \eqref{B.15}, \eqref{B.16B}, and \eqref{B.16a} once more combined 
with Lebesgue's dominated convergence theorem also prove that the map 
\begin{align}
\bbR \ni \nu \mapsto \sgn(\phi) |\phi|^{1/2} (A_- - (\nu + i 0)I)^{-1} |\phi|^{1/2} 
\in \cB_2\big(L^2(\bbR)\big)     \lb{B.18} 
\end{align} 
is continuous and uniformly bounded with respect to $\nu \in \bbR$ in the 
$\cB_2\big(L^2(\bbR)\big)$-norm. Consequently, also the map 
\begin{align}
\bbR \ni \nu \mapsto {\det}_{2,L^2(\bbR)}
\big(I + \sgn(\phi) |\phi|^{1/2} (A_- - (\nu + i 0)I)^{-1} |\phi|^{1/2}\big)    \lb{B.19} 
\end{align} 
is continuous on $\bbR$, employing \eqref{detcont}. 

Finally, before turning to approximations, we claim that 
\begin{equation} 
{\det}_{2, L^2(\bbR)}
\big(I + \sgn(\phi) |\phi|^{1/2} (A_- - (\nu \pm i 0) I)^{-1} |\phi|^{1/2}
\big) = 1, \quad \nu \in \bbR.    \lb{B.49} 
\end{equation}
Indeed, combining \eqref{B.8}, \eqref{B.13}, and  \eqref{3.30} results in 
\begin{equation}
0 = \f{d}{dz} \ln\big({\det}_{2,L^2(\bbR)}\big(I + \sgn(\phi) |\phi|^{1/2} 
(A_- - z I)^{-1} |\phi|^{1/2}\big)\big), \quad z \in \bbC \backslash \bbR. 
\end{equation}
Thus, analyticity of 
${\det}_{2,L^2(\bbR)}\big(I + \sgn(\phi) |\phi|^{1/2} (A_- - \cdot \, I)^{-1} |\phi|^{1/2}\big)$ 
on $\bbC_{\pm}$ (e.g., as a consequence of \eqref{detcont}) yields 
\begin{equation}
{\det}_{2,L^2(\bbR)}\big(I + \sgn(\phi) |\phi|^{1/2} (A_- - z I)^{-1} |\phi|^{1/2}\big) = C_{\pm}, 
\quad z \in \bbC_{\pm}  
\end{equation}
for some constants $C_{\pm} \in \bbC$. By \eqref{detcont} and \eqref{B.17}, and by applying continuity of $\sgn(\phi) |\phi|^{1/2} (A_- - (\nu \pm i \varepsilon) I)^{-1} |\phi|^{1/2}$ in 
$\cB_2\big(L^2(\bbR)\big)$-norm as $\varepsilon \downarrow 0$, $C_{\pm} = 1$, proving 
\eqref{B.49}. 

As a preparation in connection with approximations to be studied in the remainder of this 
section, we first recall the following standard convergence property for trace ideals:

\begin{lemma}\lb{l3.6}
Let $p\in[1,\infty)$ and assume that $R,R_n,T,T_n\in\cB(\cH)$, 
$n\in\bbN$, satisfy
$\slim_{n\to\infty}R_n = R$  and $\slim_{n\to \infty}T_n = T$ and that
$S,S_n\in\cB_p(\cH)$, $n\in\bbN$, satisfy 
$\lim_{n\to\infty}\|S_n-S\|_{\cB_p(\cH)}=0$.
Then $\lim_{n\to\infty}\|R_n S_n T_n^\ast - R S T^\ast\|_{\cB_p(\cH)}=0$.
\end{lemma}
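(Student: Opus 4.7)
\smallskip

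The plan is to telescope $R_n S_n T_n^\ast - R S T^\ast$ and bound each resulting piece in $\cB_p(\cH)$-norm, using (i) uniform operator-norm bounds on $\{R_n\}$ and $\{T_n\}$ coming from the Banach--Steinhaus theorem applied to the strongly convergent sequences, and (ii) the standard \emph{key lemma}: if $A_m \to A$ in the strong operator topology and $K \in \cB_p(\cH)$, then $\|A_m K - A K\|_{\cB_p(\cH)} \to 0$ as $m \to \infty$. The key lemma is itself proved by approximating $K$ in $\cB_p(\cH)$-norm by finite-rank operators $K_{(\ell)}$: the tail term is controlled by $\sup_m \|A_m - A\|_{\cB(\cH)} \|K - K_{(\ell)}\|_{\cB_p(\cH)}$, which is uniformly small in $m$, while $\|(A_m - A)K_{(\ell)}\|_{\cB_p(\cH)} \to 0$ as $m\to \infty$ because $K_{(\ell)}$ has finite rank and strong convergence is equivalent to $\cB_p$-convergence on finite-rank operators.

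The decomposition I would use is
\begin{equation*}
R_n S_n T_n^\ast - R S T^\ast = (R_n - R) S_n T_n^\ast + R (S_n - S) T_n^\ast + R S (T_n^\ast - T^\ast).
\end{equation*}
For the middle term one simply estimates
\begin{equation*}
\|R (S_n - S) T_n^\ast\|_{\cB_p(\cH)} \leq \|R\|_{\cB(\cH)} \|S_n - S\|_{\cB_p(\cH)} \sup_{k\in\bbN} \|T_k\|_{\cB(\cH)},
\end{equation*}
which tends to zero. For the third term, invariance of the $\cB_p$-norm under taking adjoints gives
\begin{equation*}
\|R S (T_n^\ast - T^\ast)\|_{\cB_p(\cH)} = \|(T_n - T) S^\ast R^\ast\|_{\cB_p(\cH)},
\end{equation*}
and since $T_n \to T$ strongly and $S^\ast R^\ast \in \cB_p(\cH)$, the key lemma makes this vanish in the limit.

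The first term requires one further split:
\begin{equation*}
\|(R_n - R) S_n T_n^\ast\|_{\cB_p(\cH)} \leq \big(\|(R_n - R) S\|_{\cB_p(\cH)} + \|R_n - R\|_{\cB(\cH)} \|S_n - S\|_{\cB_p(\cH)}\big) \sup_{k\in\bbN} \|T_k\|_{\cB(\cH)}.
\end{equation*}
The second summand vanishes by the uniform bound on $\|R_n - R\|_{\cB(\cH)}$ together with $\|S_n - S\|_{\cB_p(\cH)} \to 0$, while the first summand vanishes by a direct application of the key lemma to the fixed operator $S \in \cB_p(\cH)$ and the strongly convergent sequence $R_n \to R$. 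Combining these three estimates proves the lemma.

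The only non-routine step is the key lemma itself; I expect that to be the main (but entirely standard) technical obstacle, and once it is in hand the rest is mechanical telescoping.
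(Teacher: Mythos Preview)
Your proof is correct. The paper does not actually prove this lemma; it merely introduces it with the phrase ``we first recall the following standard convergence property for trace ideals'' and then uses it repeatedly without further justification. Your argument---telescoping, Banach--Steinhaus for uniform operator-norm bounds, and the key lemma that left multiplication by a strongly convergent sequence is $\cB_p$-continuous on a fixed $K\in\cB_p(\cH)$ (proved via finite-rank approximation)---is exactly the standard proof one would give, and there is nothing to compare against in the paper itself.
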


To set up approximations for $A_+$, we now deviate from the usual approximation 
procedure originally employed in \cite{GLMST11} and \cite{Pu08}: We introduce 
\begin{equation}\label{def_chi_n}
\chi_n(\nu) = \f{n}{(\nu^2 + n^2)^{1/2}}, \quad \nu \in \bbR, \; n \in \bbN, 
\end{equation} 
and hence obtain  
\begin{equation}
\slim_{n \to \infty} \chi_n(A_-) = I,     \lb{slim} 
\end{equation}
by an elementary application of the spectral theorem for $A_-$. The precise form 
of $\chi_n$ is of course immaterial, we just need property \eqref{slim} (and the 
Hilbert--Schmidt property \eqref{2.61} below). (One notes, however, that 
$\|\chi_n(A_-) - I\|_{\cB(L^2(\bbR))} = \sup_{\nu \in \bbR}\big|(\nu^2 + n^2)^{-1/2}n -1\big|= 1$, 
$n \in \bbN$, so convergence in the strong operator topology in \eqref{slim} is essential.) 
We recall our convention to denote the operator of multiplication by $\phi$ in $L^2(\bbR)$
by the symbol $B_+$ (cf.\ also \eqref{defB}), and introduce 
\begin{align}
& B_{+,n} =  \chi_n(A_-) B_+ \chi_n(A_-) = \chi_n(A_-) \phi \chi_n(A_-), \quad n \in \bbN, 
\lb{Bn}  \\
& A_{-,n} = A_-, \quad \dom(A_{-,n}) = \dom(A_-),  \quad n \in \bbN,     \lb{A-n} \\ 
& A_{+,n} = A_- + B_{+,n}, \quad \dom(A_{+,n}) = \dom(A_-),  \quad n \in \bbN,   \lb{A+n} 
\end{align}
and conclude 
\begin{equation}
B_{+,n} = \chi_n(A_-) \phi \chi_n(A_-) \in \cB_1\big(L^2(\bbR)\big), \quad n \in \bbN.    \lb{2.58} \\
\end{equation}
Here we used that
\begin{equation}
\big\| |\phi|^{1/2} \chi_n(A_-)\big\|_{\cB_2(L^2(\bbR))} = (2 \pi)^{-1/2}
\big\| |\phi|^{1/2}\big\|_{L^2(\bbR)} \|\chi_n\|_{L^2(\bbR; d\nu)} < \infty     \lb{2.61} 
\end{equation}
(cf.\ the proof of \cite[Theorem\ 4.1]{Si05} on p.~39). In particular, similarly to \eqref{3.13} one has 
\begin{equation}
\big[(A_{+,n} - z I)^{-1} - (A_- - z I)^{-1}\big] \in \cB_1 \big(L^2(\bbR)\big), 
\quad n \in \bbN, \; z \in \bbC \backslash \bbR,    \lb{AB1n}
\end{equation} 
In addition, we note that by 
\cite[Theorem~4.1]{Si05} (for $z \in \bbC \backslash \bbR$) and Lemma \ref{l3.6},   
\begin{align}
& \lim_{n\to\infty} \big\|(A_- - z I)^{-1/2} \chi_n(A_-) |\phi|^{1/2} 
- (A_- - z I)^{-1/2} |\phi|^{1/2}\big\|_{\cB_4(L^2(\bbR))} = 0,     \\
& \lim_{n\to\infty} \big\|\sgn(\phi) |\phi|^{1/2} (A_- - z I)^{-1/2}  \chi_n(A_-)    \no \\
& \hspace*{1cm}  - \sgn(\phi) |\phi|^{1/2} (A_- - z I)^{-1/2}\big\|_{\cB_4(L^2(\bbR))} = 0, \\
& \lim_{n \to \infty} \big\|B_{+,n} (A_- - z I)^{-1} - B_+ (A_- -z I)^{-1}\big\|_{\cB_2(L^2(\bbR))} =0, 
\lb{B2Bn} 
\end{align}
and 
\begin{align} 
\begin{split} 
& \lim_{n\to\infty} \big\|\sgn(\phi) |\phi|^{1/2} \chi_n(A_-) (A_- - z I)^{-1/2} 
\chi_n(A_-) |\phi|^{1/2}    \lb{B.23} \\
& \hspace*{1cm} - \sgn(\phi) |\phi|^{1/2} (A_- - z I)^{-1/2} |\phi|^{1/2}\big\|_{\cB_2(L^2(\bbR))} = 0, \quad z \in \bbC \backslash \bbR.  
\end{split} 
\end{align}
Moreover, employing 
\begin{align} 
& (A_- - z I)^{-1} B_{+,n} (A_- - z' I)^{-1} 
= \chi_n(A_-) (A_- - z I)^{-1} B_+ (A_- - z' I)^{-1} \chi_n(A_-),    \no \\ 
& \hspace*{8cm} n \in \bbN, \; z, z' \in \bbC \backslash \bbR,   
\end{align} 
Lemma \ref{l3.6} also implies
\begin{align}
& \lim_{n \to \infty} \big\|(A_- - z I)^{-1} B_{+,n} (A_- - z' I)^{-1} - 
(A_- - z I)^{-1} B_+ (A_- -z' I)^{-1}\big\|_{\cB_1(L^2(\bbR))}      \no \\
& \quad =0, \quad z, z' \in \bbC \backslash \bbR.     \lb{B1Bn}
\end{align}
Relations \eqref{B2Bn} and \eqref{B1Bn} will be used in Section \ref{s5}. 

An application of \eqref{B.6} to the pairs $(A_+, A_-)$ and $(A_{+,n}, A_-)$, 
taking into account \eqref{B.13}, thus yields 
\begin{align}
& \int_{\bbR} \xi(\nu; A_+, A_-) d\nu  
\bigg(\f{1}{\nu - z} - \f{1}{\nu - z_0}\bigg)    \no \\
& \quad  
= \ln\bigg(\f{{\det}_{2, L^2(\bbR)}\big(I + \sgn(\phi) |\phi|^{1/2} 
(A_- - z I)^{-1} |\phi|^{1/2}\big)}
{{\det}_{2, L^2(\bbR)}\big(I + \sgn(\phi) |\phi|^{1/2} 
(A_- - z_0 I)^{-1} |\phi|^{1/2}\big)}\bigg) = 0,       \lb{B.24} \\
& \hspace*{5.2cm} z, z_0 \in \bbC \backslash \bbR, \; \Im(z) \Im(z_0) > 0,  
\end{align}
since $\xi(\, \cdot \,; A_+, A_-)$ is constant by \eqref{B.22}, and 
\begin{align}
& \int_{\bbR} \xi(\nu; A_{+,n}, A_-) d\nu 
\bigg(\f{1}{\nu - z} - \f{1}{\nu - z_0}\bigg) = \eta_n (z) - \eta_n (z_0)   \no \\
& \quad  
+ \ln\bigg(\f{{\det}_{2, L^2(\bbR)}\big(I + \sgn(\phi) |\phi|^{1/2} 
\chi_n(A_-) (A_- - z I)^{-1} \chi_n(A_-) |\phi|^{1/2}\big)}
{{\det}_{2, L^2(\bbR)}\big(I + \sgn(\phi) |\phi|^{1/2}
\chi_n(A_-) (A_- - z_0 I)^{-1} \chi_n(A_-) |\phi|^{1/2}\big)}
\bigg),     \no \\ 
& \hspace*{5cm} n \in \bbN, \; z, z_0 \in \bbC \backslash \bbR, 
\; \Im(z) \Im(z_0) > 0.     \lb{B.25}
\end{align} 
Here
\begin{align}
\eta_n^{\prime} (z) &=  {\tr}_{L^2(\bbR)} \big((A_- - z I)^{-1} \chi_n(A_-) 
\phi \chi_n(A_-) (A_- - z I)^{-1}\big)     \\ 
& = {\tr}_{L^2(\bbR)} \big(\chi_n(A_-)(A_- - z I)^{-1}  
\phi (A_- - z I)^{-1}\chi_n(A_-)\big), 
\quad z \in \bbC \backslash \bbR, \; n \in \bbN,   \no 
\end{align}
and hence by appealing to the fact that 
$(A_- -z I)^{-1} \phi (A_- -z I)^{-1} \in \cB_1\big(L^2(\bbR)\big)$, and using 
Lemma \ref{l3.6} and \eqref{B.13}, one obtains 
\begin{equation}
\lim_{n \to \infty} \eta_n^{\prime} (z) = \eta'(z) = 0, \quad z \in \bbC \backslash \bbR. 
\end{equation}
Moreover, by \eqref{B.49} and \eqref{B.23} also 
\begin{align}
\begin{split}
& \lim_{n \to \infty} {\det}_{2, L^2(\bbR)}\big(I + \sgn(\phi) |\phi|^{1/2} 
\chi_n(A_-) (A_- - z I)^{-1} \chi_n(A_-) |\phi|^{1/2}\big)    \lb{B.29} \\
& \quad = {\det}_{2, L^2(\bbR)}\big(I + \sgn(\phi) |\phi|^{1/2} 
(A_- - z I)^{-1} |\phi|^{1/2}\big) =1, \quad z \in \bbC \backslash \bbR, 
\end{split}
\end{align}
and hence combining \eqref{B.24}--\eqref{B.29}, we obtain in passing,  
\begin{align}
& \lim_{n \to \infty} \int_{\bbR} \xi(\nu; A_{+,n}, A_-) d\nu 
\bigg(\f{1}{\nu - z} - \f{1}{\nu - z_0}\bigg)      \lb{B.30} \\
& \quad = \int_{\bbR} \xi(\nu; A_+, A_-) d\nu  
\bigg(\f{1}{\nu - z} - \f{1}{\nu - z_0}\bigg) = 0, \quad 
z, z_0 \in \bbC \backslash \bbR, \; \Im(z) \Im(z_0) > 0,   \no 
\end{align}
again, since $\xi(\, \cdot \,; A_+, A_-)$ is constant by \eqref{B.22}. 

Next, upon investigating the explicit integral kernels, one infers that the   
analysis in \eqref{B.14}--\eqref{B.19} in connection with the 
operator $\sgn(\phi) |\phi|^{1/2} (A_- - (\cdot + i 0) I)^{-1} |\phi|^{1/2}$
now also applies to 
$\sgn(\phi) |\phi|^{1/2} \chi_n(A_-) (A_- - (\nu + i 0) I)^{-1}\chi_n(A_-) |\phi|^{1/2}$, 
in particular, for each $n \in \bbN$, the map 
\begin{align}
\bbR \ni \nu \mapsto \sgn(\phi) |\phi|^{1/2} \chi_n(A_-) (A_- - (\nu + i 0)I)^{-1} 
\chi_n(A_-) |\phi|^{1/2} 
\in \cB_2\big(L^2(\bbR)\big)     \lb{B.32} 
\end{align} 
is continuous and uniformly bounded with respect to $\nu \in \bbR$ in the 
$\cB_2\big(L^2(\bbR)\big)$-norm and hence for each $n \in \bbN$, also 
\begin{align}
\bbR \ni \nu \mapsto {\det}_{2,L^2(\bbR)}
\big(I + \sgn(\phi) |\phi|^{1/2} \chi_n(A_-) (A_- - (\nu + i 0)I)^{-1} \chi_n(A_-) 
|\phi|^{1/2}\big)    \lb{B.33} 
\end{align} 
is continuous on $\bbR$. 

At this point we turn to the computation of $\eta_n (\cdot)$, $n \in \bbN$. Given the facts 
\begin{equation}
(A_-^2 + \mu^2 I)^{-1}(x,x') = \f{1}{2 \mu} e^{- \mu |x - x'|}, \quad 
\mu > 0, \; x, x' \in \bbR,   \lb{B.34} 
\end{equation}
and \eqref{2.58}, employing cyclicity of the trace yields 
\begin{align}
\eta_n(z) &= \tr_{L^2(\bbR)} \big((A_- -z I)^{-1} \chi_n(A_-)^2 \phi\big) + d_n 
\no \\
&= \pm i n^2 \int_{\bbR} dx \int_{\bbR} dx' \, e^{\pm i z (x - x')} 
\chi_{(\mp \infty,x]}(x') \f{e^{-n|x' - x|}}{2n} \phi(x) + d_n     \no \\
&= \f{\pm i n}{2(n \mp iz)} \int_{\bbR} dx \, \phi(x) + d_n, \quad \pm\Im(z) \geq 0, \; n \in \bbN, 
\lb{B.35}
\end{align}
for some integration constants $d_n \in \bbC$, $n \in \bbN$. In particular, 
$\eta_n(\cdot + i0)$, $n \in \bbN$, is continuous on $\bbR$ (as required in 
Theorem \ref{tB.1}\,$(iii)$) and 
\begin{equation}
\Im(\eta_n(\nu + i0)) = \f{1}{2} \f{n^2}{\nu^2 + n^2} \int_{\bbR} dx \, \phi(x) + \Im(d_n), 
\quad \nu \in \bbR, \; n \in \bbN.     \lb{B.35a} 
\end{equation}
Hence \eqref{B.7} applies and yields 
\begin{align} 
& \xi(\nu; A_{+,n}, A_-) = \pi^{-1} \Im\big(\ln\big({\det}_{2, L^2(\bbR)}
\big(I + \sgn(\phi) |\phi|^{1/2} \chi_n(A_-) (A_- - (\nu + i 0) I)^{-1}    \no \\ 
& \quad \times \chi_n(A_-) |\phi|^{1/2}
\big)\big)\big) + \pi^{-1} \Im(\eta_n(\nu + i0)) + c_n  \,     
\text{ for a.e.\ } \nu \in \bbR, \; n \in \bbN,     \lb{B.36} 
\end{align} 
for some constants $c_n \in \bbR$, $n\in \bbN$. Because of \eqref{2.58}, 
$c_n$, $n \in \bbN$, are uniquely determined via the requirement  
\begin{equation}
\xi(\, \cdot \,; A_{+,n}, A_-) \in L^1(\bbR; d\nu), \quad n \in \bbN.     \lb{B.37}
\end{equation}

To study the asymptotic behavior of $\xi(\, \cdot \,; A_{+,n}, A_-)$, $n \in \bbN$, as 
$|\nu| \to \infty$, one observes that the integral kernel of 
$\sgn(\phi) |\phi|^{1/2} (A_- - z I)^{-1} \chi_n(A_-)^2 |\phi|^{1/2}$ is given by
\begin{align}
& \sgn(\phi) |\phi|^{1/2} (A_- - z I)^{-1} \chi_n(A_-)^2 |\phi|^{1/2}(x,x')     \no \\
& \quad = \pm (n/2) i \sgn(\phi(x)) |\phi(x)|^{1/2} \int_{\bbR} dx'' \, 
e^{\pm i z (x - x'')} \chi_{(\mp \infty,x]}(x'') e^{-n |x'' - x'|} |\phi(x')|^{1/2},    \no \\
& \hspace*{6.2cm} \pm \Im(z) \geq 0, \; x, x' \in \bbR, \; n \in \bbN.      \lb{B.38}
\end{align}
Here we employed \eqref{B.34} once again. 
With $z = \nu + i 0$, $\nu \in \bbR$, one thus computes
\begin{align}
& \big\|\sgn(\phi) |\phi|^{1/2} (A_- - (\nu + i0) I)^{-1} \chi_n(A_-)^2 
|\phi|^{1/2}\big\|_{\cB_2(L^2(\bbR))}^2     \no \\
& \quad = \f{n^2}{4} \int_{\bbR} dx \, |\phi(x)| \int_{\bbR} dx' \, |\phi(x')|
\bigg|\int_{-\infty}^x dx'' \, e^{-i \nu x''} e^{-n |x'' - x'|}\bigg|^2    \no \\ 
& \quad = \f{n^2}{4} \int_{\bbR} dx \, |\phi(x)| \bigg\{\int_{-\infty}^x dx' \, |\phi(x')| 
\bigg| \int_{-\infty}^{x'} dx'' \, e^{-i \nu x''} e^{-n(x' - x'')}    \no \\
& \hspace*{6cm} + \int_{x'}^x dx'' \, e^{-i \nu x''} e^{-n(x'' - x')}\bigg|^2     \no \\
& \hspace*{3.45cm} + \int_x^{\infty} dx' \, |\phi(x')| \bigg|\int_{-\infty}^x dx'' \, 
e^{-i \nu x''} e^{-n(x' - x'')}\bigg|^2\bigg\}     \no \\
& \quad \leq \f{5}{2} \f{n^2}{\nu^2 + n^2} \|\phi\|_{L^1(\bbR)}^2, \quad 
\nu \in \bbR, \; n \in \bbN.     \lb{B.39} 
\end{align}
Thus,
\begin{align}
& \ln\big({\det}_{2, L^2(\bbR)}
\big(I + \sgn(\phi) |\phi|^{1/2} \chi_n(A_-) (A_- - (\nu + i 0) I)^{-1} 
\chi_n(A_-) |\phi|^{1/2}\big)\big)    \no \\
& \quad \underset{|\nu| \to \infty}{=} \Oh\big(|\nu|^{-2}\big), \quad n \in \bbN,   \lb{B.41}
\end{align}
employing that for $\zeta \in \bbC$, with $|\zeta|$ sufficiently small, 
\begin{equation}
\ln({\det}_{2,\cH}(I_{\cH} + \zeta T)) = \sum_{m=2}^{\infty} \f{(-1)^{m+1}}{m} 
\zeta^m \tr_{\cH}(T^m), \quad T \in \cB_2(\cH)  
\end{equation}
(cf.\ \cite[p.~76]{Si05}). Combining \eqref{B.35a}, \eqref{B.36}, and \eqref{B.41} then yields 
\begin{equation}
\xi(\nu; A_{+,n},A_-) \underset{|\nu| \to \infty}{=} \Oh(|\nu|^{-2}) + \pi^{-1} \Im(d_n) + c_n , 
\quad n \in \bbN, 
\end{equation}
implying 
\begin{equation}
\pi^{-1} \Im(d_n) + c_n = 0, \quad n \in \bbN,    \lb{B.44} 
\end{equation} 
because of the integrability condition \eqref{B.37}. Thus, combining \eqref{B.35a}, 
\eqref{B.36}, and \eqref{B.44}, one obtains 
\begin{align} 
& \xi(\nu; A_{+,n}, A_-)  = \pi^{-1} \Im\big(\ln\big({\det}_{2, L^2(\bbR)}
\big(I + \sgn(\phi) |\phi|^{1/2} \chi_n(A_-) (A_- - (\nu + i 0) I)^{-1}    \no \\ 
& \quad \times \chi_n(A_-) |\phi|^{1/2}
\big)\big)\big) + \f{1}{2 \pi} \f{n^2}{\nu^2 + n^2} \int_{\bbR} dx \, \phi(x)  \,     
\text{ for a.e.\ } \nu \in \bbR, \; n \in \bbN.     \lb{B.45} 
\end{align} 

Next, we study the limit $n\to\infty$ of $\xi(\, \cdot \, ; A_{+,n}, A_-)$.

\begin{lemma} \lb{lB.6}
Assume Hypothesis \ref{hB.0}. Then for each $\nu \in \bbR$,
\begin{align}
\begin{split}
& \lim_{n \to \infty} \big\|\sgn(\phi) |\phi|^{1/2} \chi_n(A_-) (A_- - (\nu + i 0) I)^{-1} 
\chi_n(A_-)|\phi|^{1/2}     \lb{B.88} \\
& \hspace*{1cm} - \sgn(\phi) |\phi|^{1/2} (A_- - (\nu + i 0) I)^{-1} |\phi|^{1/2}
\big\|_{\cB_2(L^2(\bbR))} = 0, 
\end{split} 
\end{align}
and 
\begin{equation}
\lim_{n \to \infty} \xi(\nu; A_{+,n}, A_-) = \f{1}{2 \pi} \int_{\bbR} dx \, \phi(x), 
\quad \nu \in \bbR. 
\end{equation}
\end{lemma}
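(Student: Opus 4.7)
The plan is to deduce both claims from a direct integral kernel analysis combined with the continuity of modified Fredholm determinants. For the Hilbert--Schmidt convergence \eqref{B.88}, I would compare kernels explicitly. By \eqref{B.14}, the target operator $S(\nu):=\sgn(\phi)|\phi|^{1/2}(A_- - (\nu+i0)I)^{-1}|\phi|^{1/2}$ has kernel
\begin{equation*}
K(x,x') = i\,\sgn(\phi(x))|\phi(x)|^{1/2}\,e^{i\nu(x-x')}\theta(x-x')\,|\phi(x')|^{1/2},
\end{equation*}
while by \eqref{B.38} together with \eqref{B.34} (noting that $\chi_n(A_-)^2 = n^2(A_-^2+n^2)^{-1}$ has kernel $(n/2)e^{-n|x-x'|}$), the approximant
\begin{equation*}
S_n(\nu) := \sgn(\phi)|\phi|^{1/2}\chi_n(A_-)(A_- - (\nu+i0)I)^{-1}\chi_n(A_-)|\phi|^{1/2}
\end{equation*}
has kernel $K_n(x,x') = i\,\sgn(\phi(x))|\phi(x)|^{1/2}\,\psi_n(x,x')\,|\phi(x')|^{1/2}$ with $\psi_n(x,x') := \tfrac{n}{2}\int_{-\infty}^{x}e^{i\nu(x-t)}e^{-n|t-x'|}\,dt$.

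Next, I would verify pointwise that $\psi_n(x,x')\to e^{i\nu(x-x')}\theta(x-x')$ on the co-null set $\{x\neq x'\}$ by a short computation: when $x>x'$, splitting the integral at $t=x'$ produces
\begin{equation*}
\psi_n(x,x') = \tfrac{n^2}{n^2+\nu^2}\,e^{i\nu(x-x')} - \tfrac{n}{2(n+i\nu)}\,e^{-n(x-x')} \;\longrightarrow\; e^{i\nu(x-x')},
\end{equation*}
and when $x<x'$ only the region $t\le x < x'$ contributes, yielding $\psi_n(x,x')=\tfrac{n}{2(n-i\nu)}\,e^{-n(x'-x)}\to 0$. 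The uniform bound $|\psi_n(x,x')|\le \tfrac{n}{2}\int_{\mathbb{R}}e^{-n|t-x'|}\,dt = 1$ delivers the domination $|K_n(x,x')-K(x,x')|^2 \le 4|\phi(x)||\phi(x')|$, which lies in $L^1(\mathbb{R}^2)$ by Hypothesis \ref{hB.0}. Lebesgue's dominated convergence theorem applied to $\|K_n - K\|_{L^2(\mathbb{R}^2)}^2 = \|S_n(\nu) - S(\nu)\|_{\cB_2(L^2(\mathbb{R}))}^2$ then yields \eqref{B.88}.

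For the second conclusion I would start from the explicit identity \eqref{B.45}. The rational term $\tfrac{1}{2\pi}\tfrac{n^2}{\nu^2+n^2}\int_{\mathbb{R}}\phi\,dx$ converges trivially to $\tfrac{1}{2\pi}\int_{\mathbb{R}}\phi(x)\,dx$. For the remaining $\Im(\ln({\det}_{2,L^2(\mathbb{R})}(I+S_n(\nu))))$ term, the Hilbert--Schmidt convergence \eqref{B.88}, combined with the continuity of the modified Fredholm determinant on $\cB_2$ (cf.\ \eqref{detcont}) and the identity \eqref{B.49}, shows that ${\det}_{2,L^2(\mathbb{R})}(I+S_n(\nu))\to 1$. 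The main obstacle is to verify that the specific determination of $\Im(\ln)$ in \eqref{B.45}, fixed by the integrability normalization leading to \eqref{B.44}, converges to $0$ rather than to some $2\pi k$ with $k\neq 0$. I would handle this by combining the continuity in $\nu$ of $\nu\mapsto{\det}_{2,L^2(\mathbb{R})}(I+S_n(\nu))$ provided by \eqref{B.33} with the asymptotic behavior ${\det}_{2,L^2(\mathbb{R})}(I+S_n(\nu))\to 1$ as $|\nu|\to\infty$ (derivable from the estimate \eqref{B.41}); these two facts pin the branch of $\Im(\ln)$ to the principal one for each $n$, and the Hilbert--Schmidt convergence then forces $\Im(\ln({\det}_{2,L^2(\mathbb{R})}(I+S_n(\nu))))\to 0$ pointwise in $\nu$, completing the proof.
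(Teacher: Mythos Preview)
Your proof is correct and follows essentially the same route as the paper's. For \eqref{B.88} the paper also computes the Hilbert--Schmidt norm squared via the explicit kernels, splitting the $x'$-integration at $x'=x$ and obtaining precisely your expressions for $\psi_n$ in each region (cf.\ \eqref{B.91}); your framing via pointwise kernel convergence plus the uniform bound $|\psi_n|\le 1$ and dominated convergence is a slightly cleaner packaging of the same computation. For the second claim the paper likewise passes to the limit in \eqref{B.45} using \eqref{B.88}, \eqref{detcont}, and \eqref{B.49}, though it does not pause over the branch of $\Im(\ln)$ as you do; your remark that the branch is pinned down by the asymptotics \eqref{B.41} together with the integrability normalization \eqref{B.44} is a welcome clarification of a point the paper leaves implicit.
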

\begin{proof}
Employing the integral kernels \eqref{B.14} and \eqref{B.38} for $z = \nu + i0$, one 
estimates for each fixed $\nu \in \bbR$, 
\begin{align}
& \big\|\sgn(\phi) |\phi|^{1/2} \chi_n(A_-) (A_- - (\nu + i 0) I)^{-1} 
\chi_n(A_-)|\phi|^{1/2}     \\
& \;\; - \sgn(\phi) |\phi|^{1/2} (A_- - (\nu + i 0) I)^{-1} |\phi|^{1/2}
\big\|_{\cB_2(L^2(\bbR))}^2     \no \\
& \quad = \int_{\bbR} dx \, |\phi(x)| \int_{\bbR} dx' \, |\phi(x')| \bigg|\f{i n}{2} 
\int_{-\infty}^x dx'' \, e^{i \nu (x - x'')} e^{-n|x'' - x'|}    \no \\
& \hspace*{4.7cm} - i e^{i \nu (x - x')} \theta(x-x')\bigg|^2    \no \\
& \quad = \int_{\bbR} dx \, |\phi(x)| \int_{-\infty}^x dx' \, |\phi(x')| \bigg|\f{n}{2} 
\int_{-\infty}^x dx'' \, e^{- i \nu x''} e^{-n|x'' - x'|} - e^{- i \nu x'}\bigg|^2   \no \\
& \qquad + \int_{\bbR} dx \, |\phi(x)| \int_x^{\infty} dx' \, |\phi(x')| \bigg|\f{n}{2} 
\int_{-\infty}^x dx'' \, e^{- i \nu x''} e^{-n|x'' - x'|}\bigg|^2   \no \\
& \quad = \int_{\bbR} dx \, |\phi(x)| \int_{-\infty}^x dx' \, |\phi(x')| \bigg|\f{\nu^2}{n^2 + \nu^2} 
+ \f{n(n-i \nu)}{2(n^2 + \nu^2)} e^{- i \nu (x-x')} e^{- n(x-x')}\bigg|^2     \no \\
& \qquad + \int_{\bbR} dx \, |\phi(x)| \int_x^{\infty} dx' \, |\phi(x')| 
\f{n^2}{4(n^2 + \nu^2)} e^{- 2 n (x' - x)} \underset{n \to \infty}{\longrightarrow} 0,  \lb{B.91} 
\end{align}
proving \eqref{B.88}. Moreover, combining \eqref{B.49}, \eqref{B.35a}, \eqref{B.36}, \eqref{B.44}, 
and \eqref{B.88} yields   
\begin{align}
& \lim_{n \to \infty} \xi(\nu; A_{+,n}, A_-)    \no \\
& \quad = \pi^{-1} \Im\big(\ln\big({\det}_{2, L^2(\bbR)}
\big(I + \sgn(\phi) |\phi|^{1/2} (A_- - (\nu + i 0) I)^{-1} |\phi|^{1/2}
\big)\big)\big)    \no \\
& \qquad + \f{1}{2 \pi} \int_{\bbR} dx \, \phi(x)    \no \\
& \quad = \f{1}{2 \pi} \int_{\bbR} dx \, \phi(x), \quad \nu \in \bbR.    \lb{B.96} 
\end{align}
\end{proof}

To proceed, we need one additional resolvent approximation result:

\begin{lemma} \lb{lB.7}
Assume Hypothesis \ref{hB.0}. Then 
\begin{equation}
\lim_{n \to \infty} \big\|(A_{+,n} - z I)^{-1} - (A_+ - z I)^{-1}\big\|_{\cB_1(L^2(\bbR))} = 0,   
\quad z \in \bbC \backslash \bbR.      \lb{limB-1}
\end{equation} 
\end{lemma}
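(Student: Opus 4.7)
My plan is to apply the second resolvent identity,
\begin{equation*}
(A_{+,n} - zI)^{-1} - (A_+ - zI)^{-1} = (A_+ - zI)^{-1}(B_+ - B_{+,n})(A_{+,n} - zI)^{-1},
\end{equation*}
and decompose $B_+ - B_{+,n} = (I - \chi_n(A_-))\phi + \chi_n(A_-)\phi(I - \chi_n(A_-))$, then factor $\phi = |\phi|^{1/2}\sgn(\phi)|\phi|^{1/2}$ so that each of the two resulting summands is a product of operators meant to be Hilbert--Schmidt (hence their product is trace class). The summands take the form $X_n Y_n$ with
\begin{equation*}
X_n = (A_+ - zI)^{-1}(I - \chi_n(A_-))|\phi|^{1/2}, \qquad Y_n = \sgn(\phi)|\phi|^{1/2}(A_{+,n} - zI)^{-1},
\end{equation*}
and, for the second summand, $\wti X_n = (A_+ - zI)^{-1}\chi_n(A_-)|\phi|^{1/2}$, $\wti Y_n = \sgn(\phi)|\phi|^{1/2}(I - \chi_n(A_-))(A_{+,n} - zI)^{-1}$. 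Starting from $(A_- - zI)^{-1}|\phi|^{1/2} \in \cB_2\big(L^2(\bbR)\big)$ (see \eqref{3.12}), a single application of the second resolvent identity combined with the uniform bound $\|B_{+,n}\|_{\cB(L^2(\bbR))} \leq \|\phi\|_{L^\infty(\bbR)}$ shows that each factor appearing as a ``bulk'' multiplier ($Y_n$ and $\wti X_n$) is uniformly bounded in $\cB_2\big(L^2(\bbR)\big)$. It therefore suffices to prove that the two ``small'' factors, namely $X_n$ and $\wti Y_n$, tend to zero in $\cB_2\big(L^2(\bbR)\big)$.

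To handle each small factor, I would apply the second resolvent identity once more to replace the $A_+$- (respectively $A_{+,n}$-) resolvent by $(A_- - zI)^{-1}$ plus a correction term. The point is that $(A_- - zI)^{-1}$ commutes with $\chi_n(A_-)$, so the leading piece becomes $(I - \chi_n(A_-))(A_- - zI)^{-1}|\phi|^{1/2}$, a strongly-null multiplier applied to the fixed Hilbert--Schmidt operator $(A_- - zI)^{-1}|\phi|^{1/2}$ (or its adjoint), which Lemma~\ref{l3.6} drives to zero in $\cB_2$. The correction piece in each case contains either a second factor of $(I - \chi_n(A_-))$ or the product $\chi_n(A_-)(I - \chi_n(A_-))$; the latter tends to zero strongly by the spectral theorem (as $\chi_n(\nu)(1 - \chi_n(\nu)) \to 0$ pointwise and is bounded by $1/4$). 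Sandwiched between the fixed $\cB_2$ operator $|\phi|^{1/2}(A_- - zI)^{-1}$ and a uniformly norm-bounded operator (of the form $\phi\chi_n(A_-)(A_{+,n} - zI)^{-1}$, etc.), a further application of Lemma~\ref{l3.6} delivers $\cB_2$-convergence of the correction to zero as well.

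The main technical subtlety is the noncommutativity of $\chi_n(A_-)$ with $(A_+ - zI)^{-1}$ and $(A_{+,n} - zI)^{-1}$, which is precisely why a single direct factorization fails: one cannot simply move $(I - \chi_n(A_-))$ next to a Hilbert--Schmidt factor. The expansion against $(A_- - zI)^{-1}$ circumvents this by isolating a leading term in which the free resolvent carries all the Hilbert--Schmidt content and in which everything commutes with the spectral projector $\chi_n(A_-)$; the residual term, quadratic in $(I - \chi_n(A_-))$, is then controlled without any need to invoke strong resolvent convergence $A_{+,n} \to A_+$. All interchange-of-limit steps fit exactly into the hypotheses of Lemma~\ref{l3.6}, which is the only convergence tool required.
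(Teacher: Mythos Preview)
Your argument is correct, and it takes a genuinely different route from the paper's proof. The paper telescopes through $A_-$, writing
\[
(A_{+,n}-zI)^{-1}-(A_+-zI)^{-1}=\big[(A_{+,n}-zI)^{-1}-(A_--zI)^{-1}\big]-\big[(A_+-zI)^{-1}-(A_--zI)^{-1}\big],
\]
and then expresses each bracket as a bounded operator times the single fixed trace class kernel $(A_--zI)^{-1}\phi(A_--zI)^{-1}$ sandwiched by $\chi_n(A_-)$'s; Lemma~\ref{l3.6} is applied once, but the paper must first establish \emph{strong resolvent convergence} $(A_{+,n}-zI)^{-1}\to(A_+-zI)^{-1}$ to control the remaining bounded factor $I-\chi_n(A_-)\phi\chi_n(A_-)(A_{+,n}-zI)^{-1}$.

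Your approach instead works directly with the resolvent identity between $A_+$ and $A_{+,n}$, splits $B_+-B_{+,n}$ algebraically, and uses the $\cB_2\times\cB_2$ factorization via $|\phi|^{1/2}$. The second-order expansion against $(A_--zI)^{-1}$ always places $(I-\chi_n(A_-))$ (or $\chi_n(A_-)(I-\chi_n(A_-))$) adjacent to a fixed Hilbert--Schmidt operator built from the free resolvent, with which it commutes; only uniform operator-norm bounds on $(A_{+,n}-zI)^{-1}$ are needed, never its strong convergence. This buys you a self-contained argument that dispenses with the separate strong resolvent convergence step, at the price of somewhat more bookkeeping. The paper's route is shorter once that step is in hand, since it needs only a single $\cB_1$ object rather than four $\cB_2$ factors.
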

\begin{proof}
One writes
\begin{align}
& (A_{+,n} - z I)^{-1} - (A_+ - z I)^{-1} = \big[(A_{+,n} - z I)^{-1} - (A_- - z I)^{-1}\big]    \no \\
& \qquad - \big[(A_+ - z I)^{-1} - (A_- - z I)^{-1}\big]   \no \\
& \quad = - \chi_n(A_-) (A_- - z I)^{-1} \phi  (A_- - z I)^{-1} \chi_n(A_-) 
\big[(A_- - z I)  (A_{+,n} - z I)^{-1}\big]   \no \\
& \qquad +  (A_- - z I)^{-1} \phi  (A_- - z I)^{-1}
\big[ (A_- - z I)  (A_+ - z I)^{-1}\big]    \no \\
& \quad = - \chi_n(A_-) (A_- - z I)^{-1} \phi  (A_- - z I)^{-1} \chi_n(A_-)    \no \\
& \qquad \times \big[I - \chi_n(A_-) \phi \chi_n(A_-) (A_{+,n} - z I)^{-1}\big]   \no \\
& \qquad +  (A_- - z I)^{-1} \phi  (A_- - z I)^{-1}
\big[I - \phi (A_+ - z I)^{-1}\big], \quad z \in \bbC \backslash \bbR.  
\end{align}
Thus, relying on Lemma \ref{l3.6} and  \eqref{slim} once again, it suffices to prove that 
\begin{equation}
\slim_{n \to \infty} (A_{+,n} - z I)^{-1} = (A_+ - z I)^{-1}, \quad z \in \bbC \backslash \bbR, 
\end{equation}
but this immediately follows from
\begin{align}
& (A_{+,n} - z I)^{-1} = \big[I + (A_- - z I)^{-1} \chi_n(A_-) \phi \chi_n(A_-)\big]^{-1} (A_- - z I)^{-1}, \\
& (A_+ - z I)^{-1} = \big[I + (A_- - z I)^{-1} \phi\big]^{-1} (A_- - z I)^{-1}, 
\end{align}
employing the fact that strong convergence for a sequence of bounded operators is 
equivalent to strong resolvent convergence, initially, for $|\Im(z)|$ suffficiently large, and subsequently, for all $z \in \bbC \backslash \bbR$ by analytic continuation with respect to $z$. 
\end{proof} 

Finally, going beyond the approximation $A_{+,n}$ of $A_+$, we now introduce the 
following path $\{A_+(s)\}_{s \in [0,1]}$, where 
\begin{align}
& A_+(s) = A_- + \wti \chi_s (A_-) \phi \wti \chi_s (A_-), 
\quad \dom(A_+(s)) = \dom(A_-),   \quad s \in [0,1],    \lb{B.102} \\
& \wti \chi_s (\nu) = \big[(1-s) \nu^2 + 1\big]^{-1/2}, \quad \nu \in \bbR, \; s \in [0,1],
\end{align}
in particular, 
\begin{equation} 
A_+(0) = A_{+,1} \text{ (cf.\ \eqref{A+n} with $n=1$) and } \, A_+(1) = A_+. 
\end{equation} 
Moreover, in complete analogy to \eqref{limB-1}, the family 
$A_+(s)$ depends continuously on $s \in [0,1]$ with respect to the metric 
\begin{equation}
d(A,A') = \big\|(A - i I)^{-1} - (A' - i I)^{-1}\big\|_{\cB_1(L^2(\bbR))}    \lb{B.105}
\end{equation}
for $A, A'$ in the set of self-adjoint operators which are resolvent comparable with 
respect to $A_-$ (equivalently, $A_+$), that is, $A, A'$ satisfy 
for some (and hence for all) $\zeta \in \bbC \backslash \bbR$, 
\begin{equation}
\big[(A - \zeta I)^{-1} - (A_- - \zeta I)^{-1}\big], 
\big[(A' - \zeta I)^{-1} - (A_- - \zeta I)^{-1}\big]  \in \cB_1\big(L^2(\bbR)\big).  
\end{equation}
Thus, the hypotheses of \cite[Lemma~8.7.5]{Ya92} are satisfied and hence one 
obtains the following result: 

\begin{theorem} \lb{tB.8} 
Assume Hypothesis \ref{hB.0} and introduce the path $A_+(s)$, $s \in [0,1]$, as in 
\eqref{B.102}, with $A_+(0) = A_{+,1}$ $($cf.\ \eqref{A+n} with $n=1$$)$ and $A_+(1) = A_+$. 
Then for each $s \in [0,1]$, there exists a unique spectral shift function 
$\xi(\, \cdot \,; A_+(s), A_-)$ for the pair $(A_+(s), A_-)$ depending continuously on $s \in [0,1]$ in 
the space $L^1\big(\bbR; (\nu^2 +1)^{-1} d\nu\big)$, satisfying  
$\xi(\, \cdot \,; A_+(0), A_-) = \xi(\, \cdot \,; A_{+,1}, A_-)$, and $($cf.\ \eqref{B.36a}$)$, 
\begin{align} 
2 i \int_{\bbR} \f{\xi(\lambda; A_+(s),A_-) d \lambda}{\lambda^2 + 1}   
= {\tr}_{L^2(\bbR)} \big(\ln\big(U_+(s)U_-^{-1}\big)\big),     \lb{B.109}
\end{align} 
where
\begin{align}
U_- = (A_- - i I)(A_- + i I)^{-1}, \quad 
U_+(s) = (A_+(s) - i I)(A_+(s) + i I)^{-1}, \; s \in [0,1].
\end{align} 
In addition $($cf.\ \eqref{B.47a}$)$, 
\begin{equation}
\xi(\, \cdot \,; A_+(s),A_-) \in L^1(\bbR; d\nu), \quad s \in [0,1).   \lb{B.108} 
\end{equation}
\end{theorem}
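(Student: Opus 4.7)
The plan is to verify that the path $\{A_+(s)\}_{s \in [0,1]}$ satisfies the hypotheses of \cite[Lemma~8.7.5]{Ya92} and then to supplement the conclusion of that lemma with the sharper integrability statement \eqref{B.108}. The invocation of Yafaev's lemma immediately yields the existence of a unique choice of spectral shift function for each $s \in [0,1]$ depending continuously on $s$ in $L^1\big(\bbR;(\nu^2+1)^{-1}d\nu\big)$, together with the Cayley-transform formula \eqref{B.109} (which is the integral identity obtained by plugging the boundary values of the Herglotz representation of $\xi$ into the trace formula for $\ln(U_+(s)U_-^{-1})$).

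First, I would check that for each $s \in [0,1]$ the operator $A_+(s)$ is resolvent comparable with $A_-$ in trace class norm. This follows verbatim from the argument in \eqref{3.5}--\eqref{3.13}, replacing the operator $\phi$ by $\wti\chi_s(A_-)\phi \wti\chi_s(A_-)$: since $|\phi|^{1/2}(A_- - zI)^{-1} \in \cB_2(L^2(\bbR))$ by \eqref{3.12}, and $\wti\chi_s(A_-) \in \cB(L^2(\bbR))$ uniformly in $s \in [0,1]$, a decomposition $\wti\chi_s(A_-)\phi \wti\chi_s(A_-) = \wti\chi_s(A_-)|\phi|^{1/2}\sgn(\phi)|\phi|^{1/2}\wti\chi_s(A_-)$ gives
\[
\big[(A_+(s)-zI)^{-1} - (A_- - zI)^{-1}\big] \in \cB_1\big(L^2(\bbR)\big), \quad z \in \bbC \backslash \bbR,\ s \in [0,1].
\]

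Next, I would establish continuity of $s \mapsto A_+(s)$ in the metric $d$ from \eqref{B.105}. Writing
\[
(A_+(s)-iI)^{-1} - (A_+(s')-iI)^{-1} = -(A_+(s)-iI)^{-1}\big[\wti\chi_s(A_-)\phi\wti\chi_s(A_-) - \wti\chi_{s'}(A_-)\phi\wti\chi_{s'}(A_-)\big](A_+(s')-iI)^{-1},
\]
one may factor $\phi = |\phi|^{1/2}\sgn(\phi)|\phi|^{1/2}$, insert the bounded operators $(A_\pm \mp iI)^{-1}$ as needed, and use that the map $s \mapsto \wti\chi_s(A_-)$ is norm continuous on $\bbR$ (by a direct application of the spectral theorem and dominated convergence, since $\wti\chi_s(\nu) \to \wti\chi_{s'}(\nu)$ pointwise and is uniformly bounded by $1$). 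Combining this with the Hilbert--Schmidt bound \eqref{2.61}-style estimate for $|\phi|^{1/2}\wti\chi_s(A_-)(A_- - iI)^{-\alpha}$ and an application of Lemma \ref{l3.6} yields trace-norm continuity of the resolvent differences, i.e., continuity in $d$. At that point \cite[Lemma~8.7.5]{Ya92} applies and gives existence, uniqueness, continuity in $s$ in $L^1\big(\bbR;(\nu^2+1)^{-1}d\nu\big)$, and formula \eqref{B.109}.

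Finally, for the supplementary integrability \eqref{B.108}, I would observe that for $s\in[0,1)$ the function $\wti\chi_s(\nu) = [(1-s)\nu^2 + 1]^{-1/2}$ lies in $L^2(\bbR;d\nu)$, so by the same reasoning as in \eqref{2.58}--\eqref{2.61}, $|\phi|^{1/2}\wti\chi_s(A_-) \in \cB_2\big(L^2(\bbR)\big)$ and hence
\[
\wti\chi_s(A_-)\phi\wti\chi_s(A_-) \in \cB_1\big(L^2(\bbR)\big), \quad s \in [0,1).
\]
Consequently $A_+(s) - A_-$ is trace class, and the standard trace class perturbation theory (see, e.g., \cite[Ch.~8]{Ya92}) produces a spectral shift function in $L^1(\bbR;d\nu)$. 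Uniqueness shown above forces this representative to coincide with $\xi(\,\cdot\,;A_+(s),A_-)$, yielding \eqref{B.108}. The delicate point --- and the one I expect to be the main obstacle --- is the continuity in $s$ up to and including the endpoint $s=1$, because at $s=1$ the perturbation $\wti\chi_1(A_-)\phi\wti\chi_1(A_-) = \phi$ is no longer trace class, so one must argue entirely at the level of resolvent differences and Cayley transforms (as Yafaev does) rather than through the perturbation itself; this is precisely what \cite[Lemma~8.7.5]{Ya92} is designed to handle.
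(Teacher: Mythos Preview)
Your overall strategy matches the paper's exactly: the theorem is stated there as a direct consequence of \cite[Lemma~8.7.5]{Ya92}, once one checks (in the paragraph preceding the theorem) that $s\mapsto A_+(s)$ is continuous in the metric $d$ of \eqref{B.105}, ``in complete analogy to \eqref{limB-1}''; the additional $L^1$-integrability \eqref{B.108} for $s<1$ then follows because the perturbation $\wti\chi_s(A_-)\phi\,\wti\chi_s(A_-)$ is trace class, exactly as you say.

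There is, however, one genuine slip in your continuity step. You assert that $s\mapsto\wti\chi_s(A_-)$ is \emph{norm} continuous, justified by ``the spectral theorem and dominated convergence.'' This fails at the endpoint $s=1$: for every $s<1$ one has $\wti\chi_s(\nu)\to 0$ as $|\nu|\to\infty$ while $\wti\chi_1\equiv 1$, so $\|\wti\chi_s(A_-)-I\|_{\cB(L^2(\bbR))}=\sup_{\nu}|\wti\chi_s(\nu)-1|=1$. Dominated convergence in the spectral calculus gives only \emph{strong} operator continuity. Since Yafaev's lemma requires continuity in $d$ as an \emph{input} (it does not supply it), your last paragraph misplaces where the difficulty is resolved.

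The correct argument---and what the paper means by ``in complete analogy to \eqref{limB-1}''---is the proof of Lemma~\ref{lB.7}: write
\[
(A_+(s)-iI)^{-1}-(A_- - iI)^{-1} = -\wti\chi_s(A_-)\big[(A_- - iI)^{-1}\phi(A_- - iI)^{-1}\big]\wti\chi_s(A_-)\big[(A_- - iI)(A_+(s)-iI)^{-1}\big],
\]
note that the bracketed middle factor lies in $\cB_1(L^2(\bbR))$ independently of $s$, that $\wti\chi_s(A_-)$ and $(A_+(s)-iI)^{-1}$ are \emph{strongly} continuous in $s$ on all of $[0,1]$, and apply Lemma~\ref{l3.6}. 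You already invoke Lemma~\ref{l3.6}, so the repair is just to replace ``norm continuous'' by ``strongly continuous'' and to route the argument through the fixed trace class factor $(A_- - iI)^{-1}\phi(A_- - iI)^{-1}$ rather than through the perturbation itself.
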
 

Thus, observing the equality $\chi_n(\cdot) = \wti \chi_{(1 - n^{-2})}(\cdot)$, Theorem \ref{tB.8} 
implies
\begin{align}
\begin{split}
\xi(\, \cdot \, ; A_+, A_-) &= \xi(\, \cdot \, ; A_+(1), A_-) 
= \lim_{s \uparrow 1}\xi(\, \cdot \, ; A_+(s), A_-) \\ 
&= \lim_{n \to \infty} \xi(\, \cdot \, ; A_{+,n}, A_-) \, 
\text{ in the norm $\|\cdot \|_{L^1(\bbR; (\nu^2 +1)^{-1} d\nu)}$, }
\end{split}
\end{align}
and therefore, a subsequence of $\{\xi(\, \cdot \, ; A_{+,n}, A_-)\}_{n \in \bbN}$ 
converges pointwise a.e. to $\xi(\, \cdot \, ; A_+, A_-)$ as $n \to \infty$. 
In particular, \eqref{B.96} shows that only $n_1 = 0$ in 
\eqref{B.22} is compatible with the family of spectral functions uniquely determined by 
Theorem \ref{tB.8}. Hence, on the basis of our approximation approach, one is naturally 
lead to the choice 
\begin{equation} 
\xi(\nu; A_+, A_-) = \f{1}{2 \pi} \int_{\bbR} dx\, \phi(x), \quad \nu \in \bbR, 
\end{equation}
which will henceforth be adopted for the remainder of this paper. 

We conclude this section with an elementary but useful consequence of Theorem \ref{tB.8}.

\begin{corollary} \lb{cB.9} 
Assume Hypothesis \ref{hB.0} and suppose that $f \in L^{\infty}(\bbR)$. Then
\begin{equation}
\lim_{n \to \infty} \|\xi(\, \cdot \, ; A_{+,n}, A_-) f 
- \xi(\, \cdot \, ; A_+, A_-) f\|_{L^1(\bbR; (\nu^2 + 1)^{-1}d\nu)} = 0,    \lb{B.118}
\end{equation}
in particular,
\begin{equation}
\lim_{n \to \infty} \int_{\bbR} \xi(\nu; A_{+,n}, A_-) d \nu \, g(\nu) 
= \int_{\bbR} \xi(\nu; A_+, A_-) d \nu \, g(\nu)     \lb{B.119}
\end{equation}
for all $g \in L^{\infty}(\bbR)$ such that 
$\esssup_{\nu \in \bbR} \big|(\nu^2 + 1) g(\nu)\big| < \infty$.
\end{corollary}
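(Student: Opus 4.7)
The corollary is essentially a repackaging of the continuity statement in Theorem \ref{tB.8}, so the strategy is to reduce both claims to that theorem via simple duality-type estimates. First I would identify the approximants $A_{+,n}$ with points on the path constructed in Theorem \ref{tB.8}: since $\chi_n(\cdot) = \wti\chi_{s_n}(\cdot)$ with $s_n = 1 - n^{-2}$, one has $A_+(s_n) = A_{+,n}$ for all $n \in \bbN$, with $s_n \uparrow 1$ and $A_+(1) = A_+$. Theorem \ref{tB.8} then delivers
\begin{equation*}
\lim_{n \to \infty} \big\|\xi(\,\cdot\,; A_{+,n}, A_-) - \xi(\,\cdot\,; A_+, A_-)\big\|_{L^1(\bbR;(\nu^2+1)^{-1} d\nu)} = 0,
\end{equation*}
which is the fundamental convergence that drives everything.

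Next, for \eqref{B.118}, I would simply factor $f$ out of the integral using its $L^\infty$-bound:
\begin{align*}
& \big\|\xi(\,\cdot\,; A_{+,n}, A_-) f - \xi(\,\cdot\,; A_+, A_-) f\big\|_{L^1(\bbR;(\nu^2+1)^{-1} d\nu)} \\
& \quad \leq \|f\|_{L^\infty(\bbR)} \, \big\|\xi(\,\cdot\,; A_{+,n}, A_-) - \xi(\,\cdot\,; A_+, A_-)\big\|_{L^1(\bbR;(\nu^2+1)^{-1} d\nu)},
\end{align*}
and the right-hand side tends to zero as $n \to \infty$ by the preceding step.

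For the ``in particular'' statement \eqref{B.119}, I would rewrite the integrand by absorbing the weight: with $h(\nu) := (\nu^2 + 1)\, g(\nu)$, the hypothesis on $g$ means $h \in L^\infty(\bbR)$, and
\begin{equation*}
\int_{\bbR} \xi(\nu; A_{+,n}, A_-)\, g(\nu)\, d\nu \;=\; \int_{\bbR} \xi(\nu; A_{+,n}, A_-)\, h(\nu)\, (\nu^2+1)^{-1}\, d\nu,
\end{equation*}
and analogously with $A_+$ in place of $A_{+,n}$. This last integral is a pairing between $L^1(\bbR;(\nu^2+1)^{-1}d\nu)$ and $L^\infty(\bbR)$, so applying \eqref{B.118} with $f = h$ gives convergence at once.

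There is no genuine obstacle here: Theorem \ref{tB.8} has already done the hard work of producing a continuous path of spectral shift functions in the weighted $L^1$-space, and the corollary is a two-line consequence. The only point requiring a moment of care is the algebraic manipulation in the second part, which turns the unweighted pairing into a weighted one of exactly the form controlled by the first part.
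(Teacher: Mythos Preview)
Your proposal is correct and follows essentially the same route as the paper: invoke Theorem \ref{tB.8} (via the identification $\chi_n = \wti\chi_{1-n^{-2}}$) for the underlying $L^1\big(\bbR;(\nu^2+1)^{-1}d\nu\big)$-convergence, then pull out $\|f\|_{L^\infty(\bbR)}$ for \eqref{B.118}, and rewrite $g(\nu)\,d\nu = (\nu^2+1)g(\nu)\cdot(\nu^2+1)^{-1}d\nu$ for \eqref{B.119}. The paper's proof is the same two-line argument, only phrased as a ``decomposition of measures'' rather than an $L^1$--$L^\infty$ pairing.
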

\begin{proof}
Relation \eqref{B.118} is clear from Theorem \ref{tB.8} and 
\begin{align}
\begin{split} 
& \|\xi(\, \cdot \, ; A_{+,n}, A_-) f 
- \xi(\, \cdot \, ; A_+, A_-) f\|_{L^1(\bbR; (\nu^2 + 1)^{-1}d\nu)}    \\
& \quad \leq \|f\|_{L^{\infty}(\bbR)} \, \|\xi(\, \cdot \, ; A_{+,n}, A_-) f 
- \xi(\, \cdot \, ; A_+, A_-) f\|_{L^1(\bbR; (\nu^2 + 1)^{-1}d\nu)},
\end{split} 
\end{align} 
and \eqref{B.119} is obvious from \eqref{B.118} and decomposing the (complex) measures 
\begin{equation} 
\xi(\nu; A_{+,n}, A_-) d \nu \, g(\nu) \, \text{ and } \, \xi(\nu; A_+, A_-) d \nu \, g(\nu) 
\end{equation} 
into 
\begin{equation} 
(\nu^2 + 1)^{-1} \xi(\nu; A_{+,n}, A_-) d \nu \, (\nu^2 + 1) g(\nu) \, \text{ and } \,  
(\nu^2 + 1)^{-1} \xi(\nu; A_+, A_-) d \nu \, (\nu^2 + 1) g(\nu).
\end{equation}   
\end{proof}

\section{The $(1+1)$-Dimensional Example} \lb{s4}

In this section we start the discussion of an interesting example that does 
not satisfy the relative trace class condition Hypothesis~2.1\,$(iv)$ in 
\cite{CGPST14a} and \cite{GLMST11}. 

To this end, we now strengthen Hypothesis \ref{hB.0} as follows: 

\begin{hypothesis} \lb{h3.1} 
Suppose the real-valued functions $\phi, \theta$ satisfy  
\begin{align}
& \phi \in AC_{\loc}(\bbR) \cap L^{\infty}(\bbR) \cap L^1(\bbR),  
\; \phi' \in L^{\infty}(\bbR),     \\
\begin{split} 
& \theta \in AC_{\loc}(\bbR) \cap L^{\infty}(\bbR), \; 
\theta' \in L^{\infty}(\bbR) \cap L^1(\bbR),      \\
& \lim_{t \to \infty} \theta (t) = 1, \; \lim_{t \to - \infty} \theta (t) = 0. 
\end{split} 
\end{align} 
\end{hypothesis}

Given Hypothesis \ref{h3.1}, we now introduce the family of self-adjoint operators 
$A(t)$, $t \in \bbR$, in $L^2(\bbR)$, 
\begin{equation}
A(t) = - i \f{d}{dx} + \theta(t) \phi, \quad 
\dom(A(t)) = W^{1,2}(\bbR), \quad t \in \bbR.
\end{equation}
(In fact, given $\theta \in L^\infty(\bbR)$, self-adjointness of $A(t)$, is equivalent 
to the condition $\phi \in L^2_{\locunif}(\bbR)$, see the references in \cite{GW13}). Its  
asymptotes, $A_- = - i \f{d}{dx}$, $A_+ = - i \f{d}{dx} + \phi$, 
$\dom(A_{\pm}) = W^{1,2}(\bbR)$, were studied in detail in Section \ref{s3}.    
Then a simple application of resolvent identities proves that 
\begin{equation}
\nlim_{t \to \pm \infty} (A(t) - z I)^{-1} = (A_{\pm} - z I)^{-1}, \quad 
z \in \bbC \backslash \bbR. 
\end{equation}
Indeed, it suffices to note that for $t\in \bbR$, $z \in \bbC \backslash \bbR$, 
\begin{align}
& \big\|(A(t) - zI)^{-1} - (A_- - zI)^{-1}\big\|_{\cB(L^2(\bbR))} 
\leq |\Im(z)|^{-2} \|\phi\|_{L^{\infty}(\bbR)} |\theta(t)|,    \\
& \big\|(A(t) - zI)^{-1} - (A_+ - zI)^{-1}\big\|_{\cB(L^2(\bbR))} 
\leq |\Im(z)|^{-2} \|\phi\|_{L^{\infty}(\bbR)} |\theta(t) - 1|,
\end{align}
employing 
$A(t) = A_- + \theta(t) \phi = A_+ + [\theta(t) - 1] \phi$, $t \in \bbR$. 
Moreover, as in \eqref{3.13} one also obtains, 
\begin{equation}
\big[(A(t) - z I)^{-1} - (A_- - z I)^{-1}\big] \in \cB_1\big(L^2(\bbR)\big), \quad 
t \in \bbR, \; z \in \bbC \backslash \bbR. 
\end{equation}
As in \eqref{3.5} and \eqref{3.5A}, introducing the unitary operator of multiplication 
$U(t) = e^{- i \theta(t) \int_0^x dx' \phi(x')}$, $t \in \bbR$, in $L^2(\bbR)$, one obtains 
$A(t) = U(t) A_- U(t)^{-1}$, $t \in \bbR$. 

It will be convenient to introduce the family of bounded operators $B(t)$, $t \in \bbR$, 
in $L^2(\bbR)$, where
\begin{equation}
B(t) = \theta(t) \phi, \quad \dom(B(t)) = L^{2}(\bbR), \;  t \in \bbR,
\end{equation}
implying $A(t) = A_- + B(t)$, $t \in \bbR$.

Next, we introduce the operator $d/dt$ in $L^2\big(\bbR; dt; L^2(\bbR;dx)\big)$  by 
\begin{align}
& \bigg(\f{d}{dt}f\bigg)(t) = f'(t) \, \text{ for a.e.\ $t\in\bbR$,}    \no \\
& \, f \in \dom(d/dt) = \big\{g \in L^2\big(\bbR;dt;L^2(\bbR)\big) \, \big|\,
g \in AC_{\loc}\big(\bbR; L^2(\bbR)\big), \\
& \hspace*{6.3cm} g' \in L^2\big(\bbR;dt;L^2(\bbR)\big)\big\}   \no \\
& \hspace*{2.3cm} = W^{1,2} \big(\bbR; dt; L^2(\bbR; dx)\big).      \label{2.ddt} 
\end{align} 
At this point we turn to the pair $(\bsH_2, \bsH_1)$ and identify 
$L^2\big(\bbR; dt; L^2(\bbR; dx)\big) = L^2(\bbR^2; dt dx)$ from now on, and for simplicity, typically 
abbreviate the latter by $L^2(\bbR^2)$. We start by introducig the model operator 
$\bsD_\bsA^{}$ in $L^2(\bbR^2)$ by 
\begin{equation}
\bsD_\bsA^{} = \f{d}{dt} + \bsA,
\quad \dom(\bsD_\bsA^{})= W^{1,2}(\bbR^2).    \lb{2.DA}
\end{equation}
Clearly, $\bsD_\bsA^{}$ is densely defined and closed (cf.\ \cite[Lemma~4.4]{GLMST11}). 
Similarly, the adjoint operator $\bsD_\bsA^*$ of $\bsD_\bsA^{}$ in 
$L^2(\bbR^2)$ is then given by
\begin{equation}
\bsD_\bsA^*=- \f{d}{dt} + \bsA, \quad
\dom(\bsD_\bsA^*) = W^{1,2}(\bbR^2).   
\end{equation}
Following a tradition in mathematical physics, we dubbed the model represented 
by $\bsD_\bsA^{}$ a $(1+1)$-dimensional model due 
to the underlying one-dimensionality of $x \in \bbR$ and $t \in \bbR$. 

In addition, we introduce $\bsA_-$  in $L^2(\bbR^2)$, the self-adjoint (constant fiber) 
operator defined by 
\begin{align}
& (\bsA_- f)(t) = A_- f(t) \, \text{ for a.e.\ $t\in\bbR$,}   \no \\
& f \in \dom(\bsA_-) = \bigg\{g \in L^2(\bbR^2) \,\bigg|\,
g(t, \cdot)\in \dom(A_-) \text{ for a.e.\ } t\in\bbR,    \no \\
& \quad t \mapsto A_- g(t, \cdot) \text{ is (weakly) measurable,} \,  
\int_{\bbR} dt \, \|A_- g(t)\|_{L^2(\bbR)}^2 < \infty\bigg\}.    \lb{2.DA-}
\end{align} 

Then the operators $\bsH_j$, $j=1,2$, are defined by (see our discussion in 
\eqref{2.7a}--\eqref{2.9a}, for convenience, we repeat this at this point),  
\begin{align}
& \bsH_1 = \bsD_{\bsA}^{*} \bsD_{\bsA}^{} = - \f{\partial^2}{\partial t^2} - \f{\partial^2}{\partial x^2} 
- 2 i \theta(t) \phi(x) \f{\partial}{\partial x}   \no \\ 
& \hspace*{2.6cm} - \theta'(t) \phi(x) - i \theta(t) \phi'(x) + \theta^2(t) \phi(x)^2,    \lb{3.H1} \\
& \bsH_2 = \bsD_{\bsA}^{} \bsD_{\bsA}^{*} = - \f{\partial^2}{\partial t^2} - \f{\partial^2}{\partial x^2} 
- 2 i \theta(t) \phi(x) \f{\partial}{\partial x}   \no \\ 
& \hspace*{2.6cm} + \theta'(t) \phi(x) - i \theta(t) \phi'(x) + \theta^2(t) \phi(x)^2,     \lb{3.H2} \\
& \dom(\bsH_1) =  \dom(\bsH_2) = W^{2,2}\big(\bbR^2\big).     \lb{domHj}
\end{align} 

Thus,
\begin{align}
\begin{split} 
(\bsH_2 - z \, \bsI)^{-1} - (\bsH_1 - z \, \bsI)^{-1} = 
- (\bsH_1 - z \, \bsI)^{-1} [2 \theta' \phi] (\bsH_2 - z \, \bsI)^{-1},&   \lb{3.31} \\
z\in\rho(\bsH_1) \cap \rho(\bsH_2).& 
\end{split} 
\end{align}
Since by hypothesis, $\theta' \phi \in L^1(\bbR^2; dt dx)$, one can once more 
apply \cite[Theorem\ 4.1]{Si05} and conclude that 
\begin{equation}
\big[(\bsH_2 - z \, \bsI)^{-1} - (\bsH_1 - z \, \bsI)^{-1}\big] \in 
\cB_1\big(L^2(\bbR^2)\big), \quad z \in \rho(\bsH_1) \cap \rho(\bsH_2), 
\lb{3.32}
\end{equation} 
again by decomposing $[\theta' \phi] = |\theta'|^{1/2} |\phi|^{1/2} \sgn(\theta') 
 \sgn(\phi) |\theta'|^{1/2} |\phi|^{1/2}$ and using that
\begin{align}
& \ol{(\bsH_1 - z \, \bsI)^{-1} (\bsH_0 - z \, \bsI)} 
= \big[(\bsH_0 - {\ol z} \, \bsI) (\bsH_1 - {\ol z} \, \bsI)^{-1}\big]^*  
\in \cB\big(L^2(\bbR^2)\big),    \lb{3.33} \\
& (\bsH_0 - z \, \bsI) (\bsH_2 - z \, \bsI)^{-1} \in \cB\big(L^2(\bbR^2)\big), 
\quad z \in \rho(\bsH_1) \cap \rho(\bsH_2),      \lb{3.34}
\end{align} 
and hence 
\begin{align}
& (\bsH_2 - z \, \bsI)^{-1} - (\bsH_1 - z \, \bsI)^{-1} 
= - \big[\ol{(\bsH_1 - z \, \bsI)^{-1}(\bsH_0 - z \, \bsI)}\big]    \no \\
& \quad \times \big[(\bsH_0 -z \, \bsI)^{-1} [2 \theta' \phi] (\bsH_0 -z \, \bsI)^{-1} \big] 
\big[(\bsH_0 -z \, \bsI)(\bsH_2  - z \, \bsI)^{-1}\big],     \lb{3.35} \\
& \hspace*{8.1cm} z \in \bbC \backslash [0,\infty),    \no  
\end{align} 
where $\bsH_0$ in $L^2(\bbR^2)$ abbreviates 
\begin{equation}
\bsH_0 = - \f{d^2}{dt^2} + \bsA_-^2 = \bigg(- \f{\partial^2}{\partial t^2} - \f{\partial^2}{\partial x^2} \bigg) 
= - \Delta, \quad \dom(\bsH_0) = W^{2,2}(\bbR^2).  
\lb{H0}
\end{equation}
 
The fact \eqref{3.32} implies that the spectral shift function 
$\xi(\, \cdot \, ; \bsH_2, \bsH_1)$ for the pair $(\bsH_2, \bsH_1)$ is well-defined, satisfies
\begin{equation}
\xi(\, \cdot \, ; \bsH_2, \bsH_1) \in L^1\big(\bbR; (\lambda^2 + 1)^{-1} d\lambda\big), 
\end{equation} 
and since $\bsH_j\geq 0$, $j=1,2$, one uniquely introduces $\xi(\,\cdot\,; \bsH_2,\bsH_1)$ 
by requiring that
\begin{equation}
\xi(\lambda; \bsH_2,\bsH_1) = 0, \quad \lambda < 0,    \lb{2.46c}
\end{equation}
implying the Krein--Lifshits trace formula,  
\begin{align}
\begin{split}
\tr_{L^2(\bbR^2;dtdx)} \big((\bsH_2 - z \, \bsI)^{-1} - (\bsH_1 - z \, 
\bsI)^{-1}\big)
= - \int_{[0, \infty)}  \frac{\xi(\lambda; \bsH_2, \bsH_1)  
d\lambda}{(\lambda -z)^2},&     \\
z\in\bbC\backslash [0,\infty).&    \lb{3.45}
\end{split} 
\end{align} 

Introducing $\bsB$ and $\bsB^{\prime}$ in terms of the bounded operator families 
$B(t)$, $B'(t)$, $t \in \bbR$, in analogy to \eqref{1.S}, one can decompose 
$\bsH_j$, $j=1,2$, as follows:
\begin{align}
\bsH_j &= \f{d^2}{dt^2} + \bsA^2 + (-1)^j \bsA^{\prime}    \\
&= \bsH_0 + \bsB \bsA_- + \bsA_- \bsB + \bsB^2 + (-1)^j \bsB^{\prime}, \quad j =1,2. 
\end{align}
One notes that the operators $\bsH_j$, $j=1,2,$ are well-defined, 
since $\bsB$ leaves the domain of $\bsA_-$ invariant. In addition, since 
one can write 
\begin{equation} 
\bsB\bsA_-+\bsA_-\bsB = [\bsA_-,\bsB]+2\bsB\bsA_-
= - [\bsA_-,\bsB]+2\bsA_-\bsB, 
\end{equation} 
and 
\begin{align}
([\bsA_-,\bsB]f)(t)&=(\bsA_-\bsB f)(t)-(\bsB\bsA_- f)(t)=A_-\theta(t)\phi f(t)-\theta(t)\phi A_-f(t)   \no \\
&=\theta(t)[A_-,\phi ]f(t)= - i \theta(t)\phi' f(t), 
\quad f\in W^{2,2}(\mathbb{R}^2),  
\end{align}
employing the fact that $\phi', \theta \in L^\infty(\bbR)$, one obtains 
that the commutator $[\bsA_-,\bsB]$ has a bounded closure.
For subsequent purposes we denote 
\begin{equation}\label{commutBA}
\bsC:= \ol{[\bsA_-,\bsB]}
\end{equation}
and write
\begin{align}\label{HwithC}
\begin{split} 
\bsH_j &= \bsH_0 + 2\bsB\bsA_- +\bsC+ \bsB^2 + (-1)^j \bsB^{\prime}   \\
&= \bsH_0 + 2\bsA_- \bsB-\bsC+ \bsB^2 + (-1)^j \bsB^{\prime}, \quad j =1,2.
\end{split} 
\end{align} 

Returning to the approximations introduced in \eqref{def_chi_n}--\eqref{2.58}, we now 
introduce 
\begin{equation}
A_n(t) = A_- + \theta(t) \chi_n(A_-) \phi \chi_n(A_-), \quad 
\dom(A_n(t)) = \dom(A_-), \quad n \in \bbN, \; t \in \bbR,   \lb{dAn}  
\end{equation}
and note
\begin{align} 
& A_n'(t) = B_n'(t) = \theta'(t) \chi_n(A_-) \phi \chi_n(A_-) 
\in \cB_1\big(L^2(\bbR)\big), \quad n \in \bbN, \; t \in \bbR,  \\
& \int_{\bbR} dt \, \|A_n'(t)\|_{\cB_1(L^2(\bbR))} 
\leq \|\chi_n(A_-) \phi \chi_n(A_-)\|_{\cB_1(L^2(\bbR))} \, 
\|\theta'\|_{L^1(\bbR)} < \infty, \quad n \in \bbN,      \lb{2.57} 
\end{align} 
recalling \eqref{2.61}. In addition, introducing the decompositions,
\begin{align}
& \bsH_{j,n} = \f{d^2}{dt^2} + \bsA_n^2 + (-1)^j \bsA_n^{\prime}    \no \\
& \hspace*{8mm} = \bsH_0 + \bsB_n \bsA_- + \bsA_- \bsB_n 
+ \bsB_n^2 + (-1)^j \bsB_n^{\prime}, \\
& \dom(\bsH_{j,n}) = \dom(\bsH_0) = W^{2,2}(\bbR^2),  
\quad n \in \bbN, \;  j =1,2,  \no 
\end{align}
with
\begin{align}
& \bsB_n = \chi_n(\bsA_-) \bsB \chi_n(\bsA_-), \quad 
\bsB_n^{\prime} = \chi_n(\bsA_-) \bsB^{\prime} \chi_n(\bsA_-), \quad n \in \bbN, \\
& \bsC_n=\chi_n(\bsA_-) \bsC \chi_n(\bsA_-), \quad n \in \bbN,     \lb{commutBAn}
\end{align} 
one can write 
\begin{align} 
\begin{split} 
\bsH_{j,n} &= \bsH_0 + 2\bsB_n \bsA_- +\bsC_n + \bsB_n^2 + (-1)^j \bsB_n^{\prime}
\\
&= \bsH_0 + 2\bsA_-\bsB_n - \bsC_n + \bsB_n^2 + (-1)^j \bsB_n^{\prime} 
\quad n \in \bbN, \; j =1,2.  
\end{split} 
\end{align} 

\begin{lemma} \lb{l3.5}
Assume Hypothesis \ref{h3.1} and let $z \in \bbC \backslash [0,\infty)$. Then the following assertions hold: \\
$(i)$ The operators $\bsH_{j,n}$ converge to $\bsH_j$, $j=1,2$, 
in the strong resolvent sense, 
\begin{equation}
\slim_{n \to \infty} (\bsH_{j,n}-z\, \bsI)^{-1} 
= ( \bsH_{j}-z\, \bsI)^{-1}, \quad j=1,2.     \lb{limR}
\end{equation}
$(ii)$ The operators 
\begin{equation}
\ol{( \bsH_{j,n}-z \, \bsI)^{-1}(\bsH_{0}-z \, \bsI)} = 
\big[(\bsH_{0}- {\ol z} \, \bsI) ( \bsH_{j,n}- {\ol z} \, \bsI)^{-1}\big]^*, \quad  j=1,2, \; 
n \in \bbN,     \lb{closure} 
\end{equation}
and 
$(\bsH_{0}- z \, \bsI) ( \bsH_{j,n}- z \, \bsI)^{-1}$, $j=1,2$, $n \in \bbN$, 
are uniformly bounded with respect to $n \in \bbN$, that is, there exists 
$C \in (0,\infty)$ such that
\begin{equation}
\big\|(\bsH_{0}- z \, \bsI) ( \bsH_{j,n}- z \, \bsI)^{-1}\big\|_{\cB(L^2(\bbR^2))} 
\leq C, \quad j=1,2, \; n \in \bbN.    \lb{bd} 
\end{equation}
In addition, 
\begin{align} 
& \slim_{n \to \infty} \ol{(\bsH_{j,n}-z \, \bsI)^{-1}(\bsH_{0} - z \, \bsI)} 
= \ol{( \bsH_{j}-z \, \bsI)^{-1}(\bsH_{0}-z \, \bsI)}, \quad j=1,2,   \lb{conv1} \\
& \slim_{n \to \infty} (\bsH_{0}-z \, \bsI) (\bsH_{j,n}-z\, \bsI)^{-1} 
= (\bsH_{0}-z \, \bsI)( \bsH_{j}-z\, \bsI)^{-1}, \quad j=1,2.    \lb{conv2}
\end{align}  
\end{lemma}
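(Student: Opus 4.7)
My plan is to establish (i) via strong convergence $\bsH_{j,n} \to \bsH_j$ on the common core $W^{2,2}(\bbR^2)$, and (ii) via a uniform form inequality $\bsH_0 \le 2\bsH_{j,n} + C_0 \bsI$ which in turn controls $\bsA_-(\bsH_{j,n} - z\bsI)^{-1}$ uniformly in $n$.

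For (i), I would use \eqref{HwithC} and its $n$-counterpart to write on the common core $W^{2,2}(\bbR^2) = \dom(\bsH_j) = \dom(\bsH_{j,n})$,
\begin{equation*}
\bsH_{j,n} - \bsH_j = 2(\bsB_n - \bsB)\bsA_- + (\bsC_n - \bsC) + (\bsB_n^2 - \bsB^2) + (-1)^j(\bsB_n' - \bsB').
\end{equation*}
By the spectral theorem, $\chi_n(\bsA_-) \to \bsI$ strongly with norm bounded by $1$, so Lemma \ref{l3.6} yields $\bsB_n \to \bsB$, $\bsB_n' \to \bsB'$, $\bsC_n \to \bsC$, and hence $\bsB_n^2 \to \bsB^2$ strongly. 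For $f \in W^{2,2}(\bbR^2)$, $\bsA_- f \in L^2(\bbR^2)$, so $(\bsB_n - \bsB)\bsA_- f \to 0$. Thus $\bsH_{j,n} f \to \bsH_j f$ for every $f$ in this core, and \eqref{limR} then follows from the standard criterion for strong resolvent convergence of self-adjoint operators possessing a common core.

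The crux of (ii) is a uniform form inequality. Expanding $(\bsH_{j,n} f, f)$ via \eqref{HwithC} for $f$ in the common form domain $W^{1,2}(\bbR^2)$ yields
\begin{equation*}
(\bsH_0 f, f) = (\bsH_{j,n} f, f) - 2\Re(\bsB_n f, \bsA_- f) - \|\bsB_n f\|^2 - (-1)^j(\bsB_n' f, f).
\end{equation*}
Applying Cauchy--Schwarz with absorption, $|2\Re(\bsB_n f, \bsA_- f)| \le 2\|\bsB_n\|^2\|f\|^2 + \tfrac{1}{2}\|\bsA_- f\|^2$, together with $\|\bsA_- f\|^2 \le (\bsH_0 f, f)$ and the bounds $\|\bsB_n\| \le \|\bsB\|$, $\|\bsB_n'\| \le \|\bsB'\|$ (since $\chi_n(\bsA_-)$ is a contraction), one arrives at $(\bsH_0 f, f) \le 2(\bsH_{j,n} f, f) + C_0 \|f\|^2$ uniformly in $n$, with $C_0$ depending only on $\|\bsB\|$ and $\|\bsB'\|$. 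Since $\bsA_-^2 \le \bsH_0$, this provides uniform boundedness of $\bsA_-(\bsH_{j,n} + c\bsI)^{-1/2}$ for large $c$, and hence of $\bsA_-(\bsH_{j,n} - z\bsI)^{-1}$ via the functional-calculus estimate $\|(\bsH_{j,n} + c\bsI)^{1/2}(\bsH_{j,n} - z\bsI)^{-1}\| \le \sup_{\lambda \ge 0}|(\lambda+c)^{1/2}/(\lambda - z)|$. Writing
\begin{equation*}
(\bsH_0 - z\bsI)(\bsH_{j,n} - z\bsI)^{-1} = \bsI - \bsV_{j,n}(\bsH_{j,n} - z\bsI)^{-1}, \quad \bsV_{j,n} := 2\bsB_n\bsA_- + \bsC_n + \bsB_n^2 + (-1)^j\bsB_n',
\end{equation*}
each summand on the right is uniformly bounded in $n$, establishing \eqref{bd}; the uniform bound for $\ol{(\bsH_{j,n} - z\bsI)^{-1}(\bsH_0 - z\bsI)}$ follows by taking adjoints.

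For the strong convergences \eqref{conv1}, \eqref{conv2}, apply the second resolvent identity: for $g \in L^2(\bbR^2)$ and $u := (\bsH_j - z\bsI)^{-1} g \in W^{2,2}(\bbR^2)$,
\begin{equation*}
(\bsH_0 - z\bsI)\big[(\bsH_{j,n} - z\bsI)^{-1} - (\bsH_j - z\bsI)^{-1}\big] g = \big[\bsI - \bsV_{j,n}(\bsH_{j,n} - z\bsI)^{-1}\big](\bsV_j - \bsV_{j,n}) u.
\end{equation*}
Since $\bsA_- u \in L^2(\bbR^2)$, the strong convergences of $\bsB_n, \bsB_n', \bsC_n, \bsB_n^2$ give $(\bsV_j - \bsV_{j,n}) u \to 0$, and combined with uniform boundedness of $\bsV_{j,n}(\bsH_{j,n} - z\bsI)^{-1}$, the right-hand side tends to zero, proving \eqref{conv2}. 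Relation \eqref{conv1} then follows from uniform boundedness of $\ol{(\bsH_{j,n} - z\bsI)^{-1}(\bsH_0 - z\bsI)}$ together with strong convergence on the dense subspace $W^{2,2}(\bbR^2)$, where the question reduces to \eqref{limR} via $(\bsH_0 - z\bsI) W^{2,2}(\bbR^2) = L^2(\bbR^2)$. The main obstacle is the uniform form inequality: the unbounded cross terms $\bsA_-\bsB_n$, $\bsB_n\bsA_-$ in $\bsV_{j,n}$ must be controlled by $n$-independent constants through the Cauchy--Schwarz absorption, so that the estimate survives the limit $n \to \infty$.
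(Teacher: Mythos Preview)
Your proof is correct. Part (i) is essentially identical to the paper's argument: both show $\bsH_{j,n}f \to \bsH_j f$ on the common core $W^{2,2}(\bbR^2)$ by verifying strong convergence of $\bsB_n$, $\bsB_n'$, $\bsC_n$, $\bsB_n^2$ and using $\bsA_- f \in L^2(\bbR^2)$, then invoke the standard strong-resolvent-convergence criterion.

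Part (ii), however, takes a genuinely different route. The paper establishes the uniform bound \eqref{bd} by iterating the resolvent identity: it writes $(\bsH_0 - z\bsI)(\bsH_{1,n} - z\bsI)^{-1} = \bsI - \bsV_{1,n}(\bsH_{1,n} - z\bsI)^{-1}$, then controls the dangerous term $\bsB_n\bsA_-(\bsH_{1,n} - z\bsI)^{-1}$ by expanding $\bsA_-(\bsH_{1,n} - z\bsI)^{-1}$ once more against $\bsH_0$, ultimately reducing everything to the bounded operators $\bsA_-(\bsH_0 - z\bsI)^{-1}$ and $\ol{\bsA_-(\bsH_0 - z\bsI)^{-1}\bsA_-}$; the resulting explicit expansion (their equation for $(\bsH_0 - z\bsI)(\bsH_{1,n} - z\bsI)^{-1}$ as a finite sum of products of strongly convergent bounded factors) simultaneously yields \eqref{conv2}. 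You instead prove the single quadratic-form inequality $\bsH_0 \le 2\bsH_{j,n} + C_0\bsI$ uniformly in $n$ via Cauchy--Schwarz absorption of the cross term $2\Re(\bsB_n f,\bsA_- f)$, which immediately gives a uniform bound on $\bsA_-(\bsH_{j,n} - z\bsI)^{-1}$, and then obtain \eqref{conv2} by the second resolvent identity relative to the \emph{limit} operator $\bsH_j$ rather than by unpacking a long expansion. Your approach is shorter and more conceptual (the form inequality is the heart of the matter), while the paper's approach is more hands-on and has the side benefit of exhibiting $(\bsH_0 - z\bsI)(\bsH_{j,n} - z\bsI)^{-1}$ explicitly as a polynomial in strongly convergent pieces, which some readers may find more transparent. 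Both deductions of \eqref{conv1} from \eqref{bd} and \eqref{limR} are the same density argument.
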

\begin{proof} Since the proof for the operators $\bsH_{2,n},\bsH_2$ is a verbatim repetition of the proof for $\bsH_{1,n},\bsH_1$, we exclusively focus on the latter. 

\noindent 
$(i)$ By \eqref{HwithC} and the analogous equation for $\bsH_{1,n}$, 
the operators $\bsH_1$ and $\bsH_{1,n}$ have a common core 
$\mathrm{dom}(\bsH_1)$. Since 
\begin{equation} 
\bsB^\prime-\bsB^\prime_n=\bsB^\prime-\chi_n(\bsA_-) \bsB^\prime \chi_n(\bsA_-)=(\bsI-\chi_n(\bsA_-))\bsB^\prime+\chi_n(\bsA_-)\bsB^\prime(\bsI-\chi_n(\bsA_-)), 
\end{equation} 
and $\bsB^\prime$ is a bounded operator, the convergence 
\begin{equation}
\slim_{n \to \infty} \bsB^\prime_n = \bsB^\prime    \lb{limBprime}
\end{equation}
holds. Arguing analogously, one also obtains that 
\begin{equation} 
\slim_{n \to \infty} \bsB_n = \bsB, \quad 
\slim_{n \to \infty} \bsC_n = \bsC.     \lb{limBC}
\end{equation} 
Next, rewriting
\begin{align}
& \bsB^2-\bsB^2_n=\bsB^2-\chi_n(\bsA_-)\bsB\chi_n(\bsA_-)\bsB\chi_n(\bsA_-) 
 \\
& \quad =(\bsI-\chi_n(\bsA_-))\bsB^2+\chi_n(\bsA_-)\bsB\big(\bsB(\bsI-\chi_n(\bsA_-))+(\bsI-\chi_n(\bsA_-))\bsB\chi_n(\bsA_-)\big),   \no 
\end{align}
one obtains also that
\begin{equation} 
\slim_{n \to \infty} \bsB^2_n = \bsB^2.    \lb{limB-2}
\end{equation} 
Thus, it remains to show that $\slim_{n \to \infty} \bsB_n\bsA_- f = \bsB\bsA_-f$ for all $f\in\mathrm{dom}(\bsH_1)$. Indeed, one verifies  
\begin{align}
\bsB\bsA_--\bsB_n\bsA_-&=\bsB\bsA_--\chi_n(\bsA_-)\bsB\bsA_-\chi_n(\bsA_-)\\
&=(\bsI-\chi_n(\bsA_-))\bsB\bsA_-+\chi_n(\bsA_-)\bsB(\bsI-\chi_n(\bsA_-))\bsA_-,  
\no 
\end{align} 
implying the required convergence. Consequently, 
\begin{equation} 
\slim_{n \to \infty} \bsH_{1,n}f = \bsH_1f, \quad f\in \mathrm{dom}(\bsH_1).  
\end{equation} 
Since $\bsH_{1,n}$ and $\bsH_1$ are self-adjoint operators with a common core, \cite[Theorem~VIII.25]{RS80} (see also \cite[Theorem~9.16]{We80}) implies 
that $\bsH_{1,n}$ converges to $\bsH_1$ in the strong resolvent sense.

\smallskip 
\noindent 
$(ii)$ Fix $z \in \bbC \backslash [0,\infty)$. First, one observes that 
\begin{equation}
\ol{( \bsH_{1,n}-z \, \bsI)^{-1}(\bsH_{0}-z \, \bsI)} = 
\big[(\bsH_{0}- {\ol z} \, \bsI) ( \bsH_{1,n}- {\ol z} \, \bsI)^{-1}\big]^*. 
\end{equation}
Using the standard resolvent identity, one obtains
\begin{equation}\label{ssss}
({\bsH}_{1,n}-z \, \bsI)^{-1}-(\bsH_0-z \, \bsI)^{-1}= 
-({\bsH}_{1,n}-z \, \bsI)^{-1}\big[(\bsH_1-\bsH_0)(\bsH_0-z \, \bsI)^{-1}\big], 
\end{equation}
and hence concludes,
\begin{align}
\begin{split} 
& (\bsH_0 - z \, \bsI)(\bsH_{1,n} - z \, \bsI)^{-1} 
= \bsI - \big[(\bsH_{1,n} - \bsH_0)(\bsH_{1,n} - z \, \bsI)^{-1}\big]   \\
& \quad = \bsI - \big[2 \bsB_n \bsA_- + \bsC_n + \bsB_n^2 - \bsB_n^{\prime}\big]
(\bsH_{1,n} - z \, \bsI)^{-1}, \quad n \in \bbN.   \lb{seq} 
\end{split}
\end{align}
Since strongly convergent sequences of bounded operators are uniformly 
bounded in norm, \eqref{limR}, \eqref{limBprime}, \eqref{limBC}, and 
\eqref{limB-2}  yield uniform boundedness of \eqref{seq} with respect to 
$n \in \bbN$, except for the term $-2\bsB_n \bsA_- (\bsH_{1,n} - z \, \bsI)^{-1}$, 
which we focus on next. One obtains 
\begin{align}
& \bsA_- (\bsH_{1,n} - z \, \bsI)^{-1} = \bsA_- (\bsH_0 - z \, \bsI)^{-1}   \no \\ 
& \qquad - \bsA_- (\bsH_0 - z \, \bsI)^{-1}\big[(\bsH_{1,n} - \bsH_0)
(\bsH_{1,n} - z \, \bsI)^{-1}\big]   \no \\
& \quad = \bsA_- (\bsH_0 - z \, \bsI)^{-1}    \lb{seq1} \\ 
& \qquad - \bsA_- (\bsH_0 - z \, \bsI)^{-1}
\big[2 \bsA_- \bsB_n - \bsC_n + \bsB_n^2 - \bsB_n^{\prime}\big]
(\bsH_{1,n} - z \, \bsI)^{-1}, \quad n \in \bbN,   \no 
\end{align}
which is uniformly bounded with respect to $n \in \bbN$ since
\begin{equation}
\ol{\bsA_- (\bsH_0 + \bsI)^{-1} \bsA_-} = 
\big[\bsA_- (\bsH_0 + \bsI)^{-1/2}\big] \big[\bsA_- (\bsH_0 + \bsI)^{-1/2}\big]^* 
\in \cB\big(L^2(\bbR^2)\big). 
\end{equation}
This proves \eqref{bd}. 
In fact, gathering all terms from \eqref{seq} and \eqref{seq1} results in 
\begin{align}
& (\bsH_0 - z \, \bsI)(\bsH_{1,n} - z \, \bsI)^{-1} 
= \bsI - \big[2 \bsB_n \bsA_- + \bsC_n + \bsB_n^2 - \bsB_n^{\prime}\big]
(\bsH_{1,n} - z \, \bsI)^{-1}    \no \\
& \quad = \bsI - \big[\bsC_n + \bsB_n^2 - \bsB_n^{\prime}\big]
(\bsH_{1,n} - z \, \bsI)^{-1} - 2 \bsB_n \bsA_- (\bsH_0 - z \, \bsI)^{-1}    \no \\
& \qquad + 2 \bsB_n \bsA_- (\bsH_0 - z \, \bsI)^{-1}  
\big[2 \bsA_- \bsB_n - \bsC_n + \bsB_n^2 - \bsB_n^{\prime}\big] 
(\bsH_{1,n} - z \, \bsI)^{-1}   \no \\
& \quad = \bsI - \big[\bsC_n + \bsB_n^2 - \bsB_n^{\prime}\big]
(\bsH_{1,n} - z \, \bsI)^{-1} - 2 \bsB_n \big[\bsA_- (\bsH_0 - z \, \bsI)^{-1}\big]   \no \\
& \qquad + 2 \bsB_n \big[\bsA_- (\bsH_0 - z \, \bsI)^{-1}\big]   
\big[- \bsC_n + \bsB_n^2 - \bsB_n^{\prime}\big] 
(\bsH_{1,n} - z \, \bsI)^{-1}    \no \\
& \qquad + 4 \bsB_n \big[\ol{\bsA_- (\bsH_0 - z \, \bsI)^{-1} \bsA_-}\big] \bsB_n 
(\bsH_{1,n} - z \, \bsI)^{-1}, \quad n \in \bbN.    \lb{seq3}
\end{align}
Since a finite number of products of strongly convergent sequences of (necessarily uniformly) bounded 
operators is strongly convergent, \eqref{seq3} proves the strong convergence in 
\eqref{conv2}.   

Finally, to prove \eqref{conv1} it suffices to combine the strong resolvent convergence in \eqref{limR}, the uniform boundedness in \eqref{bd} with equality \eqref{closure}, the strong convergence   
\begin{equation} 
\slim_{n \to \infty} (\bsH_{1,n}-z \, \bsI)^{-1}(\bsH_{0} - z \, \bsI) f 
= ( \bsH_1 - z \, \bsI)^{-1}(\bsH_{0}-z \, \bsI)f, \quad f \in \dom(\bsH_0), 
\end{equation} 
and the fact that $\dom(\bsH_0)$ is dense in $L^2(\bbR^2)$. Here we 
used that  uniformly bounded sequences of bounded operators in a Hilbert space converge strongly if they converge pointwise on a dense subset of the Hilbert 
space. 
\end{proof}

Next, we recall that 
\begin{align}
 \bsH_2 - \bsH_1 = 2 \bsB^{\prime},  \quad 
 \bsH_{2,n} - \bsH_{1,n} = 2 \bsB_n^{\prime} = 2 \chi_n(\bsA_-) \bsB^{\prime} \chi_n(\bsA_-), 
 \; n \in \bbN,    \lb{2.65} 
\end{align}
and in analogy to \eqref{3.31}--\eqref{3.34} one concludes that 
\begin{equation}
\big[(\bsH_{2,n}-z \, \bsI)^{-1} - (\bsH_{1,n}-z \, \bsI)^{-1}\big] \in \cB_1\big(L^2(\bbR^2)\big), 
\quad n \in \bbN, \; z \in \bbC \backslash [0,\infty),    \lb{2.65a}
\end{equation}
since 
\begin{align}
& (\bsH_{2,n}-z \, \bsI)^{-1} - (\bsH_{1,n}-z \, \bsI)^{-1} 
= - \big[\ol{(\bsH_{1,n}-z \, \bsI)^{-1}(\bsH_0 - z \, \bsI)}\big] \chi_n(\bsA_-)   \no \\
& \quad \times \big[(\bsH_0 -z \, \bsI)^{-1} 2 \bsB^{\prime} (\bsH_0 -z \, \bsI)^{-1} \big] 
\chi_n(\bsA_-)  \big[(\bsH_0 -z \, \bsI)(\bsH_{2,n} - z \, \bsI)^{-1}\big],     \lb{2.65b} \\
& \hspace*{8.2cm} n \in \bbN, \; z \in \bbC \backslash [0,\infty),    \no  
\end{align}
employing commutativity of $\chi_n(\bsA_-)$ and $(\bsH_0 -z \, \bsI)^{-1}$, that is, 
\begin{equation}
\big[\chi_n(\bsA_-), (\bsH_0 -z \, \bsI)^{-1}\big] = 0.    \lb{comm}
\end{equation}

Finally, we proceed to some crucial convergence results to be used in Section \ref{s5}.

\begin{theorem} \lb{t3.7}
Assume Hypothesis \ref{h3.1} and let $z \in \bbC \backslash [0,\infty)$. Then 
\begin{align}
\begin{split} 
& \lim_{n\to\infty} \big\|\big[(\bsH_{2,n} - z \, \bsI)^{-1} - (\bsH_{1,n} - z \, \bsI)^{-1}\big]  \\
& \hspace*{1cm} - [(\bsH_2 - z \, \bsI)^{-1} - (\bsH_1 - z \, \bsI)^{-1}\big]
\big\|_{\cB_1(L^2(\bbR^2))} = 0.    \lb{2.66}
\end{split} 
\end{align}
\end{theorem}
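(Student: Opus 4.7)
The plan is to reduce the quantity in \eqref{2.66} to a product of operators in which a single trace class factor is independent of $n$ and the remaining factors converge strongly and are uniformly bounded; then Lemma~\ref{l3.6} with $p=1$ finishes the job. Introduce
\begin{align*}
\bsR_n &:= \ol{(\bsH_{1,n} - z\bsI)^{-1}(\bsH_0 - z\bsI)}, \quad \bsT_n := (\bsH_0 - z\bsI)(\bsH_{2,n} - z\bsI)^{-1}, \\
\bsM &:= (\bsH_0 - z\bsI)^{-1}\, 2\bsB^{\prime}\, (\bsH_0 - z\bsI)^{-1}, \quad \bsM_n := \chi_n(\bsA_-)\, \bsM\, \chi_n(\bsA_-),
\end{align*}
and let $\bsR, \bsT$ denote the $n$-free analogues of $\bsR_n, \bsT_n$. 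The factorizations \eqref{3.35} and \eqref{2.65b} then read
\begin{align*}
(\bsH_2 - z\bsI)^{-1} - (\bsH_1 - z\bsI)^{-1} &= - \bsR\,\bsM\,\bsT, \\
(\bsH_{2,n} - z\bsI)^{-1} - (\bsH_{1,n} - z\bsI)^{-1} &= - \bsR_n\,\bsM_n\,\bsT_n,
\end{align*}
so the quantity in \eqref{2.66} equals $\|\bsR_n \bsM_n \bsT_n - \bsR \bsM \bsT\|_{\cB_1(L^2(\bbR^2))}$.

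The key observation is that $\bsM \in \cB_1(L^2(\bbR^2))$ is a \emph{fixed} trace class operator; this is exactly the content of the argument leading to \eqref{3.32}, with input $\theta^\prime \phi \in L^1(\bbR^2)$. Since $\bsM \in \cB_1$, $\slim_{n \to \infty}\chi_n(\bsA_-) = \bsI$, and $\|\chi_n(\bsA_-)\|_{\cB(L^2(\bbR^2))} \le 1$, the standard two-sided continuity for multiplication against a trace class operator (a special case of Lemma~\ref{l3.6}) yields $\|\bsM_n - \bsM\|_{\cB_1(L^2(\bbR^2))} \to 0$.

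For the outer factors, Lemma~\ref{l3.5}\,$(ii)$ supplies uniform boundedness of $\{\bsR_n\}_{n\in\bbN}$ and $\{\bsT_n\}_{n\in\bbN}$ in $\cB(L^2(\bbR^2))$, together with the strong convergences \eqref{conv1}, \eqref{conv2}. Applying \eqref{conv1} also with $j = 2$ and $z$ replaced by $\bar z$ gives $\slim_{n\to\infty}\bsT_n^* = \bsT^*$, so Lemma~\ref{l3.6} with $R_n := \bsR_n$, $S_n := \bsM_n$, $T_n := \bsT_n^*$, and $p = 1$ yields
\begin{equation*}
\|\bsR_n\, \bsM_n\, \bsT_n - \bsR\,\bsM\,\bsT\|_{\cB_1(L^2(\bbR^2))} \to 0,
\end{equation*}
which is \eqref{2.66}.

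The main obstacle is not the final assembly but the uniform boundedness of $\bsR_n$ and $\bsT_n$, which is delicate because the intermediate expression $\bsB_n \bsA_-(\bsH_{j,n} - z\bsI)^{-1}$ involves the unbounded operator $\bsA_-$. That uniform bound, the technical heart of Lemma~\ref{l3.5}\,$(ii)$, rests on the iterated resolvent identity together with the boundedness of $\ol{\bsA_-(\bsH_0 + \bsI)^{-1}\bsA_-}$. Once Lemma~\ref{l3.5} is in place, the $\cB_1$-convergence displayed above is essentially routine.
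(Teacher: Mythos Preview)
Your proof is correct and follows essentially the same approach as the paper's: factor each resolvent difference as a bounded outer factor times a fixed trace class middle $\bsM=(\bsH_0-z\bsI)^{-1}2\bsB'(\bsH_0-z\bsI)^{-1}$ (sandwiched by $\chi_n(\bsA_-)$ in the approximate case) times another bounded outer factor, then invoke Lemma~\ref{l3.6} together with the strong convergence and uniform bounds from Lemma~\ref{l3.5}\,$(ii)$. The only cosmetic difference is that the paper writes the resolvent identity with the roles of $\bsH_1$ and $\bsH_2$ swapped in the outer factors, and it is slightly less explicit than you are about passing to adjoints to match the precise form of Lemma~\ref{l3.6}.
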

\begin{proof}
An application of \eqref{2.65} and the resolvent equation for the difference of 
resolvents in \eqref{2.66} yield  
\begin{align}
& \big[(\bsH_{2,n} - z \, \bsI)^{-1} - (\bsH_{1,n} - z \, \bsI)^{-1}\big]
 - [(\bsH_2 - z \, \bsI)^{-1} - (\bsH_1 - z \, \bsI)^{-1}\big]    \no \\
 & \quad = - 2 (\bsH_{2,n} - z \, \bsI)^{-1} \bsB_n^{\prime} (\bsH_{1,n} - z \, \bsI)^{-1} 
 + 2 (\bsH_2 - z \, \bsI)^{-1} \bsB^{\prime} (\bsH_1 - z \, \bsI)^{-1}   \no \\
 & \quad = - 2\ol{\big[(\bsH_{2,n} - z \, \bsI)^{-1} (\bsH_0 - z \, \bsI)\big]}    \no \\
& \qquad \quad \times  \big\{\chi_n(\bsA_-) (\bsH_0 - z \, \bsI)^{-1} 
\bsB^{\prime} (\bsH_0 - z \, \bsI)^{-1} 
 \chi_n(\bsA_-)\big\}    \no \\
& \qquad \quad \times \big[(\bsH_0 - z \, \bsI) (\bsH_{1,n} - z \, \bsI)^{-1}\big]      \\
& \qquad + 2\ol{\big[(\bsH_2 - z \, \bsI)^{-1} (\bsH_0 - z \, \bsI)\big]}    \no \\
& \qquad \quad \times  \big\{(\bsH_0 - z \, \bsI)^{-1} \bsB^{\prime} 
(\bsH_0 - z \, \bsI)^{-1}\big\}    \no \\
& \qquad \quad \times \big[(\bsH_0 - z \, \bsI) (\bsH_1 - z \, \bsI)^{-1}\big], \quad 
z \in \bbC \backslash [0,\infty).  
\end{align}
By Lemma \ref{l3.6}, the term $\big\{\chi_n(\bsA_-) (\bsH_0 - z \, \bsI)^{-1} 
\bsB^{\prime} (\bsH_0 - z \, \bsI)^{-1} 
 \chi_n(\bsA_-)\big\}$ converges to $\big\{(\bsH_0 - z \, \bsI)^{-1} \bsB^{\prime} 
(\bsH_0 - z \, \bsI)^{-1}\big\}$ in $\cB_1\big(L^2(\bbR^2)\big)$-norm as $n\to \infty$.
Another application of Lemma \ref{l3.6} proves \eqref{2.66} since by Lemma \ref{l3.5}\,$(ii)$, for $z \in \bbC \backslash [0,\infty)$, one has  
\begin{align}
& \slim_{n\to\infty} \big[(\bsH_0 - z \, \bsI) (\bsH_{1,n} - z \, \bsI)^{-1}\big] = 
\big[(\bsH_0 - z \, \bsI) (\bsH_1 - z \, \bsI)^{-1}\big],    \lb{2.69} \\
& \slim_{n\to\infty} \ol{\big[(\bsH_{2,n} - z \, \bsI)^{-1} (\bsH_0 - z \, \bsI)\big]} =  
\ol{\big[(\bsH_2 - z \, \bsI)^{-1} (\bsH_0 - z \, \bsI)\big]}.     \lb{2.70}
\end{align} 
\end{proof}

\begin{theorem} \lb{t3.8}
Assume Hypothesis \ref{h3.1} and let $z, z' \in \bbC \backslash [0,\infty)$. Then 
\begin{align}
& \lim_{n\to\infty} \big\|\bsB_n^{\prime} (\bsH_{j,n} - z \, \bsI)^{-1} 
- \bsB^{\prime} (\bsH_j - z \, \bsI)^{-1}
\big\|_{\cB_2(L^2(\bbR^2))} = 0, \quad j = 1,2,   \lb{2.71} \\
& \lim_{n\to\infty} \big\|(\bsH_{j,n} - z \, \bsI)^{-1} 2 \bsB_n^{\prime} (\bsH_{j,n} - z' \, \bsI)^{-1} 
\no \\
& \hspace*{1cm} - (\bsH_j - z \, \bsI)^{-1} 2 \bsB^{\prime} (\bsH_j - z' \, \bsI)^{-1}
\big\|_{\cB_1(L^2(\bbR^2))} = 0,  \quad j = 1,2.    \lb{2.72}
\end{align}
\end{theorem}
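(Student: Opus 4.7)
\textbf{Proof plan for Theorem \ref{t3.8}.} The plan is to imitate the sandwich argument of Theorem \ref{t3.7}: insert factors of $(\bsH_0-z\bsI)$ to transfer the Schatten-norm analysis from the full operators $\bsH_{j,n}, \bsH_j$ onto the free Laplacian $\bsH_0$, where the relevant $\cB_p$ bounds follow from \cite[Theorem 4.1]{Si05} together with the commutation relation \eqref{comm}, and then pass to the limit via Lemma \ref{l3.5}\,$(ii)$ and Lemma \ref{l3.6}. Throughout, Hypothesis \ref{h3.1} provides the two integrability facts that drive everything: $\theta'\phi \in L^2(\bbR^2)$ (since $\theta' \in L^{\infty}\cap L^1$ and $\phi\in L^{\infty}\cap L^1$ imply both belong to $L^2(\bbR)$) and $|\theta'\phi|^{1/2}\in L^2(\bbR^2)$ (since $\theta'\phi\in L^1(\bbR^2)$).

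For \eqref{2.71} I would factor, for $z\in\bbC\setminus[0,\infty)$,
\begin{equation*}
\bsB_n^{\prime}(\bsH_{j,n}-z\bsI)^{-1} = \bigl[\bsB_n^{\prime}(\bsH_0-z\bsI)^{-1}\bigr]\bigl[(\bsH_0-z\bsI)(\bsH_{j,n}-z\bsI)^{-1}\bigr],
\end{equation*}
and likewise for the limit. By \eqref{comm} the first bracket equals $\chi_n(\bsA_-)\bsB^{\prime}(\bsH_0-z\bsI)^{-1}\chi_n(\bsA_-)$. Since $\theta'\phi\in L^2(\bbR^2)$ and $(|\,\cdot\,|^2-z)^{-1}\in L^2(\bbR^2)$, \cite[Theorem 4.1]{Si05} applied in dimension two yields $\bsB^{\prime}(\bsH_0-z\bsI)^{-1}\in\cB_2(L^2(\bbR^2))$. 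Combined with $\slim_{n\to\infty}\chi_n(\bsA_-)=\bsI$, Lemma \ref{l3.6} (in $\cB_2$) gives $\cB_2$-norm convergence of the first bracket to $\bsB^{\prime}(\bsH_0-z\bsI)^{-1}$. The second bracket is uniformly bounded in $n$ and strongly convergent by Lemma \ref{l3.5}\,$(ii)$, so a final application of Lemma \ref{l3.6} delivers \eqref{2.71}.

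For \eqref{2.72} I would insert two copies of the free operator:
\begin{equation*}
(\bsH_{j,n}-z\bsI)^{-1}\,2\bsB_n^{\prime}(\bsH_{j,n}-z'\bsI)^{-1} = \ol{\bigl[(\bsH_{j,n}-z\bsI)^{-1}(\bsH_0-z\bsI)\bigr]}\,\bsK_n(z,z')\,\bigl[(\bsH_0-z'\bsI)(\bsH_{j,n}-z'\bsI)^{-1}\bigr],
\end{equation*}
with $\bsK_n(z,z') := 2(\bsH_0-z\bsI)^{-1}\bsB_n^{\prime}(\bsH_0-z'\bsI)^{-1}$. Decomposing $\theta'\phi = \sgn(\theta'\phi)|\theta'\phi|^{1/2}\cdot|\theta'\phi|^{1/2}$ with $|\theta'\phi|^{1/2}\in L^2(\bbR^2)$, \cite[Theorem 4.1]{Si05} shows that both $|\theta'\phi|^{1/2}(\bsH_0-z'\bsI)^{-1}$ and $(\bsH_0-z\bsI)^{-1}\sgn(\theta'\phi)|\theta'\phi|^{1/2}$ lie in $\cB_2(L^2(\bbR^2))$, hence $(\bsH_0-z\bsI)^{-1}\bsB^{\prime}(\bsH_0-z'\bsI)^{-1}\in\cB_1(L^2(\bbR^2))$. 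Commuting $\chi_n(\bsA_-)$ through both free resolvents via \eqref{comm} gives $\bsK_n(z,z') = 2\chi_n(\bsA_-)(\bsH_0-z\bsI)^{-1}\bsB^{\prime}(\bsH_0-z'\bsI)^{-1}\chi_n(\bsA_-)$, and Lemma \ref{l3.6} (now in $\cB_1$) yields $\cB_1$-norm convergence to $2(\bsH_0-z\bsI)^{-1}\bsB^{\prime}(\bsH_0-z'\bsI)^{-1}$. The outer closures/factors are uniformly bounded and strongly convergent by Lemma \ref{l3.5}\,$(ii)$ (after taking adjoints in the first factor, using self-adjointness of $\chi_n(\bsA_-)$ and the strong resolvent convergence at $\bar z$), and one more application of Lemma \ref{l3.6} completes the proof.

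The main technical obstacle is the Schatten class input: in two dimensions $(\bsH_0-z\bsI)^{-1}$ is not compact, so one cannot extract Hilbert--Schmidt or trace class properties from mere boundedness of $\bsB^{\prime}$. The whole argument rests on squeezing out $L^2$ (respectively $L^1$) integrability of $\theta'\phi$ from Hypothesis \ref{h3.1} so that \cite[Theorem 4.1]{Si05} applies to the pairing with the free resolvent; once those $\cB_2$/$\cB_1$ memberships are in hand, the commutation \eqref{comm} together with Lemmas \ref{l3.5}\,$(ii)$ and \ref{l3.6} reduce the passage to the limit to a routine bookkeeping exercise.
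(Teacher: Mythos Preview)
Your proof is correct and follows essentially the same approach as the paper: the same sandwich factorization inserting $(\bsH_0-z\bsI)^{\pm 1}$, the same use of \eqref{comm} to pull the $\chi_n(\bsA_-)$ factors outside, the same invocation of \cite[Theorem~4.1]{Si05} for the $\cB_2$/$\cB_1$ memberships of $\bsB'(\bsH_0-z\bsI)^{-1}$ and $(\bsH_0-z\bsI)^{-1}\bsB'(\bsH_0-z'\bsI)^{-1}$, and the same closing appeal to Lemma~\ref{l3.5}\,$(ii)$ and Lemma~\ref{l3.6}. Your remark about adjoints for the first outer factor is precisely the content of \eqref{conv1}.
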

\begin{proof}
To prove \eqref{2.71} one writes
\begin{align}
\begin{split} 
& \bsB_n^{\prime} (\bsH_{j,n} - z \, \bsI)^{-1} = \chi_n(\bsA_-) \big[\bsB^{\prime}  
 (\bsH_0 - z \, \bsI)^{-1}\big]    \\
& \quad \times \chi_n(\bsA_-) \big[(\bsH_0 - z \, \bsI) (\bsH_{j,n} - z \, \bsI)^{-1}\big], 
\quad j =1,2, \; n \in \bbN, 
\end{split} 
\end{align}
employing once again commutativity of $\chi_n(\bsA_-)$ and $(\bsH_0 - z \, \bsI)^{-1}$ 
(cf.\ \eqref{comm}). Since 
\begin{equation} 
\bsB^{\prime} (\bsH_0 - z \, \bsI)^{-1} \in \cB_2\big(L^2(\bbR^2)\big), \quad 
z \in \bbC \backslash [0,\infty),    \lb{2.74}
\end{equation} 
by \cite[Theorem~4.1]{Si05}, \eqref{2.71} is a consequence of Lemma \ref{l3.6} combined with 
\begin{equation} 
\slim_{n \to \infty} \chi_n(\bsA_-) = \bsI    \lb{2.75} 
\end{equation}
(applying the spectral theorem, see also \eqref{slim}) and \eqref{conv2}. 

Relation \eqref{2.72} follows along exactly the same lines upon decomposing 
\begin{align}
& (\bsH_{j,n} - z \, \bsI)^{-1} 2 \bsB_n^{\prime} (\bsH_{j,n} - z' \, \bsI)^{-1} = 
\big[\ol{(\bsH_{j,n} - z \, \bsI)^{-1} (\bsH_0 - z \, \bsI)}\big] \chi_n(\bsA_-)      \no \\
& \quad \times \big[(\bsH_0 - z \, \bsI)^{-1} 2 \bsB^{\prime} (\bsH_0 - z' \, \bsI)^{-1}\big] 
\chi_n(\bsA_-) \big[(\bsH_0 - z' \, \bsI) (\bsH_{j,n} - z' \, \bsI)^{-1}\big],     \no \\
& \hspace*{8.6cm}  j =1,2, \; n \in \bbN,     
\end{align} 
applying once more Lemma \ref{l3.6}, \eqref{conv1}, \eqref{conv2}, \eqref{comm}, \eqref{2.75}, 
and 
\begin{align} 
\begin{split}  
& (\bsH_0 - z \, \bsI)^{-1} \bsB^{\prime} (\bsH_0 - z' \, \bsI)^{-1} 
= \big[(\bsH_0 - z \, \bsI)^{-1} |\bsB^{\prime}|^{1/2}\big] \sgn(\bsB^{\prime})    \\
& \quad \times \big[ |\bsB^{\prime}|^{1/2} (\bsH_0 - z' \, \bsI)^{-1}\big] 
\in \cB_1\big(L^2(\bbR^2)\big), \quad 
z, z' \in \bbC \backslash [0,\infty),    \lb{2.77}
\end{split} 
\end{align} 
since also $|\bsB^{\prime}|^{1/2} (\bsH_0 - z \, \bsI)^{-1} \in \cB_2\big(L^2(\bbR^2)\big)$,  
$z \in \bbC \backslash [0,\infty)$, in analogy to \eqref{2.74}. 
\end{proof}

\section{The Computation of  $\xi(\, \cdot \, ; \bsH_2, \bsH_1)$} \lb{s5}

Given the results of Sections \ref{s3} and \ref{s4}, and Appendix \ref{sA}, we now turn to 
the computation of $\xi(\, \cdot \, ; \bsH_2, \bsH_1)$ and the proof of the key result \eqref{6}.

\begin{theorem}\lb{t4.4}
Assume Hypothesis \ref{h3.1}. Then, for $($Lebesgue\,$)$ a.e.~$\lambda > 0$ and 
a.e.~$\nu \in \bbR$, $($cf.\ \eqref{B.21}$)$,
\begin{equation}
\xi(\lambda; \bsH_2, \bsH_1) = \xi (\nu; A_+, A_-) = \f{1}{2 \pi} \int_{\bbR} dx \, \phi(x).    \lb{4.1}
\end{equation}
\end{theorem}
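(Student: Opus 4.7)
The plan is to follow the strategy outlined in Section \ref{s2}\,\textbf{IV}: derive the Pushnitski-type relation \eqref{4} by passing to the limit $n \to \infty$ in the approximate trace formula \eqref{1}, and then invoke the constancy of $\xi(\,\cdot\,;A_+,A_-)$ established in Section \ref{s3} to evaluate \eqref{4} in closed form. First I would record that, for every $n\in\bbN$, the approximants $(\bsH_{2,n},\bsH_{1,n})$ and $(A_{+,n},A_-)$ satisfy the relative trace class hypotheses of \cite{CGPST14a} and \cite{Pu08} (via the bounds \eqref{2.57}, \eqref{2.65a}, and \eqref{AB1n}), so that the identity \eqref{1} holds at each $n$.

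The core step is to pass to the limit $n\to\infty$ on both sides of \eqref{1}. On the left-hand side, Theorem \ref{t3.7} provides $\cB_1\big(L^2(\bbR^2)\big)$-norm convergence of the resolvent difference $(\bsH_{2,n}-z\bsI)^{-1}-(\bsH_{1,n}-z\bsI)^{-1}$, which combined with the Krein--Lifshits trace formula \eqref{3.45} yields
\begin{equation*}
\lim_{n\to\infty}\int_{[0,\infty)} \f{\xi(\lambda;\bsH_{2,n},\bsH_{1,n})\, d\lambda}{(\lambda-z)^2}
=\int_{[0,\infty)} \f{\xi(\lambda;\bsH_2,\bsH_1)\, d\lambda}{(\lambda-z)^2}.
\end{equation*}
On the right-hand side, I would apply Corollary \ref{cB.9} with the test function $g(\nu)=(\nu^2-z)^{-3/2}$: for $z\in\bbC\backslash[0,\infty)$ this $g$ lies in $L^\infty(\bbR)$ with $\esssup_{\nu\in\bbR}|(\nu^2+1)g(\nu)|<\infty$, so \eqref{B.119} delivers convergence of the right-hand side of \eqref{1} to $\tfrac12\int_\bbR \xi(\nu;A_+,A_-)(\nu^2-z)^{-3/2}\,d\nu$. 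The two limits together produce the unregularized identity \eqref{3}.

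A Stieltjes inversion argument, analogous to the one used in Section \ref{s3} to pass from \eqref{3.30} to \eqref{B.21}, converts \eqref{3} into the Pushnitski-type integral representation \eqref{4}. Substituting the now-fixed constant value $\xi(\nu;A_+,A_-)=\frac{1}{2\pi}\int_\bbR \phi(x)\,dx$ obtained at the end of Section \ref{s3} into \eqref{4} and computing
\begin{equation*}
\frac{1}{\pi}\int_{-\lambda^{1/2}}^{\lambda^{1/2}}\frac{d\nu}{(\lambda-\nu^2)^{1/2}} = 1
\end{equation*}
via the substitution $\nu=\lambda^{1/2}\sin\theta$ then delivers \eqref{4.1}.

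I expect the main obstacle to lie in justifying the limit on the right-hand side of \eqref{1}: $\xi(\,\cdot\,;A_{+,n},A_-)$ is a priori controlled only in the weighted space $L^1(\bbR;(\nu^2+1)^{-1}d\nu)$ via Theorem \ref{tB.8}, so pointwise or total variation limiting is not directly available. Corollary \ref{cB.9} is tailored precisely to this situation, and the $O(|\nu|^{-3})$ decay of the Cauchy-type kernel $(\nu^2-z)^{-3/2}$ is exactly strong enough to meet its hypothesis. The trace-norm limit on the left-hand side, although conceptually cleaner, rests on the uniform boundedness and strong convergence statements of Lemma \ref{l3.5}\,$(ii)$ which underpin Theorem \ref{t3.7}; keeping track of the commutator term $\bsC_n$ in the decomposition \eqref{HwithC} is the technically delicate ingredient there.
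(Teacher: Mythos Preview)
Your approach is correct and, interestingly, hews closer to the outline in Section~\ref{s2}\,\textbf{IV} than the paper's own proof does. The differences are worth noting. You take the limit directly in the differentiated relation \eqref{1}/\eqref{krn}: Theorem~\ref{t3.7} handles the left side via Krein--Lifshits, and Corollary~\ref{cB.9} with $g(\nu)=(\nu^2-z)^{-3/2}$ handles the right. The paper instead first passes from \eqref{krn} to the \emph{integrated} relation \eqref{krtrn} (with kernels $(\lambda-z)^{-1}-(\lambda-z_0)^{-1}$ and $(\nu^2-z)^{-1/2}-(\nu^2-z_0)^{-1/2}$) and takes the limit there; this costs the additional input of Theorem~\ref{t3.8} together with the determinant machinery \eqref{det}--\eqref{detlim} from Appendix~\ref{sA} to control the left-hand limit \eqref{trH}. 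Having obtained \eqref{xixi}, the paper then substitutes the constant $c_0$ \emph{before} any inversion, computes the right side in closed form as $-c_0\ln(z/z_0)$, and concludes by the uniqueness of the measure in a Herglotz-type representation. In particular, as noted in Remark~\ref{r4.5}, the paper never establishes \eqref{4} as an intermediate step.

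What your route buys is that you avoid Theorem~\ref{t3.8} and the modified-determinant apparatus entirely; what the paper's route buys is that it sidesteps the Pushnitski--Abel inversion $\eqref{3}\Rightarrow\eqref{4}$. One small correction: that inversion is \emph{not} analogous to the elementary Stieltjes argument \eqref{3.30}$\to$\eqref{B.21}; it is the substantially more involved Abel-transform computation carried out in \cite[proof of Theorem~8.2]{GLMST11} (or \cite{Pu08}), and you should cite it as such rather than point to Section~\ref{s3}. With that citation in place your argument is complete, and indeed lighter than the paper's.
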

\begin{proof} 
Due to the fact that \eqref{2.57}, \cite{CGPST14} and \cite{Pu08} apply,  we have the approximate trace formula,
\begin{align}
\begin{split} 
& \tr_{L^2(\bbR^2)}\big((\bsH_{2,n} - z \, \bsI)^{-1}-(\bsH_{1,n} - z \, 
\bsI)^{-1}\big)    \\
& \quad = \f{1}{2z} \tr_{L^2(\bbR)} \big(g_z(A_{+,n})-g_z(A_-)\big),   \quad 
n \in \bbN, \; z\in \bbC \backslash [0,\infty),     \lb{trn}
\end{split} 
\end{align} 
with
\begin{equation}   
g_z(x) = x(x^2-z)^{-1/2}, \quad z\in\C\backslash [0,\infty), \; x\in\bbR.   
\end{equation}

Relation \eqref{trn} and the Krein--Lifshits trace formula yield  
\begin{equation}
\int_{[0,\infty)} \f{\xi(\lambda; \bsH_{2,n}, \bsH_{1,n}) d\lambda}{(\lambda - z)^2} 
= \f{1}{2} \int_{\bbR} \f{\xi(\nu; A_{+,n}, A_-) d\nu}{(\nu^2 -z)^{3/2}}, \quad 
n \in \bbN, \; z \in \bbC \backslash [0,\infty).      \lb{krn}
\end{equation}
As shown in the course of the proof of  Theorem\ 8.2 in \cite{GLMST11}, \eqref{krn} 
implies the relation 
\begin{align} 
& \int_{[0,\infty)} \xi(\lambda; \bsH_{2,n}, \bsH_{1,n}) d \lambda \, 
\big[(\lambda - z)^{-1} - (\lambda - z_0)^{-1}\big]     \no \\
& \quad = \int_{\bbR} \xi(\nu; A_{+,n}, A_-) d\nu \, \big[(\nu^2 - z)^{-1/2} - (\nu^2 - z_0)^{-1/2}\big],    \lb{krtrn} \\
& \hspace*{5cm}  n \in \bbN, \; z, z_0 \in \bbC \backslash [0,\infty).   \no
\end{align}

Combining Theorem \ref{t3.7} with the Krein--Lifshits trace formula 
\eqref{3.45} (for the pair $(\bsH_2,\bsH_1)$ as well as the pairs  
$(\bsH_{2,n}, \bsH_{1,n})$, $n \in \bbN$), yields
\begin{align}
& \lim_{n\to\infty} \int_{[0,\infty)} \f{\xi(\lambda; \bsH_{2,n}, \bsH_{1,n}) 
d\lambda}{(\lambda - z)^2}    \no \\
& \quad = - \lim_{n\to\infty} \tr_{L^2(\bbR^2)}\big((\bsH_{2,n} - z \, \bsI)^{-1} 
- (\bsH_{1,n} - z \, \bsI)^{-1}\big)    \no \\
& \quad = - \tr_{L^2(\bbR^2)}\big((\bsH_2 - z \, \bsI)^{-1} - (\bsH_1 - z \, \bsI)^{-1}\big)  \no \\
& \quad = \int_{[0,\infty)} \f{\xi(\lambda; \bsH_2, \bsH_1) 
d\lambda}{(\lambda - z)^2}, \quad z \in \bbC \backslash \bbR.       \lb{trlimH} 
\end{align}
Lemma \ref{l3.5}\,$(i)$ and Theorem \ref{t3.8} imply that the pairs of self-adjoint operators 
$(\bsH_{2,n}, \bsH_{1,n})$, $n \in \bbN$, and $(\bsH_2,\bsH_1)$ satisfy the hypotheses 
\eqref{detcont}, \eqref{det},\eqref{B2conv}, \eqref{B1conve} (identifying the pairs 
$(A_n, A_{0,n})$ and $(A,A_0)$ with the pairs $(\bsH_{2,n}, \bsH_{1,n})$ and 
$(\bsH_2,\bsH_1)$, respectively). Thus, an application of \eqref{detlim} to the pairs 
$(\bsH_{2,n}, \bsH_{1,n})$ and $(\bsH_2,\bsH_1)$ implies 
\begin{align}
\begin{split} 
& \lim_{n\to\infty} \int_{[0,\infty)} \xi(\lambda; \bsH_{2,n}, \bsH_{1,n}) 
d\lambda \, \big[(\lambda - z)^{-1} - (\lambda - z_0)^{-1}\big]     \\
& \quad = \int_{[0,\infty)} \xi(\lambda; \bsH_2, \bsH_1) 
d\lambda \, \big[(\lambda - z)^{-1} - (\lambda - z_0)^{-1}\big] , 
\quad z, z_0 \in \bbC \backslash \bbR.     \lb{trH} 
\end{split} 
\end{align}
On the other hand, since 
\begin{equation}
\big[(\nu^2 - z)^{-1/2} - (\nu^2 - z_0)^{-1/2}\big] \underset{|\nu| \to \infty}{=} 
\Oh\big(|\nu|^{-3}\big), \quad z, z_0 \in \bbC \backslash \bbR, 
\end{equation}
and hence $(\nu^2 + 1) \big[(\nu^2 - z)^{-1/2} - (\nu^2 - z_0)^{-1/2}\big]$ is uniformly 
bounded for $\nu \in \bbR$, \eqref{B.119} yields 
\begin{align} 
\begin{split}
& \lim_{n \to \infty} \int_{\bbR} \xi(\nu; A_{+,n}, A_-) d\nu \, 
\big[(\nu^2 - z)^{-1/2} - (\nu^2 - z_0)^{-1/2}\big]    \\
& \quad = \int_{\bbR} \xi(\nu; A_+, A_-) d\nu \, \big[(\nu^2 - z)^{-1/2} - (\nu^2 - z_0)^{-1/2}\big], 
\quad z, z_0 \in \bbC \backslash [0,\infty).    \lb{xiA} 
\end{split}
\end{align}
Thus, combining \eqref{krtrn}, \eqref{trH}, and \eqref{xiA} one finally obtains  
\begin{align} 
& \int_{[0,\infty)} \xi(\lambda; \bsH_2, \bsH_1) d \lambda 
\, \big[(\lambda - z)^{-1} - (\lambda - z_0)^{-1}\big]    \no \\
& \quad = \lim_{n \to \infty} \int_{[0,\infty)} \xi(\lambda; \bsH_{2,n}, \bsH_{1,n}) d \lambda 
\, \big[(\lambda - z)^{-1} - (\lambda - z_0)^{-1}\big]    \no \\
& \quad = \lim_{n \to \infty} \int_{\bbR} \xi(\nu; A_{+,n}, A_-) d\nu \, 
\big[(\nu^2 - z)^{-1/2} - (\nu^2 - z_0)^{-1/2}\big]    \no \\
& \quad = \int_{\bbR} \xi(\nu; A_+, A_-) d\nu \, \big[(\nu^2 - z)^{-1/2} - (\nu^2 - z_0)^{-1/2}\big], 
\quad z, z_0 \in \bbC \backslash [0,\infty).    \lb{xixi} 
\end{align}
At this point we invoke that
$\xi(\nu; A_+, A_-) = \f{1}{2 \pi} \int_{\bbR} dx \, \phi(x) := c_0$, $\nu \in \bbR$, and hence \eqref{xixi} reduces to 
\begin{align} 
\begin{split}
& \int_{[0,\infty)} \xi(\lambda; \bsH_2, \bsH_1) d \lambda 
\, \big[(\lambda - z)^{-1} - (\lambda - z_0)^{-1}\big]     \\
& \quad = c_0 \int_{\bbR} d\nu \big[(\nu^2 - z)^{-1/2} - (\nu^2 - z_0)^{-1/2}\big], 
\quad z, z_0 \in \bbC \backslash [0,\infty).    \lb{xiH} 
\end{split} 
\end{align}
Employing 
\begin{equation}
\int^x dy \, (y^2 - z)^{-1/2} = \ln\big(2(x^2 - z)^{1/2} + 2x\big) + C, \quad z \in \bbC \backslash [0,\infty)
\end{equation}
(see, e.g., \cite[No.\ 2.261]{GR80}), \eqref{xiH} implies  
\begin{equation} 
\int_{[0,\infty)} \xi(\lambda; \bsH_2, \bsH_1) d \lambda \, 
\big[(\lambda - z)^{-1} - (\lambda - z_0)^{-1}\big] = - c_0 \, \ln(z/z_0), 
\quad z, z_0 \in\bbC \backslash [0,\infty).     \lb{xiHc0}
\end{equation}
The elementary fact, 
\begin{equation}
c_0 \int_{[0,\infty)} d \lambda 
\big[(\lambda - z)^{-1} - (\lambda - z_0)^{-1}\big] = - c_0 \, \ln(z/z_0), 
\quad z, z_0 \in\bbC \backslash [0,\infty),   
\end{equation}
together with the uniqueness of the measure in functions with a representation such as 
\eqref{xiHc0} (e.g., via the Stieltjes inversion formula, see the discussion in \cite{AD56} and 
in \cite[Appendix~B]{We80}) applied to the a.c. measures  
$\xi(\lambda; \bsH_2, \bsH_1) \, d \lambda$ and $c_0 \, d\lambda$, respectively,  
then yields $\xi(\lambda; \bsH_2, \bsH_1) = c_0$ for a.e.~$\lambda > 0$, 
completing the proof of \eqref{4.1}.  
\end{proof}

\begin{remark} \lb{r4.5}
Although we did not have to use this in the proof of Theorem \ref{t4.4}, we note 
that \eqref{krn} also implies the approximate version of a Pushnitski-type formula, 
\begin{equation}
\xi(\lambda; \bsH_{2,n}, \bsH_{1,n}) = \f{1}{\pi} \int_{- \lambda^{1/2}}^{\lambda^{1/2}} 
\f{\xi(\nu; A_{+,n}, A_-) d \nu}{(\lambda - \nu^2)^{1/2}} \, \text{ for a.e.~$\lambda > 0$, 
$n\in\bbN$.}
\end{equation}
${}$ \hfill $\Diamond$
\end{remark}

\section{The Witten Index} \lb{s6}

In this section we briefly discuss the Witten index for the $(1+1)$-dimensional model 
under consideration following the detailed treatment in \cite{CGPST14a}.

\begin{definition} \lb{d8.1} 
Let $T$ be a closed, linear, densely defined operator in $\cH$ and  
suppose that for some $($and hence for all\,$)$ 
$z \in \bbC \backslash [0,\infty)$,  
\begin{equation} 
\big[(T^* T - z I_{\cH})^{-1} - (TT^* - z I_{\cH})^{-1}\big] \in \cB_1(\cH).   \lb{8.1} 
\end{equation}  
Then introducing the resolvent regularization 
\begin{equation}
\Delta_r(T, \lambda) = (- \lambda) \tr_{\cH}\big((T^* T - \lambda I_{\cH})^{-1}
- (T T^* - \lambda I_{\cH})^{-1}\big), \quad \lambda < 0,        \lb{8.2} 
\end{equation} 
the resolvent regularized Witten index $W_r (T)$ of $T$ is defined by  
\begin{equation} 
W_r(T) = \lim_{\lambda \uparrow 0} \Delta_r(T, \lambda),      \lb{8.3}
\end{equation}
whenever this limit exists. 
\end{definition} 

Here, in obvious notation, the subscript ``$r$'' indicates the use of the resolvent 
regularization (for a semigroup or heat kernel regularization we refer to \cite{CGPST14a}).
Before proceeding to compute the Witten index for the $(1+1)$-dimensional model, 
we recall the known consistency between the Fredholm and Witten index 
whenever $T$ is Fredholm:

\begin{theorem} $($\cite{BGGSS87}, \cite{GS88}.$)$  \lb{t8.2} 
Suppose that $T$ is a Fredholm operator in $\cH$. If \eqref{8.1} holds, then the 
resolvent regularized Witten index $W_r(T)$ exists, equals the Fredholm index, 
$\ind (T)$, of $T$, and
\begin{equation} 
W_r(T) =  \ind (T) = \xi(0_+; T T^*, T^* T).    \lb{8.4}
\end{equation}
\end{theorem}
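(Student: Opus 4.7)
The plan is to exploit the spectral symmetry $\sigma(T^*T) \setminus \{0\} = \sigma(TT^*) \setminus \{0\}$ (with matching multiplicities on the positive spectrum), which together with the Fredholm hypothesis reduces the computation of $\Delta_r(T,\lambda)$ to a finite-dimensional calculation over the kernels. Specifically, polar decomposition $T = V|T|$ produces a partial isometry $V$ with $V^*V = P_{(\ker T)^\perp}$ and $VV^* = P_{(\ker T^*)^\perp}$ intertwining $V T^*T = TT^* V$, so the restrictions of $T^*T$ and $TT^*$ to the orthogonal complements of their finite-dimensional kernels are unitarily equivalent. Since $T$ is Fredholm, $\dim \ker(T)$ and $\dim \ker(T^*)$ are both finite, and $\ind(T) = \dim\ker(T) - \dim\ker(T^*)$.

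The key step is to establish the exact identity
\begin{equation}
\tr_{\cH}\bigl((T^*T - \lambda I_{\cH})^{-1} - (TT^* - \lambda I_{\cH})^{-1}\bigr) = \frac{\ind(T)}{-\lambda}, \quad \lambda < 0.
\end{equation}
Hypothesis \eqref{8.1} ensures the left-hand side is well-defined as a trace. Decomposing $\cH = \ker(T) \oplus \ker(T)^\perp = \ker(T^*) \oplus \ker(T^*)^\perp$, each resolvent splits into a kernel part acting as multiplication by $(-\lambda)^{-1}$ on a finite-dimensional subspace, plus a complementary part which, by the unitary equivalence via $V$, contributes identically to both traces and therefore cancels. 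The surviving contribution is $(-\lambda)^{-1}[\dim\ker(T) - \dim\ker(T^*)] = \ind(T)/(-\lambda)$. Substituting into \eqref{8.2} yields $\Delta_r(T,\lambda) \equiv \ind(T)$ for every $\lambda < 0$, whence $W_r(T) = \ind(T)$ follows trivially from \eqref{8.3}.

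For the final equality $\ind(T) = \xi(0_+; TT^*, T^*T)$, I would combine the preceding formula with the Krein--Lifshits trace formula
\begin{equation}
\tr_{\cH}\bigl((T^*T - \lambda I_{\cH})^{-1} - (TT^* - \lambda I_{\cH})^{-1}\bigr) = \int_{[0,\infty)} \frac{\xi(\mu; TT^*, T^*T)\, d\mu}{(\mu - \lambda)^2}, \quad \lambda < 0,
\end{equation}
and the elementary evaluation $\int_0^\infty (\mu - \lambda)^{-2}\, d\mu = (-\lambda)^{-1}$ for $\lambda < 0$. This forces
\begin{equation}
\int_{[0,\infty)} \frac{\xi(\mu; TT^*, T^*T) - \ind(T)}{(\mu - \lambda)^2}\, d\mu = 0, \quad \lambda < 0,
\end{equation}
and a Stieltjes inversion argument (as employed repeatedly in Section~\ref{s3}, e.g., leading to \eqref{B.21}) yields $\xi(\mu; TT^*, T^*T) = \ind(T)$ for a.e.~$\mu > 0$. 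Since this a.e.~value is constant, $0$ is trivially a right Lebesgue point and $\xi(0_+; TT^*, T^*T) = \ind(T)$, establishing \eqref{8.4}. The main technical point to verify carefully is the trace cancellation on the orthogonal complements of the kernels: the unitary equivalence via $V$ is algebraic, while the resolvents are bounded but act on an infinite-dimensional continuous spectral subspace, so the rigorous argument proceeds by truncating with spectral projections of $T^*T$ onto compact subintervals $[\varepsilon, N] \subset (0,\infty)$, using \eqref{8.1} to control trace-norm convergence, and passing to the limit $\varepsilon \downarrow 0$, $N \uparrow \infty$ to isolate the finite-rank kernel contribution.
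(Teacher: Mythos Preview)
Your central identity
\[
\tr_{\cH}\bigl((T^*T - \lambda I_{\cH})^{-1} - (TT^* - \lambda I_{\cH})^{-1}\bigr) = \frac{\ind(T)}{-\lambda}
\]
is \emph{false} in general, and this invalidates both the claim that $\Delta_r(T,\lambda)\equiv\ind(T)$ and the Stieltjes-inversion conclusion $\xi(\mu;TT^*,T^*T)=\ind(T)$ for a.e.\ $\mu>0$. Take $T$ to be the unilateral shift on $\ell^2(\bbN)$: then $T^*T=I$, $TT^*=I-P_{e_1}$, $\ind(T)=-1$, and a direct computation gives
\[
\tr\bigl((T^*T-\lambda)^{-1}-(TT^*-\lambda)^{-1}\bigr)=\frac{1}{\lambda(1-\lambda)}\neq\frac{1}{\lambda},
\]
so $\Delta_r(T,\lambda)=-1/(1-\lambda)$ is \emph{not} constant, and $\xi(\mu;TT^*,T^*T)=-\chi_{[0,1]}(\mu)$ is \emph{not} constant on $(0,\infty)$. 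The flaw is precisely in the step ``$\tr(R_1-R_2)=0$'': the partial isometry $V$ satisfies $VR_1V^*=R_2$, but since $R_1,R_2\notin\cB_1(\cH)$ individually and $V$ is not unitary on all of $\cH$, cyclicity of the trace cannot be invoked, and in the shift example one checks directly that $\tr(R_1-R_2)=(1-\lambda)^{-1}\neq 0$. Your proposed spectral truncation to $[\varepsilon,N]$ does not repair this, because $E_{T^*T}([\varepsilon,N])$ need not be finite rank when the spectrum is continuous.

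The paper does not supply its own proof of this theorem (it is quoted from \cite{BGGSS87}, \cite{GS88}), but the argument in those references proceeds differently: Fredholmness of $T$ forces $0$ to be an \emph{isolated} point of $\sigma(T^*T)$ and $\sigma(TT^*)$, so there is a common spectral gap $(0,\delta)$. On that gap the spectral shift function is integer-valued and equals the difference of eigenvalue counts below, namely $\dim\ker(T^*T)-\dim\ker(TT^*)=\ind(T)$; hence $\xi(0_+;TT^*,T^*T)=\ind(T)$ immediately. For the Witten index one then splits $(-\lambda)\int_0^\infty\xi(\mu)(\mu-\lambda)^{-2}\,d\mu$ at $\mu=\delta$: the contribution from $(0,\delta)$ tends to $\ind(T)$ as $\lambda\uparrow 0$, while the contribution from $[\delta,\infty)$ is $O(|\lambda|)$ by the integrability $\xi\in L^1\bigl(\bbR;(\mu^2+1)^{-1}d\mu\bigr)$. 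No claim about $\xi$ being globally constant, nor about $\Delta_r$ being independent of $\lambda$, is needed or true.
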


Since $\bsD_\bsA^{}$ is not a Fredholm operator in $L^2(\bbR^2)$, we now determine 
the resolvent regularized Witten index of $\bsD_\bsA^{}$ as follows:

\begin{theorem} 
Assume Hypothesis \ref{h3.1}. Then $W_r(\bsD_\bsA^{})$ exists and equals 
\begin{equation}
W_r(\bsD_\bsA^{}) = \xi(0_+; \bsH_2, \bsH_1) = 
\xi(0; A_+, A_-) = \f{1}{2 \pi} \int_{\bbR} dx \, \phi(x).     \lb{8.5}
\end{equation}
\end{theorem}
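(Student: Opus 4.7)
The plan is to reduce the statement to a short computation by invoking Theorem~\ref{t4.4}, which has already done the hard analytic work. That theorem establishes that $\xi(\lambda;\bsH_2,\bsH_1)$ coincides, for a.e.~$\lambda > 0$, with the constant $c_0 := (2\pi)^{-1}\int_{\bbR} dx\,\phi(x)$. In particular, the constant function is a representative of $\xi(\,\cdot\,;\bsH_2,\bsH_1)$ on $(0,\infty)$, so the right Lebesgue value $\xi(0_+;\bsH_2,\bsH_1)$ exists trivially and equals $c_0$, which disposes of the first of the nontrivial equalities in \eqref{8.5}. The third equality is precisely the normalization recorded at the end of Section~\ref{s3}, where the approximation procedure fixed the open additive integer so that $\xi(\nu;A_+,A_-) = c_0$ for every $\nu \in \bbR$.

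For the middle equality, I would proceed as follows. First, invoke \eqref{3.32} to ensure that $[(\bsH_1-\lambda\bsI)^{-1} - (\bsH_2-\lambda\bsI)^{-1}] \in \cB_1(L^2(\bbR^2))$ for $\lambda<0$, so that the resolvent regularization \eqref{8.2} with $T = \bsD_\bsA$ is well defined and Definition~\ref{d8.1} applies. Next, the Krein--Lifshits trace formula \eqref{3.45}, combined with Definition~\ref{d8.1}, yields
\begin{equation*}
\Delta_r(\bsD_\bsA,\lambda) = (-\lambda)\int_{[0,\infty)}\frac{\xi(\mu;\bsH_2,\bsH_1)\,d\mu}{(\mu-\lambda)^2}, \quad \lambda < 0.
\end{equation*}
Inserting the constant value $\xi(\mu;\bsH_2,\bsH_1) = c_0$ furnished by Theorem~\ref{t4.4} and using the elementary evaluation $\int_0^{\infty}(\mu-\lambda)^{-2}\,d\mu = -1/\lambda$ for $\lambda<0$, one obtains $\Delta_r(\bsD_\bsA,\lambda) = c_0$ identically in $\lambda<0$. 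Taking the limit $\lambda \uparrow 0$ then immediately produces the existence of $W_r(\bsD_\bsA)$ together with the value $W_r(\bsD_\bsA) = c_0$.

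No substantive obstacle remains at this stage: the genuine difficulties (the failure of the relative trace class hypothesis for the original model, the approximation by the $A_{+,n}$, the control of the modified Fredholm determinants as $n \to \infty$, and the Pushnitski-type integral identity that ultimately forces $\xi(\,\cdot\,;\bsH_2,\bsH_1)$ to be constant) have already been overcome in Sections~\ref{s3}--\ref{s5}. The present theorem is the clean payoff of that machinery: the resolvent regularization applied to a spectral shift function that is identically constant on $(0,\infty)$ reproduces that constant for every $\lambda<0$, rendering the limit $\lambda \uparrow 0$ trivial and establishing all three equalities in \eqref{8.5} simultaneously.
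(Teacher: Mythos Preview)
Your proof is correct and, in fact, more self-contained than the paper's. Both arguments rest entirely on Theorem~\ref{t4.4}, which supplies the key fact that $\xi(\,\cdot\,;\bsH_2,\bsH_1)$ is the constant $c_0 = (2\pi)^{-1}\int_{\bbR}\phi(x)\,dx$ on $(0,\infty)$. From there, however, the routes diverge: the paper observes that this constant value trivially verifies the Pushnitski-type identity \eqref{8.7} and then defers to the proof of \cite[Theorem~4.3]{CGPST14a} for the passage from \eqref{8.7} to the Witten index, whereas you go straight to the Krein--Lifshits trace formula \eqref{3.45}, insert the constant $c_0$, and evaluate $(-\lambda)\int_0^{\infty}(\mu-\lambda)^{-2}\,d\mu = 1$ for $\lambda<0$ to obtain $\Delta_r(\bsD_\bsA,\lambda)\equiv c_0$. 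Your route is shorter and avoids an external reference; the paper's route has the advantage of showing explicitly how the result slots into the general framework of \cite{CGPST14a} via \eqref{8.7}, which is thematically relevant to the paper's aims but logically unnecessary once $\xi(\,\cdot\,;\bsH_2,\bsH_1)$ is known to be constant.
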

\begin{proof}
Even though Hypothesis~2.1 in \cite{CGPST14a} is not satisfied for the 
$(1+1)$-dimensional model at hand, the fact (established in Theorem \ref{t4.4}) 
that for a.e.~$\lambda > 0$ and a.e.~$\nu \in \bbR$, 
\begin{equation}
\xi(\lambda; \bsH_2, \bsH_1) = \xi(\nu; A_+, A_-) = \f{1}{2 \pi} \int_{\bbR} dx \, \phi(x),  
\lb{8.6}
\end{equation}
trivially also implies the identity 
\begin{equation} 
\xi(\lambda; \bsH_2, \bsH_1) = \frac{1}{\pi}\int_0^{\lambda^{1/2}}
\frac{[\xi(\nu; A_+,A_-) + \xi(-\nu; A_+,A_-)] d \nu}{(\lambda-\nu^2)^{1/2}}, 
\quad \lambda > 0.     \lb{8.7}
\end{equation} 
Given equality \eqref{8.7} one can now follow the proof of 
\cite[Theorem~4.3]{CGPST14a} and obtain \eqref{8.5}.
\end{proof}

\appendix
\section{Some Facts On Spectral Shift Functions, Trace Formulas, and Modified 
Fredholm Determinants} \lb{sA}
\renewcommand{\theequation}{A.\arabic{equation}}
\renewcommand{\thetheorem}{A.\arabic{theorem}}
\setcounter{theorem}{0} \setcounter{equation}{0}

We recall a few basic facts on spectral shift functions employed in the 
bulk of this paper and provide results on trace formulas in terms of modified Fredholm 
determinants.

Closely following \cite[Sects.~2--6]{BY93} and \cite[Ch.~8]{Ya92}, we provide a brief 
discussion of how to restrict the open constant in the definition of the spectral shift function 
$\xi(\, \cdot \, ; A, A_0)$ up to an integer for a pair of self-adjoint operators $(A,A_0)$ in $\cH$ 
satisfying for some (and hence for all) $z_0 \in \rho(A) \cap \rho(A_0)$, 
\begin{equation}
\big[(A - z_0 I_{\cH})^{-1} - (A_0 - z_0 I_{\cH})^{-1}\big] \in \cB_1(\cH).    \lb{B.26a} 
\end{equation}
Motivated by the unitary Cayley transforms of $A$ and $A_0$, one introduces the modified 
perturbation determinant,
\begin{align}
\begin{split} 
\wti D_{A/A_0}(z;z_0) = {\det}_{\cH} 
\big((A - z I_{\cH})(A - \ol{z_0} I_{\cH})^{-1} (A_0 - \ol{z_0} I_{\cH})(A_0 - z I_{\cH})^{-1}\big),&  \\     
z \in \rho(A) \cap \rho(A_0), \; \Im(z_0) > 0,&     \lb{B.27a}
\end{split} 
\end{align}
and notes that (cf.\ \cite[p.270]{Ya92}) 
\begin{equation}
\ol{\wti D_{A/A_0} (z; z_0)} = \wti D_{A/A_0} (\ol z; z_0)/ \wti D_{A/A_0} (z_0; z_0), \quad 
 \wti D_{A/A_0} (\ol{z_0}; z_0) =1, 
\end{equation}
and 
\begin{align}
\begin{split} 
{\tr}_{\cH} \big[(A - z I_{\cH})^{-1} - (A_0 - z I_{\cH})^{-1}\big] 
= - \f{d}{dz} \ln\big( \wti D_{A/A_0} (z; z_0)\big),&  \\ 
z \in \rho(A) \cap \rho(A_0), \; \Im(z_0) > 0.& 
\end{split}
\end{align}
In addition,
\begin{equation}
\f{\wti D_{A/A_0} (z; z_0)}{\wti D_{A/A_0} (\ol{z}; z_0)} 
= \f{\wti D_{A/A_0} (z; z_1)}{\wti D_{A/A_0} (\ol{z}; z_1)}, \quad   
z \in \rho(A) \cap \rho(A_0), \; \Im(z_0) > 0, \, \Im(z_1) > 0. 
\end{equation}
Then, defining
\begin{align}
\begin{split}
& \xi(\lambda; A,A_0; z_0) = (2\pi)^{-1} \lim_{\varepsilon \downarrow 0} 
\big[\Im\big(\ln\big(\wti D_{A/A_0} (\lambda + i \varepsilon; z_0)\big)\big)  \lb{B.31a} \\
& \hspace*{4.2cm} - \Im\big(\ln\big(\wti D_{A/A_0} (\lambda - i \varepsilon; z_0)\big)\big)\big]
\, \text{ for a.e.~$\lambda \in \bbR$,} 
\end{split}
\end{align}
one obtains for $z \in \rho(A) \cap \rho(A_0)$, $\Im(z_0) > 0$, $\Im(z_1) > 0$, 
\begin{align}
&  \xi(\, \cdot \,; A,A_0; z_0) \in L^1\big(\bbR; (\lambda^2 + 1)^{-1} d\lambda\big),    \\
& \ln\big(\wti D_{A/A_0} (z; z_0)\big) = \int_{\bbR} \xi(\lambda; A,A_0; z_0) d\lambda 
\big[(\lambda -z)^{-1} - (\lambda - \ol{z_0})^{-1}\big],    \\
& \xi(\lambda; A,A_0; z_0) = \xi(\lambda; A,A_0; z_1) + n(z_0,z_1) \, 
\text{ for some $n(z_0,z_1) \in \bbZ$,}     \lb{B.34a} \\
& {\tr}_{\cH} \big[(A - z)^{-1} - (A_0 - z I_{\cH})^{-1}\big] 
= - \int_{\bbR} \f{\xi(\lambda; A,A_0; z_0) d \lambda}{(\lambda - z)^2},    \lb{B.35A} \\
& [f(A) - f(A_0)] \in \cB_1(\cH), \quad f \in C_0^{\infty}(\bbR),    \\
& {\tr}_{\cH} (f(A) - f(A_0)) = 
\int_{\bbR} \xi(\lambda; A,A_0; z_0) d\lambda \, f'(\lambda), 
\quad f \in C_0^{\infty}(\bbR)  
\end{align}
(the final two assertions can be greatly improved). For the origin of the celebrated 
Krein--Lifshits trace formula \eqref{B.35A} we refer, in particular, to \cite{Kr53}--\cite{Li56}. 

Up to this point $\xi(\, \cdot \,; A,A_0; z_0)$ has been introduced via \eqref{B.31a} and 
hence by \eqref{B.34a}, it is determined only up to an additive integer. It is possible to remove 
this integer ambiguity in $\xi(\, \cdot \,; A,A_0; z_0)$ by adhering to a specific normalization 
as follows: One introduces 
\begin{equation}
U_0(z_0) = (A_0 - z_0 I_{\cH})(A_0 - \ol{z_0} I_{\cH})^{-1}, \quad 
U(z_0) = (A - z_0 I_{\cH})(A - \ol{z_0} I_{\cH})^{-1}, \quad z_0 \in \bbC_+, 
\end{equation}
and then determines a normalized spectral shift function, denoted by 
$\hatt \xi(\, \cdot \,; A,A_0; z_0)$, with the help of fixing the branch of 
$\ln\big(\wti D_{A/A_0} (z_0 ; z_0)\big)$ by the equation,
\begin{align} 
\begin{split} 
i \Im\big(\ln\big(\wti D_{A/A_0} (z_0 ; z_0)\big)\big) &= 
2 i \Im(z_0) \int_{\bbR} \f{\hatt \xi(\lambda; A,A_0; z_0) d \lambda}{|\lambda - z_0|^2}   \\
&= {\tr}_{\cH} \big(\ln\big(U(z_0)U_0(z_0)^{-1}\big)\big).     \lb{B.36a}
\end{split} 
\end{align} 
Here $\ln(W)$, with $W$ unitary in $\cH$, is defined via the spectral theorem,
\begin{align}
\begin{split} 
W = \ointctrclockwise_{S^1} \mu \, dE_W(\mu),  \quad 
\ln(W) = i \arg(W) = i \ointctrclockwise_{S^1} \arg(\mu) \, dE_W(\mu),&   \\
\arg(\mu) \in (- \pi, \pi],& 
\end{split} 
\end{align}
with $E_W(\cdot)$ the spectral family for $W$, and 
\begin{equation}
\ln\big(U(z_0) U_0(z_0)^{-1}\big) 
= \ln\big(I_{\cH} + [U(z_0) - U_0(z_0)] U_0(z_0)^{-1}\big) \in \cB_1(\cH),
\end{equation}
since 
\begin{equation}
U(z_0) - U_0(z_0) = - 2i \Im(z_0) \big[(A - \ol{z_0} I_{\cH})^{-1} 
- (A_0 - \ol{z_0} I_{\cH})^{-1}\big] \in \cB_1(\cH). 
\end{equation}
In conjunction with \eqref{B.36a} we also mention the estimate,
\begin{equation}
\int_{\bbR} \f{|\hatt \xi(\lambda; A,A_0; z_0)\big| d \lambda}{|\lambda - z_0|^2} 
\leq \f{\pi}{2} \big\|(A - z_0 I_{\cH})^{-1} - (A_0 - z_0 I_{\cH})^{-1} \big\|_{\cB_1(\cH)}. 
\end{equation}

Moreover, if there exists $\alpha_0 \geq 0$ such that
\begin{equation}
\alpha \big\|(A - i \alpha I_{\cH})^{-1} - (A_0 - i \alpha I_{\cH})^{-1}\big\|_{\cB(\cH)} < 1, 
\quad \alpha > \alpha_0,     \lb{B.38a}
\end{equation}
then (cf.\ \cite[p.~300--303]{Ya92}), 
\begin{equation}
\hatt \xi(\, \cdot \,; A,A_0; i \alpha) = \hatt \xi (\, \cdot \, ; A,A_0) \, \text{ is independent of $\alpha$ 
 for $\alpha > \alpha_0$.}    \lb{B.39a}
\end{equation}

We also note that if 
\begin{equation}
\dom(A) = \dom(A_0), \quad (A-A_0)(A_0 - z_0 I_{\cH})^{-1} \in \cB_1(\cH)   \lb{B.40A}
\end{equation}
holds for some (and hence for all) $z_0 \in \rho(A_0)$, then \eqref{B.38a} is valid for 
$0 < \alpha_0$ sufficiently large, and (cf.\ \cite[p.~303--304]{Ya92}),  
\begin{equation}
\hatt \xi (\, \cdot \, ; A,A_0) \in L^1\big(\bbR; (|\lambda| + 1)^{-1 - \varepsilon} d \lambda\big), 
\quad \varepsilon > 0.    \lb{B.40a}
\end{equation}
Since different spectral shift functions only differ by a constant, the inclusion \eqref{B.40a} 
remains valid for all spectral shift functions under the assumptions \eqref{B.40A}. 

Finally, if 
\begin{equation} 
B=B^* \in \cB_1(\cH) \, \text{ and } \, A = A_0 + B,    \lb{B.41a} 
\end{equation} 
then \eqref{B.38a} holds with $\alpha_0 = 0$ and 
(cf.\ \cite[p.~303--304]{Ya92})
\begin{equation}
\hatt \xi (\, \cdot \, ; A,A_0) \in L^1\big(\bbR; d \lambda\big).    \lb{B.42a} 
\end{equation}
Assuming \eqref{B.41a}, one usually introduces the (standard) perturbation determinant,
 \begin{align}
 \begin{split} 
 D_{A/A_0} (z) &= {\det}_{\cH} \big((A - z I_{\cH}) (A_0 - z I_{\cH})^{-1}\big)    \\
 &= {\det}_{\cH} \big(I_{\cH} + B(A_0 - z I_{\cH})^{-1}\big), \quad 
 z \in \rho(A) \cap \rho(A_0),
 \end{split} 
 \end{align} 
and the associated spectral shift function
\begin{equation}
\xi(\lambda; A,A_0) = \pi^{-1} \lim_{\varepsilon \downarrow 0} 
\Im(\ln(D_{A/A_0}(\lambda + i \varepsilon))) \, \text{ for a.e.~$\lambda \in \bbR$,} 
\end{equation}
and hence obtains the following well-known facts for $z \in \rho(A) \cap \rho(A_0)$, 
$\Im(z_0) > 0$, 
\begin{align}
& \ol{D_{A/A_0} (z)} = D_{A/A_0} (\ol z),  \quad \lim_{|\Im(z)|\to\infty} D_{A/A_0} (z) =1,   \\
& \xi(\lambda; A,A_0) = (2\pi)^{-1} \lim_{\varepsilon \downarrow 0} \big[
\Im(\ln(D_{A/A_0}(\lambda + i \varepsilon))) 
- \Im(\ln(D_{A/A_0}(\lambda - i \varepsilon)))\big]   \no \\
& \hspace*{8.3cm} \text{ for a.e.~$\lambda \in \bbR$,}     \lb{B.45a} \\ 
& \wti D_{A/A_0}(z;z_0) = D_{A/A_0} (z) / D_{A/A_0} (\ol{z_0}),   \lb{B.46a} \\ 
& \xi(\, \cdot \, ; A,A_0) \in L^1(\bbR; d\lambda),    \\ 
& \ln(D_{A/A_0} (z)) = \int_{\bbR} \xi(\lambda; A, A_0) d \lambda \, (\lambda - z)^{-1},    \\
& \int_{\bbR} \xi(\lambda; A, A_0) d \lambda = {\tr}_{\cH} (B), \quad 
\int_{\bbR} |\xi(\lambda; A, A_0)| d \lambda \leq \|B\|_{\cB_1(\cH)},   
\end{align}
Combining the facts \eqref{B.31a}, \eqref{B.45a}, and \eqref{B.46a} at first only yields 
for some $n(z_0) \in \bbZ$, 
\begin{equation} 
\xi(\lambda; A,A_0; z_0) = \xi(\lambda; A,A_0) + n(z_0) \, \text{ for a.e.~$\lambda \in \bbR$.}
\end{equation}
However, also taking into account \eqref{B.39a} and \eqref{B.42a} finally yields 
\begin{equation} 
\hatt \xi(\lambda; A,A_0) = \xi(\lambda; A,A_0) \, \text{ for a.e.~$\lambda \in \bbR$.}  \lb{B.47a}
\end{equation}
Thus, the normalization employed in \eqref{B.36a} is consistent with the normalization implied by \eqref{B.42a}  in the case of trace class perturbations.

We continue this appendix with the following result, originally derived in \cite{GN12a} 
under slightly different hypotheses: 

\begin{theorem} \lb{tB.1}
$(i)$ Suppose $A_0$ and $A$ are self-adjoint operators with 
$\dom(A) = \dom(A_0) \subseteq \cH$, with $B = \ol{(A - A_0)} \in \cB(\cH)$. \\
$(ii)$ Assume that for some $($and hence for all\,$)$ $z_0 \in \rho(A_0)$,
\begin{align}
& |B|^{1/2} (A_0 - z_0 I_{\cH})^{-1} \in \cB_2(\cH),    \lb{B.2} 
\end{align}
and that 
\begin{equation}
\lim_{z \to \pm i \infty} 
\big\||B|^{1/2} (A_0 - z I_{\cH})^{-1} |B|^{1/2}\big\|_{\cB_2(\cH)} = 0. 
\lb{B.4}
\end{equation}
$(iii)$ Suppose that
\begin{equation}
{\tr}_{\cH} \big((A_0 - z I_{\cH})^{-1} B (A_0 - z I_{\cH})^{-1}\big) = 
\eta' (z), \quad z \in \rho(A_0),    \lb{B.5}
\end{equation} 
where $\eta(\cdot)$ has normal limits, 
$\lim_{\varepsilon \downarrow 0} \eta(\lambda + i \varepsilon) := \eta(\lambda + i0)$  
for a.e.\ $\lambda \in \bbR$.

Then
\begin{align}
& \int_{\bbR} \xi(\lambda; A, A_0) d\lambda 
\big[(\lambda - z)^{-1} - (\lambda - z_0)^{-1}\big] = \eta (z) - \eta (z_0)   \no \\
& \quad  
+ \ln\bigg(\f{{\det}_{2,\cH}\big(I_{\cH} + \sgn(B) |B|^{1/2} 
(A_0 - z I_{\cH})^{-1} |B|^{1/2}\big)}
{{\det}_{2,\cH}\big(I_{\cH} + \sgn(B) |B|^{1/2} (A_0 - z_0 I_{\cH})^{-1} |B|^{1/2}\big)}
\bigg), 
\quad z, z_0 \in \rho(A) \cap \rho(A_0),    \lb{B.6} 
\end{align} 
and for some constant $c\in\bbR$, 
\begin{align}
\begin{split} 
\xi(\lambda; A, A_0) &= \pi^{-1} \Im\big(\ln\big(
{\det}_{2,\cH}\big(I_{\cH} + \sgn(B) |B|^{1/2} 
(A_0 - (\lambda + i 0) I_{\cH})^{-1} |B|^{1/2}\big)\big)\big)   \lb{B.7} \\
& \quad + \pi^{-1} \Im(\eta(\lambda + i0)) + c \, \text{ for a.e.\ $\lambda \in \bbR$.} 
\end{split} 
\end{align}
\end{theorem}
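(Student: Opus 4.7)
The strategy is to bypass the unavailable trace class perturbation determinant for $(A,A_0)$ in favor of the modified second Fredholm determinant ${\det}_{2,\cH}(I_{\cH} + K(z))$ of the Hilbert--Schmidt operator
\begin{equation*}
K(z) := \sgn(B) |B|^{1/2} (A_0 - z I_{\cH})^{-1} |B|^{1/2}, \quad z \in \rho(A_0),
\end{equation*}
which is well-defined and $\cB_2(\cH)$-valued by \eqref{B.2}. First I would verify that $K(\cdot)$ is $\cB_2(\cH)$-analytic on $\rho(A_0)$ and that $I_{\cH} + K(z)$ is boundedly invertible precisely when $z \in \rho(A) \cap \rho(A_0)$; the latter is a symmetric Birman--Schwinger type principle obtained by sandwiching the resolvent identity with $|B|^{1/2}$ on both sides. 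In this step one also exploits the Sylvester-type identity ${\det}_{2,\cH}(I_{\cH} + XY) = {\det}_{2,\cH}(I_{\cH} + YX)$ (valid when both products lie in $\cB_2(\cH)$) to freely interchange $\sgn(B)$ between left and right positions.

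Next, I would perform the logarithmic derivative computation. The Jacobi formula for the modified Fredholm determinant yields
\begin{equation*}
\frac{d}{dz} \ln\bigl({\det}_{2,\cH}(I_{\cH} + K(z))\bigr) = \tr_{\cH}\bigl[\bigl((I_{\cH} + K(z))^{-1} - I_{\cH}\bigr) K'(z)\bigr],
\end{equation*}
and a short cyclicity calculation, combined with \eqref{B.5}, shows that the counter-term $\tr_{\cH}(K'(z))$ equals $\eta'(z)$. Together with the resolvent identity $(A - zI_{\cH})^{-1} - (A_0 - zI_{\cH})^{-1} = -(A_0 - zI_{\cH})^{-1} B (A - zI_{\cH})^{-1}$, this produces
\begin{equation*}
\frac{d}{dz} \ln\bigl({\det}_{2,\cH}(I_{\cH} + K(z))\bigr) + \eta'(z) = -\tr_{\cH}\bigl[(A - zI_{\cH})^{-1} - (A_0 - zI_{\cH})^{-1}\bigr].
\end{equation*}
The condition \eqref{B.2} forces $[(A - z_0 I_{\cH})^{-1} - (A_0 - z_0 I_{\cH})^{-1}] \in \cB_1(\cH)$ by a $\cB_2 \cdot \cB_2$ factorization analogous to \eqref{3.13}, so the Krein--Lifshits trace formula \eqref{B.35A} applies on the right-hand side; integrating along any path from $z_0$ to $z$ within a fixed half-plane of $\rho(A) \cap \rho(A_0)$ then yields \eqref{B.6}.

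Finally, for the pointwise a.e.\ representation \eqref{B.7}, I would take boundary values $z = \lambda \pm i \varepsilon$, $\varepsilon \downarrow 0$. By hypothesis $(iii)$, $\eta(\lambda + i 0)$ exists a.e.; the boundary limits of $\ln {\det}_{2,\cH}(I_{\cH} + K(\lambda \pm i 0))$ exist a.e.\ by continuity of $K(\cdot)$ into $\bbR$ as a $\cB_2(\cH)$-valued map, together with the standard continuity of $\det_2$ on $\cB_2(\cH)$. Taking the difference of upper and lower boundary limits in \eqref{B.6} and applying the Stieltjes inversion formula (cf.~\cite{AD56}) to its Herglotz-type right-hand side produces \eqref{B.7} up to an additive real constant $c$ that absorbs the $z_0$-dependence; the decay hypothesis \eqref{B.4} anchors the large-$|\Im(z)|$ behavior ${\det}_{2,\cH}(I_{\cH} + K(z)) \to 1$, which pins down the additive normalization in agreement with Remark~\ref{r3.2}. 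The main technical obstacle is the logarithmic derivative identity: $B(A_0 - zI_{\cH})^{-1}$ is generally not trace class, so direct differentiation of a standard perturbation determinant is unavailable, forcing the symmetric sandwich; the term $\eta'(z)$ is precisely the $\cB_1$-correction built into $\det_2$ versus $\det$, and carefully aligning cyclic traces across the $\cB_1$ and $\cB_2$ ideals is the delicate computational core.
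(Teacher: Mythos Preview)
Your approach coincides with the paper's: the paper does not give a self-contained proof but refers to \cite{GN12a}, recording in Remark~\ref{rB.2}\,$(i)$ the key identity \eqref{B.8}, which is exactly the logarithmic-derivative formula you obtain via the Jacobi formula for ${\det}_{2,\cH}$ together with the symmetrized Birman--Schwinger factorization; integrating \eqref{B.8} against the Krein--Lifshits trace formula yields \eqref{B.6}, and Stieltjes inversion then yields \eqref{B.7}.

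One technical point in your boundary-value step needs repair. You assert that the normal limits $\ln{\det}_{2,\cH}(I_{\cH}+K(\lambda+i0))$ exist a.e.\ ``by continuity of $K(\cdot)$ into $\bbR$ as a $\cB_2(\cH)$-valued map.'' The hypotheses \eqref{B.2}--\eqref{B.5} do \emph{not} guarantee such $\cB_2$-continuity up to the real axis; that continuity is a special feature of the concrete example in Section~\ref{s3} (cf.\ \eqref{B.16B}--\eqref{B.18}), not a general consequence. The correct argument is indirect: ${\det}_{2,\cH}(I_{\cH}+K(\cdot))$ is analytic and zero-free in $\bbC_+$ (since $\bbC_+\subset\rho(A)\cap\rho(A_0)$), so its logarithm is analytic there; once \eqref{B.6} is in hand, the left-hand side has normal boundary values a.e.\ by the Fatou/Stieltjes-inversion theory for measures in $L^1(\bbR;(1+\lambda^2)^{-1}d\lambda)$, and $\eta(\lambda+i0)$ exists a.e.\ by hypothesis $(iii)$, which forces the $\ln{\det}_{2,\cH}$ term to have normal limits a.e.\ as well. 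Also, your final remark that \eqref{B.4} ``pins down the additive normalization'' overstates matters: the constant $c$ in \eqref{B.7} is genuinely undetermined (cf.\ Remark~\ref{r3.2}); the role of \eqref{B.4} is to ensure ${\det}_{2,\cH}(I_{\cH}+K(z))\to 1$ as $z\to\pm i\infty$, which fixes a consistent branch of the logarithm in \eqref{B.6}, not the value of $c$.
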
 

\begin{remark} \lb{rB.2}
$(i)$ We note that Theorem \ref{tB.1} was derived in \cite{GN12a} for unbounded 
quadratic form perturbations $B$ of $A_0$ and hence $A_0$ was assumed to be 
bounded from below. Since we here assume that $B$ is bounded, boundedness 
from below of $A_0$ is no longer needed and the proof of \cite[Theorem~2.3]{GN12a}
applies line by line to the current setting. We also note that since $B|_{\dom(A_0)}$ 
is symmetric and bounded, $B = \ol{(A - A_0)}$ is self-adjoint on $\cH$. The basic 
identity underlining Theorem \ref{tB.1} is, of course, 
\begin{align}
& {\tr}_{\cH}\big((A - z I_{\cH})^{-1} - (A_0 - z I_{\cH})^{-1}\big) + 
{\tr}_{\cH}\big((A_0 - z I_{\cH})^{-1}B(A_0 - z I_{\cH})^{-1}\big)    \lb{B.8} \\
& \quad = - \f{d}{dz} \ln\big({\det}_{2,\cH}\big(I_{\cH} + \sgn(B)|B|^{1/2}
(A_0 - z I_{\cH})^{-1}|B|^{1/2}\big)\big), \quad z \in \rho(A) \cap \rho(A_0). \no 
\end{align}
$(ii)$ One can show (cf.\ \eqref{det}) that 
\begin{equation}
\eta(z) - \eta(z_0) = (z - z_0) {\tr}_{\cH} \big((A_0 -z I_{\cH})^{-1} B (A_0 - z_0 I_{\cH})^{-1}\big), 
 \quad z, z_0 \in \rho(A) \cap \rho(A_0).
\end{equation}
We will use this fact later in this appendix. 
\hfill $\diamond$
\end{remark}

For modified Fredholm determinants and their properties, we refer, for instance, to 
\cite[Sect.~XI.9]{DS88}, \cite[Sect.\ IV.2]{GK69} and \cite[Ch.\ 9]{Si05}. Here we just note that 
\begin{equation}\lb{Z2.117}
{\det}_{2,\cH}(I_{\cK}-A) = \prod_{n\in\cJ}(1-\lambda_n(A))e^{\lambda_n(A)},\quad A\in \cB_2(\cH),
\end{equation}
where $\{\lambda_n(A)\}_{n\in \cJ}$ is an enumeration of the non-zero eigenvalues of $A$, listed in non-increasing order according to their moduli, and $\cJ\subseteq \bbN$ is an appropriate indexing set, and 
\begin{align}
& {\det}_{2,\cH}(I_{\cK}-A)= {\det}_{\cH}((I_{\cH}-A)\exp(A)), \quad A\in\cB_2 (\cH), \lb{Z3.21} \\
& {\det}_{2,\cH}((I_{\cK}-A)(I_{\cH}-B))={\det}_{2,\cH}(I_{\cK}-A){\det}_{2,\cH}(I_{\cH}-B)
e^{-\tr_{\cH}(AB)},\lb{Z3.22}\\
&\hspace*{8.65cm}A, B\in\cB_2(\cH).   \no 
\end{align} 
In addition, we recall the fact, that ${\det}_{2,\cH}(I_{\cH} + \cdot \,)$ 
is continuous on $\cB_2(\cH)$, explicitly, for some $c > 0$, one has the estimate (cf.\ \cite[Theorem~9.2\,(c)]{Si05})
\begin{align}
\begin{split} 
|{\det}_{2,\cH}(I_{\cH} + T) - {\det}_{2,\cH}(I_{\cH} + S)| \leq \|T - S\|_{\cB_2(\cH)} 
e^{c[\|S\|_{\cB_2(\cH)} + \|T\|_{\cB_2(\cH)} + 1]^2},&   \lb{detcont} \\
S, T \in \cB_2(\cH).&
\end{split} 
\end{align} 

In addition, we need some results concerning the connection between trace 
formulas and modified Fredholm determinants (cf., eg., \cite[Sect.\ IV.2]{GK69}, 
\cite[Ch.\ 9]{Si05}, \cite[Sect.\ 1.7]{Ya92}). Suppose that $A_0$ is self-adjoint in 
the complex, separable Hilbert space $\cH$, and assume that the self-adjoint operator 
$B$ in $\cH$, satisfies
\begin{align}
\begin{split} 
& \dom(B) \supseteq \dom(A_0),    \lb{domB} \\
& \text{$B$ is infinitesimally bounded with respect to $A_0$,}   
\end{split}
\end{align} 
and for some (and hence for all) $z_0, z_1 \in \rho(A_0)$, 
\begin{equation}
B (A_0 - z_0 I_{\cH})^{-1} \in \cB_2(\cH), \quad 
(A_0 - z_0 I_{\cH})^{-1} B (A_0 - z_1 I_{\cH})^{-1} \in \cB_1(\cH).      \lb{BB2}
\end{equation} 
(We recall that $B$, with $\dom(B) \supseteq \dom(A_0)$, is called infinitesimally bounded 
with respect to $A_0$, if for all $\varepsilon > 0$, there exists $\eta(\varepsilon) > 0$, such 
that for $f \in \dom(A_0)$, 
$\|B f\|_{\cH} \leq \varepsilon \|A_0 f\|_{\cH} + \eta(\varepsilon) \|f\|_{\cH}$.) 

Then by assumption \eqref{domB}, 
\begin{equation} 
A = A_0 + B, \quad \dom(A) = \dom(A_0),    
\end{equation}
is self-adjoint in $\cH$, and 
\begin{align}
\begin{split} 
& \big[(A - z I_{\cH})^{-1} - (A_0 - z I_{\cH})^{-1}\big] 
= \big[(A_0 - z I_{\cH})^{-1} B (A_0 - z I_{\cH})^{-1}\big]     \\
& \quad \times \big[(A_0 - z I_{\cH}) (A - z I_{\cH})^{-1}\big] \in \cB_1(\cH), \quad 
z \in \rho(A) \cap \rho(A_0).
\end{split} 
\end{align}
Given this setup, one concludes the trace formula (cf., e.g., \cite[p.~44]{Ya92})
\begin{align}
& \tr_{\cH} \big((A - z I_{\cH})^{-1} - (A_0 - z I_{\cH})^{-1}\big)      \no \\
& \quad = - \f{d}{dz} \ln \big({\det}_{2,\cH}\big(I_{\cH} 
+ B (A_0 - z_0 I_{\cH})^{-1}\big)\big)   \no \\
& \qquad - {\tr}_{\cH}\big((A_0 - z I_{\cH})^{-1} B (A_0 - z I_{\cH})^{-1}\big)   \lb{detAA0} \\
& \quad = - \int_{\bbR} \f{\xi(\lambda; A, A_0) d \lambda}{(\lambda - z)^2}, 
\quad z \in \rho(A) \cap \rho(A_0),    \no
\end{align}
and consequently, also
\begin{align}
& \ln \bigg(\f{{\det}_{2,\cH}\big(I_{\cH} + B (A_0 - z I_{\cH})^{-1}\big)} 
{{\det}_{2,\cH}\big(I_{\cH} + B (A_0 - z_0 I_{\cH})^{-1}\big)}\bigg)  
\no \\
& \qquad + (z - z_0) {\tr}_{\cH}\big((A_0 - z I_{\cH})^{-1} B (A_0 - z_0 I_{\cH})^{-1}\big)    
\lb{det} \\ 
& \quad = \int_{\bbR} \xi(\lambda; A, A_0) d \lambda \big[(\lambda - z)^{-1} 
- (\lambda - z_0)^{-1}\big], \quad z,z_0 \in \rho(A) \cap \rho(A_0).     \no 
\end{align}
(To verify \eqref{det} it suffices to differentiate either side of \eqref{det} w.r.t. $z$, comparing 
with the final three lines of relation \eqref{detAA0}, and observing that either side of \eqref{det} vanishes at $z = z_0$.)   

At this point we recall that ${\det}_{2,\cH}(I_{\cH} + \cdot \,)$ is continuous on $\cB_2(\cH)$, 
as recorded earlier in \eqref{detcont}. 

Next, suppose that $A_{0,n}$, $B_n$, $A_n = A_{0,n} + B_n$, $n \in \bbN$, and 
$A_0$, $B$, $A = A_0 + B$ satisfy hypotheses \eqref{domB} and \eqref{BB2}. Moreover, 
assume that for some (and hence for all) $z_0 \in \bbC \backslash \bbR$, 
\begin{align}
& \wlim_{n \to \infty} (A_{0,n} - z_0 I_{\cH})^{-1} = (A_0 - z_0 I_{\cH})^{-1},     \lb{wrl} \\
& \lim_{n\to\infty} \big\|B_n (A_{0,n} - z_0 I_{\cH})^{-1} -
B (A_0 - z_0 I_{\cH})^{-1}\big\|_{\cB_2(\cH)} = 0,     \lb{B2conv} \\ 
& \lim_{n\to\infty} \big\|(A_{0,n} - z_0 I_{\cH})^{-1} B_n (A_{0,n} - z_0 I_{\cH})^{-1}     \no \\
& \hspace*{1.1cm} - (A_0 - z_0 I_{\cH})^{-1}B (A_0 - z_0 I_{\cH})^{-1}\big\|_{\cB_1(\cH)} = 0.    \lb{B1conv} 
\end{align}
One notes that due to self-adjointness of $A_{0,n}, A_0$, $n \in \bbN$, relation \eqref{wrl} 
is actually equivalent to strong resolvent convergence, that is,
\begin{equation}
\slim_{n \to \infty} (A_{0,n} - z I_{\cH})^{-1} = (A_0 - z I_{\cH})^{-1},  
\quad z \in \bbC \backslash \bbR.    \lb{srl}
\end{equation}   
Moreover, the well-known identity (see, e.g., \cite[p.~178]{We80}), 
\begin{align}
& (T_1 - z I_{\cH})^{-1} - (T_2 - z I_{\cH})^{-1} = (T_1 - z_0 I_{\cH})(T_1 - z I_{\cH})^{-1}   \no \\
& \quad \times \big[(T_1 - z_0 I_{\cH})^{-1} - (T_2 - z_0 I_{\cH})^{-1}\big] 
(T_2 - z_0 I_{\cH})(T_2 - z I_{\cH})^{-1},     \lb{resid} \\
& \hspace*{6.6cm} z, z_0 \in \rho(T_1) \cap \rho(T_2),    \no 
\end{align}
where $T_j$, $j=1,2$, are linear operators in $\cH$ with 
$\rho(T_1) \cap \rho(T_2) \neq \emptyset$, together with Lemma \ref{l3.6}, \eqref{B1conv}, 
and \eqref{srl} imply
\begin{align}
& \lim_{n\to\infty} \big\|(A_{0,n} - z_0 I_{\cH})^{-1} B_n (A_{0,n} - z_1 I_{\cH})^{-1}     \no \\
& \hspace*{1.1cm} - (A_0 - z_0 I_{\cH})^{-1}B (A_0 - z_1 I_{\cH})^{-1}\big\|_{\cB_1(\cH)} = 0, 
\quad z_0, z_1 \in \bbC \backslash \bbR.    \lb{B1conve} 
\end{align}

Then \eqref{det} applied to the self-adjoint pairs $(A_n, A_{0,n})$, $n \in \bbN$, and $(A, A_0)$, in combination with \eqref{detcont}--\eqref{B1conve} implies the continuity result,  
\begin{align}
& \lim_{n\to\infty} \int_{\bbR} \xi(\lambda; A_n, A_{0,n}) d \lambda \, \big[(\lambda - z)^{-1} 
- (\lambda - z_0)^{-1}\big]     \no \\
& \quad =\lim_{n\to\infty}\Bigg\{ \ln \Bigg(\f{{\det}_{2,\cH}\big(I_{\cH} 
+ B_n (A_{0,n} - z I_{\cH})^{-1}\big)} 
{{\det}_{2,\cH}\big(I_{\cH} + B_n (A_{0,n} - z_0 I_{\cH})^{-1}\big)}\Bigg)   
\no \\
& \hspace*{1.8cm} + (z - z_0) 
{\tr}_{\cH}\big((A_{0,n} - z I_{\cH})^{-1} B_n (A_{0,n} - z_0 I_{\cH})^{-1}\big) 
\Bigg\}    \no \\
& \quad = \ln \Bigg(\f{{\det}_{2,\cH}\big(I_{\cH} 
+ B (A_0 - z I_{\cH})^{-1}\big)} 
{{\det}_{2,\cH}\big(I_{\cH} + B (A_0 - z_0 I_{\cH})^{-1}\big)}\Bigg)   
\no \\
&  \qquad + (z - z_0) 
{\tr}_{\cH}\big((A_0 - z I_{\cH})^{-1} B (A_0 - z_0 I_{\cH})^{-1}\big) 
\no \\
& \quad = \int_{\bbR} \xi(\lambda; A, A_0) d \lambda \, \big[(\lambda - z)^{-1} 
- (\lambda - z_0)^{-1}\big],  \quad z, z_0 \in \bbC \backslash \bbR.     \lb{detlim} 
\end{align}

We note that these considerations naturally extend to more complex situations where 
$A = A_0 +_q B$, $A_n = A_{0,n} +_q B_n$, $n \in \bbN$, are defined as quadratic form 
sums of $A_0$ and $B$ and $A_{0,n}$ and $B_n$ (without assuming any correlation 
between the domains of $A$, $A_n$ and $A_0$), and the modified Fredholm determinants 
are replaced by symmetrized ones as in Theorem \ref{tB.1}, see, for instance, 
\cite{GN12}, \cite{GN12a}, and \cite{GZ12}. Since we do not need this at this point, we 
omit further details. 

\medskip 
 
\noindent 
{\bf Acknowledgments.} We are indebted to Harald Grosse, Jens Kaad, 
Yuri Latushkin, Matthias Lesch  (in particular for sending us \cite{Wo07}), 
Konstantin Makarov, Alexander Sakhnovich, and Yuri Tomilov for helpful 
discussions and correspondence. We also thank the anonymous referee 
for a careful reading of our manuscript and for providing numerous 
helpful comments. 

A.C., F.G., D.P., and F.S. thank the Erwin Schr\"odinger International 
Institute for Mathematical Physics (ESI), Vienna, Austria, for funding support 
for this collaboration in form of a Research-in Teams project, ``Scattering Theory 
and Non-Commutative Analysis'' for the duration of June 22 to July 22, 2014. 

F.G. and G.L. are indebted to Gerald Teschl and the ESI for a kind invitation to visit the University of Vienna, Austria, for a period of two weeks in June/July of 2014. 

A.C., G.L., D.P., F.S., and D.Z.\ gratefully acknowledge financial support from the Australian Research Council. A.C.\ also thanks the Alexander von Humboldt 
Stiftung and colleagues at the University of M\"unster.

 
\end{document}